\renewcommand{\subset}{\subseteq}
\newcommand{\e}{\epsilon}
\newcommand{\pse}{\psi^\epsilon}
\newcommand{\ue}{u^\epsilon}
\newcommand{\cpsi}{\overline{\psi}}
\newcommand{\we}{W^\e }
\newcommand{\fe}{f^{\e}}
\newcommand{\tw}{\widetilde{W}^{\e}}
\newcommand{\weo}{W^{0}}
\newcommand{\si}{\sigma}
\newcommand{\lo}{\Lambda_0}
\newcommand{\lt}{\Lambda_t}
\newtheorem{theorem}{Theorem}[section]
\newtheorem{lemma}[theorem]{Lemma}
\newtheorem{proposition}[theorem]{Proposition}
\newenvironment{proof}[1][Proof]{\begin{trivlist}
\item[\hskip \labelsep {\bfseries #1}]}{\end{trivlist}}
\newenvironment{definition}[1][Definition]{\begin{trivlist}
\item[\hskip \labelsep {\bfseries #1}]}{\end{trivlist}}
\newenvironment{remark}[1][Remark]{\begin{trivlist}
\item[\hskip \labelsep {\bfseries #1}]}{\end{trivlist}}
\newcommand{\qed}{\nobreak \ifvmode \relax \else
      \ifdim\lastskip<1.5em \hskip-\lastskip
      \hskip1.5em plus0em minus0.5em \fi \nobreak
      \vrule height0.75em width0.5em depth0.25em\fi}
\begin{document}
\title{{\bf Asymptotic approximation} \\{\bf of the Wigner function}\\
  {\bf in two-phase geometric optics}}
\author{Konstantina-Stavroula Giannopoulou}
 \submitdate{November 2009}

\programthesis
                          \dept{``Mathematics and its applications"}

\principaladviser{George N. Makrakis}

\firstreader{Stathis Filippas}

\thirdreader{George Kossioris} %if needed
\copyrightfalse
 \beforepreface
\prefacesection{Abstract}
 We propose a  renormalization process of a two phase WKB
  solution, which is based on an appropriate surgery of local uniform asymptotic
  approximations of the Wigner transform of the WKB solution. We explain in details
how this process
   provides the correct spatial variation and frequency scales of the wave
  field on the caustics where WKB method fails. The analysis has been
  thoroughly presented in the case of a fundamental problem, that is the
  semiclassical Airy equation, which arises from the model
  problem of acoustic propagation  in a layer with linear variation of the sound speed.

%\input{abstract}
%have a separate abstract.tex file so you can also input it into
%the microfilm.tex template
%\prefacesection{Preface}
 % This thesis tells you all you need to know about...
\prefacesection{Acknowledgments}
Many thanks to my Teacher Mr G.Makrakis for everything he taught me,\\
for encouranging me, especially for giving me support in hard times.

\newpage
\begin{flushright}
To my parents,
\\
Maria and Dimitris
\end{flushright}
\afterpreface
%%%%%%%%%%%%%%%%%%%%%%%%%%%%%%%%%%%%%%%%%%%%%%%%%%%%%%%%%%%%%%%%%%%%%%%%%%%%%%%%%%%%%%%%%%%%%%%%%%%%%%%%%%%%%%%%%%%%%%%%%%%%

%%%%%%%%%%%%%%%%%%%%%%%%%%%%%%%%%%%%%%%%%%%%%%%%%%%%%%%%%%%%%%%%%%%%%%%%%%%%%%%%%%%%%%%%%%%%%%%%%%%%%%%%%%%%%%%%%%%%%%%%%%%%

\chapter{Introduction}

High-frequency wave propagation in inhomogeneous media, has been
traditionally investigated employing the method of geometrical
optics. Not only it is used to draw a qualitative picture of how
the energy propagates, but also to evaluate the wave fields
quantitatively. However, geometrical optics fails either on
caustics and  focal points where it predicts infinite wave
amplitudes, or in shadow regions (i.e. regions devoid of rays)
where it yields zero fields. On the other hand, formation of
caustics is a typical situation in optics, underwater acoustics
and seismology, as a result of multipath propagation from
localized sources. Indeed, even in the simplest oceanic models and
geophysical structures (see, e.g. Tolstoy and Clay \cite{TC},
Chapt. 5, and C\u{e}rven\`{y}, et.al. \cite{CMP}, Chapt. 3,
respectively) a number of caustics occur, depending upon the
position of the source and the stratification of the wave
velocities.

\section{Caustics and phase-space methods}

Mathematically, caustic surfaces are envelopes of rays. Physically, these surfaces are distinctive in that the field
intensity increases on them sharply as compared with the adjacent
regions. The rise of field is best of all seen at the focal points
where all the rays corresponding to the converging wave front
intersect.

In his classical book Stavroudis \cite {Sta} has remarked that in
contrast to rays and wave fronts, the caustic is one of the few
objects in optics that can be observed in reality. This remark
emphasizing the role of caustics, of course, has its own range of
validity, and it is true only to the point that in the close
vicinity of caustics, one can observe or measure a concentration
of the field. Is the caustic real in the above sense in all
situations, i.e., is the effect of field buildup on a caustic
appreciable enough for instruments to reveal, separate and
identify the caustics? This question may be answered with
heuristic criteria. A caustic may be deemed real, i.e., observed
or recorded, if the amplitude on the caustic is at least a few
times the mean field value elsewhere and the near-caustic zones of
adjacent caustics do not completely overlap. Some other conditions
of practical character like, noise should not be high, resolution
and sensitivity of the instruments should be sufficient, should be
satisfied.

Moving across a caustic gives  birth or annihilation of a pair of
rays at a time, and this discontinuous variation of the number of
rays across a caustic is qualified as a {\it catastrophe}. This
new and fruitful approach to caustics, developed only in the
recent years, allows a universal classification of the typical
caustics (see, e.g., \cite{KO1}).

From specific examples allowing exact solutions, it has been known
that the phase of the wave fields change by $-\pi/2$ upon touching
a smooth (nonsingular) caustic, and by $-\pi$ after passing a
three-dimensional focus. However, a universal rule on the
additional phase shift at a caustic has been formulated only in
the comparative recent works of Maslov (\cite{Ma1}, \cite{Ma2})
and Lewis \cite{Le},  although the germ of the idea goes back to
Keller \cite{Kel}. The formulation is based on the
stationary-phase approximation of certain diffraction integrals,
and it finally leads to the notion of the so-called {\it Maslov
trajectory index}, in the general case of multiple caustic
reflections.

Because the wave amplitude predicted by geometrical optics is
infinite on the caustics, as a result of ray convergence,
geometrical optics is inapplicable within a close neighborhood of
the caustic, as actual wave fields are always finite. However,
available exact and approximate solutions for certain canonical
wave problems involving caustics in the high-frequency limit,
indicate a substantial concentration of energy near a caustic.
This phenomenon is more profound within a finite region which is
usually referred as caustic zone or caustic volume. The rigorous
estimation of the size of this zone should rely upon delicate
uniform asymptotic expansions of certain canonical diffraction
integrals associated with the particular caustic, but for the
moment only heuristic estimations leading rather to qualitative
than to fully quantitative results exist. A very important feature
is that the rays cannot be adequately resolved in the caustic
zone, and therefore we can draw the general conclusion that within
any caustic zone, no physical device is capable of separate
determination of ray parameters. In this sense, in that caustic
zone, rays loose their physical individual properties, though they
continue to play the role of the geometric framework for the wave
field.

From the mathematical point of view, formation of caustics and the
related multivaluedness of the phase function, is the main
obstacle in constructing global high-frequency solutions of the
wave equation. The problem of obtaining the multivalued phase
function is traditionally handled by resolving numerically the
characteristic field related to the eikonal equation (ray tracing
methods), see, e.g. \cite{CMP}. A considerable amount of work has
been done recently on constructing the multivalued phase function
by properly partitioning the propagation domain and using eikonal
solvers (see, e.g., \cite{Ben}, \cite{FEO}).

Given the geometry of the multivalued phase function, a number of
local and uniform methods to describe wave fields near caustics
have been proposed. The local methods are  essentially based on
{\it boundary layer} techniques as they were developed by Babich,
Keller, et.al. (see, e.g., \cite{BaKi}, \cite{BB}). The uniform
are those which exploit the fact that even if the family of rays
has caustics, there are no such singularities for the family of
the bicharacteristics in the phase space. This basic fact allows
the construction of formal asymptotic solutions (FAS) which are
valid also near and on the caustics. For this purpose two main
asymptotic techniques have been developed. The first one is the
{\it Kravtsov-Ludwig method} (sometimes called the {\it method of
relevant functions}). This method starts with a modified FAS
involving Airy-type integrals, the phase functions of which take
account of the particular type of caustics. The second one is the
{\it method of the canonical operator} developed by Maslov. The
construction of the canonical operator
 exploits the fact that the Hamiltonian flow associated with
the bichararcteristics generates a Lagrangian submanifold in the
phase space, on which we can ``lift'' the phase function in a
unique way (see, e.g., \cite{MF},\cite{MSS},\cite{Va1}).

Although uniform caustic asymptotics have been widely used by the
acoustical and seismological community (see, e.g., \cite{CH1},
\cite{CH2} and the references cited there), the problem of the
limits of applicability of uniform asymptotic expressions has not
been completely resolved yet, as it has been observed by Asatryan
and Kravtsov \cite{AsK} who attempted to give a qualitative
answer. Note also that apart from their importance in the
qualitative description of wavefields near caustics, the
Kravtsov-Ludwig solutions have been also proved useful for
numerical computations through appropriate matching with geometric
optics far away from the caustics \cite{KKM}.

\section{ The Wigner-function approach}
 A relatively new technique to treat high frequency dispersive problems is based on
the Wigner transform of the wavefunction,
 whose basic properties (i.e. the relation of its moments with important physical
quantities, as energy density,
 current density, et.al.), make it a proper and extremely useful tool for the study
of the wavefield.
 Wigner function is a phase space object satisfying an integro-differential equation
(Wigner equation), which for smooth
 medium properties  can be expressed as an infinite order singular perturbation
(with dispersion terms with respect
 to the momentum of the phase space)of the classical
  Liouville equation. At the high frequency limit the solution of the Wigner
equation equation converges weakly to the so called
 Wigner measure \cite{LP} governed by the classical Liouville equation, and this
measure, in general, reproduces the solution of
 single phase geometrical optics.

We should note at this point that there does not exist, up to now,
either some systematic theoretical study of the Wigner
integro-differential equation (except the results of Markowich
\cite{SMM} for the equivalence of Wigner and   Schr\"odinger
equations). This is due to fundamental difficulties of this
equation, which is an equation with non-constant coefficients,
that combines at least two different characters, that of transport
and that of dispersive equations. The first character is
correlated with the Hamiltonian system of the Liouville equation
(and the classical mechanics of the problem), and the second with
the quantum energy transfer away from the Lagrangian manifold of
the Hamiltonian system, but mainly inside a boundary layer around
it, the width of which depends on the smoothness of the manifold
and the presence or not of caustics.

Moreover, in the case of multi-phase optics and caustic formation,
Wigner measure is  not the appropriate tool for the study of the
semi-classical limit. In fact ut has been shown by Filippas \&
Makrakis   \cite{FM1}, \cite{FM2} through examples {\it in the
case of time-dependent Schrodinger equation} that the Wigner
measure (a) it cannot be expressed as a distribution with respect
to the momentum for a fixed space-time point,  and thus cannot
produce the amplitude of the wavefunction, and (b) it is unable to
``recognize'' the correct frequency dependencies of the wavefield
near caustics. It was however explained  that the solutions of the
integro-differential Wigner equation do have the capability to
capture the correct frequency scales. It must be said here that a
numerical approach based on classical Liouville equation has been
developed, as an alternative to WKB method, in order to capture
the multivalued solutions far from the caustic. This technique is
based on a {\it closure assumption} for a system of equations for
the moments of the Wigner measure (essentially by assuming a fixed
number of rays passing through a particular point) (see, e.g.,
(\cite{JL}, \cite {Ru1}, \cite {Ru2}).

\section{The present work: Renormalization of WKB solutions}

In the present work we employ Wigner transform as a tool for the
renormalization of WKB solutions near caustics (wignerization). More precisely, we
consider the fundamental example of the semiclassical Airy
equation, whose two-phase WKB solution fails at the caustic (namely
the turning point of the Airy equation) due to
the divergence of the geometric amplitudes. We show
that the  combination (``surgery") of appropriate asymptotic approximations of the
Wigner function in various areas of the phase space  leads to an approximate Wigner
function
which recovers the correct semiclassical (Airy) amplitude in a spatially uniform
way. Moreover,
the interaction mechanism between the
two geometric phases (realized as the two branches of the folded
Lagrangian manifold) is investigated by  thoroughly analyzing the
structure of the stationary points of the corresponding cross
Wigner integrals and their asymptotic contribution in the Airy structure.
Note that for our particular examples, it happens   that the asymptotics of the
Wigner function leads to the exact Wigner transform of the
semiclassical Airy function, which confirms the validity of the proposed wignerization.
It should be emphasized that the proposed renormalization process has many
similarities and it has been inspired by the so-called
 quantization processes (see, e.g., Nazaikinskii et al \cite{NSS}), since we can
consider the WKB solution
as the ``classical object" and the constructed approximation of the Wigner function
as the ``quantum object". It is then interesting to
observe that in the full high-frequency limit the approximation of the Wigner
function, the quantum object, gives us back the WKB solution, that is the classical
object.

The structure of the work is the following. In Chapter 2 we present the technique of
geometric optics(WKB solutions) and the
method developed by Kravtsov \& Ludwig. We also analyze the propagation of plane
acoustic waves in a layer with
linear variation of the reafraction index, we construct the WKB and Kravtsov-Ludwig
solutions and we explain how the semiclassical Airy eqaution
bounces out of this model. In Chapter 3 we introduce the Wigner transform, we review
its basic properties and we also present in details
Berry's construction of the semiclassical Wigner function. In Chapter 4, which is
our main contribution, we construct the asymptotics
of the Wigner transform of the WKB solution of the semiclassical Airy equation and
we make some remarks on the stationary Wigner function
and the possibility of generalizing our process in problems with more complicated
refraction indices as well in two-dimensional propagation
problems where folds can also appear. Finally, in a series of four appendices we
briefly append basic results for the uniform stationary
phase method and the
idea of the parametrization of the Lagrangian manifold used by Kravtsov and Ludwing
in the derivation of their asymptotic
formula.

%%%%%%%%%%%%%%%%%%%%%%%%%%%%%%%%%%%%%%%%%%%%%%%%%%%%%%%%%%%%%%%%%%%%%%%%%%%%%%%%%%%%%%%%%%%%%%%%%%%%%%%%%%%%%%%%%%%%%%%%%%%%
%%%%%%%%%%%%%%%%%%%%%%%%%%%%%%%%%%%%%%%%%%%%%%%%%%%%%%%%%%%%%%%%%%%%%%%%%%%%%%%%%%%%%%%%%%%%%%%%%%%%%%%%%%%%%%%%%%%%%%%%%%%%
 \chapter{Geometrical optics}
 \section{The WKB method}
We consider the propagation of $n$-dimensional time-harmonic
acoustic waves in a medium with variable refraction index
$\eta(\textbf{x})=c_0/c(\textbf{x})$, $c_0$ being the reference
sound velocity and $c(\textbf{x})$ the wave velocity at the point
${\bf{x}}=(x_1,...,x_n) \in M$, where $M$ is some unbounded domain
in $R^n_{\bf{x}}$. We assume that $\eta \in
C^{\infty}({R^n_{\bf{x}}})$ and $\eta>0$. The wave field
$u(\bf{x}, \kappa)$ is governed by the {\it Helmholtz} equation

\begin{equation}\label{helm}
\Delta u(\textbf{x},\kappa) +\kappa^2{\eta}^2({\textbf{x}})u({\bf{x}},\kappa)=f(\textbf{x})\ ,\quad {\bf{x}} \in M \ ,
\end{equation}
where $\kappa =\omega/c_0$ is the wavenumber ($\omega$ being the
frequency of the waves) and $f$ is a compactly supported source
generating the waves. We are interested in the asymptotic behavior
of $u(\bf{x},\kappa)$ as $\kappa\to\infty$ (i.e. for very large
frequencies $\omega$) , assuming that $\bf{x}$ remains in a compact
subset of $M$. Note that the asymptotic decomposition of
scattering solutions when simultaneously $|\textbf{x}|$ and
$\kappa$ go to infinity is a rather complicated problem, as, in
general, the caustics go off to infinity. This problem has been
rigorously studied in Vainberg \cite{Va}, when $M$ is a full
neighborhood of infinity and $\eta=1$ for $ |\textbf{x}|>r_0$ , 
$r_0$ being a fixed positive constant, and by Kucherenko \cite{Ku}
for the case of a point source (i.e., $f(\bf{x})=\delta(\bf{x})$) ,
under certain conditions of decay for $\eta(\bf{x})$ as $
|\bf{x}|\to \infty$ .

For fixed $\kappa>0$ there is, in general, an infinite set of
solutions of (2.1), and we thus need a radiation condition to guarantee
uniqueness (cf. \cite {CK} for scattering by compact
inhomogeneities, and \cite{Wed} for scattering by stratified
media). This condition is essentially equivalent to the assumption
that there is no energy flow from infinity, which in geometrical
optics is translated to the requirement that the rays must go off
to infinity (cf. \cite{PV}, \cite{Ca}).

 \begin{definition}
We say that
\begin{equation} \label{asexp}
u_N({\bf{x}},\kappa)=e^{i\kappa
S(\bf{x})}\sum_{\ell=0}^{N}(i\kappa)^{-\ell}A_{\ell}(\bf{x}) \ ,
\end{equation}
 where the phase $S$ and the amplitudes $A_{\ell}$
are real-valued functions in $C^{\infty}(R^n_{\bf{x}})$ , is a
{\it formal asymptoptic solution (FAS)} of $(\ref{helm})$, if it
satisfies the asymptotic equation
\begin{equation} \label{approxhelm}
(\Delta + \kappa^2
\eta^2({\bf{x}}))u_N({\bf{x}},\kappa)=O(\kappa^{-N_1}) \ ,\quad
\kappa\to\infty \ ,
\end{equation}
with $N_1 \rightarrow \infty$ as $N\rightarrow \infty$ , in a
bounded domain $|\textbf{x}|\le a$ , $a$ being a positive constant.
\end{definition}
According to the WKB procedure, we seek a FAS of $(\ref{helm})$ in
the form $(\ref{asexp})$. Substituting $(\ref{asexp})$ into
$(\ref{approxhelm})$, and separating the powers
$(i\kappa)^{-\ell}, \thinspace \ell=0,1,\dots$ , we obtain the {\it
eikonal } equation
\begin{equation} \label{eikonal} \left(\nabla
S(\bf{x})\right)^2=\eta^2(\bf{x}) \ ,
\end{equation}
for the phase function, and the following hierarchy of {\it
transport} equations

\begin{equation}
\label{transport} 2\nabla S({\bf{x}})\cdot\nabla A_0({\bf{x}}) +
\Delta S({\bf{x}})A_0({\bf{x}}) = 0\ ,
\end{equation}

\begin{equation}\label{htransport}
 2\nabla S({\bf{x}})\cdot\nabla A_{\ell}({\bf{x}}) + \Delta
S({\bf{x}})A_{\ell}({\bf{x}})= -\Delta A_{\ell-1}({\bf{x}}),\
\ell=1,2,\dots  \ ,
\end{equation}
for the principal and higher-order amplitudes, $A_0$ and $A_{\ell}$ , respectively.

\section{Rays and Caustics}

A standard way for solving the eikonal equation $(\ref{eikonal})$
is based on the use of bicharacteristics (see, e.g.,  \cite{Ho}, Vol.
I, Chap. VIII, and \cite{Jo}, Chap. 2). Let $H(\bf{x},\bf{k})$ be the {\it Hamiltonian}
function

\begin{equation}\label{hamiltonian}
H(\textbf{x},\textbf{k})=\frac12\left(\textbf{k}^2-\eta^2(\textbf{x})\right),
\quad \textbf{x}\in M, \quad \textbf{k}\in R^n \ ,
\end{equation}
corresponding to the Helmholtz equation $(\ref{helm})$, where
$\textbf{k}=(k_1,...,k_n)$ is the momentum, conjugate to the
position $\textbf{x}=(x_1,...,x_n)$ .

The associated Hamiltonian system reads as follows
\begin{eqnarray}\label{hamsys}
 \frac{d\bf{x}}{dt}&=&\nabla_{\bf{k}}
H(\bf{x},\textbf{k})=\textbf{k} \ , \nonumber \\
\frac{d\textbf{k}}{dt}&=&-\nabla_{\bf{x}}
H(\bf{x},\textbf{k})=\eta(\bf{x})\cdot \nabla_{\bf{x}}
\eta(\bf{x}) \ .
\end{eqnarray}
Here, since we deal with a time-independent problem, $t$ is simply
a time-like parameter which parametrizes the trajectories. For
$\textbf{k}=\nabla_{\bf{x}}S(\bf{x})$ , we see that
$H(\textbf{x},\textbf{k})=0$ gives just the eikonal equation
$(\ref{eikonal})$.

Let now $\lo$ be a manifold of dimension $n-1$ in $R^n$ ,
\begin{eqnarray*}
\lo =
\{\textbf{x}=\textbf{x}_{0}(\theta)\ ,\ \theta=(\theta_{1},...,
\theta_{n-1})\in U_{0}\subset {R}^{n-1}\} \ .
\end{eqnarray*}
For $t=0$ we specify on $\lo$ the following
initial conditions
\begin{equation}\label{initbich}
 \textbf{x}(0)=\textbf{x}_{0}(\theta)\ ,\quad
\textbf{k}(0)=\textbf{k}_{0}(\theta)\ ,\quad \theta\in U_0
 \ ,
\end{equation}
for the Hamiltonian system $(\ref{hamsys})$, and for this reason, in the sequel, we
refer to $\lo$  as the initial manifold (from which the bicharacteristics depart).

Also, we assume the initial conditions
\begin{equation}
 S(\textbf{x})=S_{0}(\theta) \ , \quad
A_l(\textbf{x})=\alpha_{l}(\theta)\quad \mathrm{for}
 \quad \textbf{x}=\textbf{x}_{0}(\theta)\in \lo \ ,
\end{equation}
for the phase and the amplitudes, $S_{0}(\theta)$ ,
$\alpha_{l}(\theta)$ being given smooth functions on the initial
manifold $\lo$ . Note that $\textbf{k}_{0}(\theta)$ must satisfy
the condition
 \begin{equation}
 {\mid \textbf{k}_{0}(\theta)\mid }^2=\Bigl(\eta
(\textbf{x}_{0}(\theta))\Bigr)^2, \quad \textbf{x}_0 \in \lo \ ,
\end{equation}
for the eikonal to be satisfied also at $t=0$ .

\begin{definition}
 The trajectories $\left\{\textbf{x}=\textbf{x}(t,\theta)\ , \
\textbf{k}=\textbf{k}(t,\theta)\ ,\ t\in R\ ,\ \theta\in U_0
\right\}$ which solve the initial value problem $(\ref{hamsys})$,
$(\ref{initbich})$,  in the phase space $R_{\bf{x} \bf{k}}^{2n}$
are called {\bf bicharacteristics}, and their projection
$\{\textbf{x}=\textbf{x}(t,\theta)\ ,\ t\in R\ ,\ \theta\in
U_0 \}$ onto $R_{\bf{x}}^n$ are called {\bf rays}.
\end{definition}

The initial manifold $\lo$ evolves under the Hamiltonian flow defined by the
bicharecteristics to the manifold
\begin{eqnarray*}
\lt= \Bigl\{\bigl(\textbf{x}(t,\theta),\textbf{k}(t,\theta)\bigr)
\ , \ \theta \in U_0 \ ,\ t\ge 0 \Bigr\} \ .
\end{eqnarray*}
Obviously, since
 $\textbf{k}=\nabla_{\textbf{x}} S$ , the bicharacteristics lie
on the manifold $H({\textbf{x}},{\textbf{k}})= 0$ , thanks to the
eikonal equation, and therefore $\lt$ is a subset of the
constant-energy manifold $H({\bf{x}},{\textbf{k}})=0$ for any $t
\ge0$ . Moreover, in order to the eikonal to be satisfied for
$t=0$ , it must be $\textbf{k}_0=\nabla_{{\bf{x}}} S_0 (\textbf{x})$ and
therefore $\lo$ has the important property that it is a Lagrangian
manifold in the phase space $R_{\bf{x} \bf{k}}^{2n}$ (see, the
book by Maslov \& Fedoryuk \cite{MF} for a detailed introduction
to the theory of Lagrangian manifolds and its relation to the
construction of asymptotics, and also the expository paper by
Littlejohn \cite{Lit}). This property remains invariant under the
Hamiltonian flow, and therefore $\lt$ remains Lagrangian for all
$t \ge 0$ .

Also, in the sequel, we will assume that that $\textbf{k}_{0}(\theta)$ is nowhere
tangent to
$\lo$  in order  to the (non-characteristic) Cauchy problem for
the eikonal equation  have a smooth unique solution for small $t$ (see, e.g. Courant
\& Hilbert \cite{CH}, Vol. II, Chap. II ).

Now, using the first equation of the Hamiltonian system
$(\ref{hamsys})$
\begin{equation}
\frac{d\textbf{x}}{dt}= \textbf{k}= \nabla S_{{\bf{x}}} \\
\end{equation}
we see that $S$ satisfies the following ordinary differential
equation
\begin{equation}
\frac{dS(\textbf{x})}{dt}=\nabla_{{\bf{x}}} S \cdot
\frac{d\textbf{x}}{dt}=
\textbf{k}\frac{d\textbf{x}}{dt}=|\textbf{k}|^{2}=
\eta^2(\textbf{x}) \ .
\end{equation}
Integrating the last equation along the rays, we obtain the phase
\begin{equation}\label{gphase}
S(\textbf{x}(t,\theta))=S_{0}(\theta )
 +\int_0^t \eta^2 (\textbf{x}(\tau,\theta))d\tau \ .
\end{equation}

The solution of the  transport equation $(\ref{transport})$ for
the principal amplitude $A_0$ along the rays, is obtained by
applying divergence theorem in a ray tube $T_t$ . Assuming that
$A_0$ is finite and non-zero, the transport equation
$(\ref{transport})$ is rewritten in the divergence form
\begin{equation}
\nabla\cdot ({A_0}^2 \nabla S)=0 \ ,
\end{equation} and integrating on the ray tube $T_t$
with boundary $\partial T_t=\Sigma_0 \cup \Sigma_{0t} \cup
\Sigma_t$ , we have
\begin{equation}
 0=\int_{T_t}\nabla \cdot ({A_0}^2\nabla S)\, dT=\int_{\partial
T_t}^{}{A_0}^2(\nabla S\cdot \vec{\nu})\, d\Sigma \nonumber
 \end{equation}
where $\vec{\nu}$ the unit outer normal vector on the boundary
$\partial {T_t}$ of $T_t$ , and $d\Sigma$ is surface element.

Now, since the rays have the direction of $\nabla S$ , that is, the
rays are perpendicular to the wave fronts $S=$ const. , the vectors
$\vec{\nu}$ and $\nabla S$ are orthogonal on the lateral boundary
$\Sigma_{0t}$ of the ray tube, and we therefore obtain

\begin{eqnarray}
0=\int_{\partial T_t}^{}{A_0}^2(\nabla S\cdot \vec \nu)\, 
d\Sigma&=&\int_{\Sigma_{0}}\alpha_0^2 (\nabla S\cdot \vec \nu)\,
d\Sigma_{0} +\int_{\Sigma_t}{A_0}^2(\nabla S
\cdot \vec \nu)\, d\Sigma_t \nonumber \\
&=&-\int_{\Sigma_{0}}\alpha_0^2 |\textbf{k}_{0}|\, d\Sigma_{0}
+\int_{\Sigma_t}{A_0}^2 |\textbf{k}|\, d\Sigma_t \ .
\end{eqnarray}

Then, we have
\begin{equation}
 -\alpha_{0}^2|\textbf{k}_{0}|\, d\Sigma_0
+{A_0}^2|\textbf{k}|\, d\Sigma_t=0 \nonumber
\end{equation}
that gives
 \begin{equation}
 {A_0}^2
=\alpha_{0}^2\frac{|\textbf{k}_0|}{|\textbf{k}|}\frac{d\Sigma_0}{d\Sigma_t}=
\frac{\alpha_{0}^2}{J}
\end{equation}
where
\begin{equation}\label{jacobian}
 J=J(t,\theta )=\frac{D(t,\theta)}{D(0,\theta)} \ , \qquad D(t,\theta
)=\det \frac{\partial \textbf{x}(t,\theta)}{\partial(t,\theta )}\
,
\end{equation}
is the Jacobian of the ray transformation $(t,\theta)\mapsto
\textbf{x}(t, \theta)$ (see, e.g., \cite{BB}, \cite{Zau}).

Therefore we derive the principal amplitude
\begin{equation} \label{amplitude}
A_0(\textbf{x}(t,\theta))=\frac{\alpha_0(\theta)}{\sqrt{J(t,\theta)}}
\end{equation}
where $A_0(\textbf{x}(t=0,\theta))=\alpha_0(\theta)$ is the
amplitude at the point $\textbf{x}=\textbf{x}_0(\theta)$ on the
initial manifold $\lo$ .

\noindent
\begin{remark}
An alternative way to derive the formula $(\ref{amplitude})$
starts form the Liouville formula \cite{Ha}

\begin{equation} \label{smirn}
 \frac{d}{dt}\ln\frac{D(t, \theta)}{D(0,
\theta)}=\sum_{i=1}^{n}{\frac{\partial k_i}{\partial
x_i}}=\sum_{i=1}^{n}{\frac{\partial^2 S}{\partial x_i^2}}=\Delta S
\end{equation}
which, since $D(0, \theta)=1$ , implies
\begin{equation}
 \frac{d}{dt}\ln{D(t, \theta)}=\Delta S \ .
\end{equation}

From the  transport equation $(\ref{transport})$ we have
\begin{eqnarray*}
-\Delta S=\frac{2}{A_0}\nabla S\cdot \nabla A_0
\end{eqnarray*}
and since
\begin{eqnarray*}
\frac{d}{dt}\ln
{A_0}^2=\frac{2}{A_0}\frac{dA_0}{dt}=\frac{2}{A_0}\nabla
A_0\frac{d\textbf{x}}{dt}=\frac{2}{A_0}\nabla A_0\cdot \nabla S
\end{eqnarray*}
it follows
\begin{eqnarray*}
\frac{d}{dt}\ln {A_0}^2=-\Delta S \ .
\end{eqnarray*}
Then using $(\ref{smirn})$ we get
\begin{eqnarray*}
\frac{d}{dt}\ln {A_0}^2=-\frac{d}{dt}\ln D(t) \ ,
\end{eqnarray*}
which after integration on the interval $(0,t)$ leads to the
formula $(\ref{amplitude})$ for the principal amplitude.
\end{remark}

The higher-order amplitudes can be also derived by integrating the hierarchy of
transport equations $(\ref{htransport})$ in a similar way.

The transformation
\begin{eqnarray*}
(t,\theta)\mapsto \textbf{x}(t, \theta) \ ,
\end{eqnarray*}
is one-to-one, provided that the determinant of the Jacobian
\begin{eqnarray*}
D(t,\theta)=\det\frac{\partial(x_1, ..., x_n)}{\partial(t,
\theta_1, ..., \theta_{n-1})} \ ,
\end{eqnarray*}
is non-zero. Note that $D(t,\theta)$ is non-zero since we have excluded the
possibility of the characteristics to be tangent to $\lo$ at $t=0$ . But even if
$D(t=0,\theta) \ne 0$ , $D(t,\theta)$ and therefore $J(t,\theta)$ ,  it does
not necessarily remain non-zero for all $t$ . Whenever $J=0$ , it
can happen that $(t,\theta)$ may be non-smooth or multi-valued
functions of $\textbf{x}$, and the rays may intersect, touch, and
in general have singularities, although the bicharacteristics
never develop such singularities in the phase space. Then, the phase function
$S=S(\textbf{x}(t,\theta))$ may be a multi-valued or even a
non-smooth function. It must be emphasized that in the
neighborhoods of the singular points we cannot choose the
coordinates $x_1,\ldots,x_n$ as local coordinates.

\begin{definition}
The points $\textbf{x}=\textbf{x}(t,\theta)$ at which
$J(t,\theta)=0$ , are called {\bf focal points}, and the manifold
generated from these points, that is, the envelope of the family
of the rays, is called {\bf caustic}.
\end{definition}

It follows from ($\ref{amplitude}$), that the principal amplitude
$A_0$  blows up on the caustics. However, it is known that
solutions of the Helmholtz equation are analytic away form source
points and it is therefore the WKB procedure for constructing the
(FAS) which fails to predict the correct amplitudes on the
caustics. From the geometrical point of view, this non-physical
blow up of the amplitude at the caustic is associated with the
diminishing of the ray tubes there (the ray tube cross section
$\Sigma_t$ vanishes whenever the ray touches the caustic), and it
is clearly a consequence of the way of solving the  transport
equation by integrating along the rays. In fact, a boundary layer
analysis \cite {BaKi}, \cite {BuKe} shows that the ray structure
breaks down near the caustic, and within a boundary layer the
modal structure of the wave field is dominant, which makes
therefore impossible to separate the waves approaching the caustic
from those leaving from it. However, uniform asymptotic solutions
which will be considered in the sequel, show that it exists
considerable energy concentration near the caustic, which makes it
detectable, but the field is spatially finite but  strongly
increasing with increasing frequency.

Asymptotic methods for calculating finite fields on the caustics
have been developed by Kravtsov \cite{Kra}, \cite{KO} and Ludwig
\cite{Lu} (the method of relevant functions) and by Maslov
\cite{MF}(the method of the canonical operator). Although the two
methods have been developed along different lines, they are both
essentially based on the symplectic properties of the
Lagrangian manifold  $\lt$ . We will briefly present the
Kravtsov-Ludwig technique.
%%%%%%%%%%%%%%%%%%%%%%%%%%%%%%%%%%%%%%%%%%%%%%%%%%%%%%%%%%%%%%%%%%%%%%%%%%%%%%%%%%%%%%%%%%%%%%%%%%%%
A relatively recent way to treat high frequency problems is based
on the Wigner transform of the wavefunction, whose basic
properties (i.e. the relation of its moments with important
physical quantities, as energy density, current density, et.al.),
make it a proper and extremely useful tool for the study of the
wavefield. Wigner function is satisfying an integro-differential
equation in phase space, which for smooth potential functions can
be expressed as an infinite order singular perturbation (with
dispersion terms with respect to the momentum of the phase space),
of the classical Liouville equation.

 At the high frequency limit, the solution of Liouville equation
converges weakly to the so
 called Wigner measure \cite{LP}, which for relatively smooth initial
phase functions $S_0$ produces
 the solution of single phase geometrical optics. But in the case of
multi-phase optics and caustic formation,
 Wigner measure is not the appropriate tool for the study of the
semi-classical limit, because as is shown through
 examples for the time-dependent Schrodinger equation by Filippas \& Makrakis  
\cite{FM1}, \cite{FM2} (a) it cannot
be expressed as a distribution with respect
 to the momentum for a fixed space-time point, and thus cannot produce the
amplitude of the wavefunction,
 and (b) it is unable to ``recognize" the correct dependencies of the
wavefield from the semiclassical parameter
 $\e$ near caustics. This  is a property that the solutions of the
integro-differential Wigner equation do have.

We should note  that up to now, there does not exist either some
systematic theoretical study of the Wigner integro-differential
equation (except the results of Markowich \cite{SMM} for the
equivalence of Wigner and   Schr\"odinger equations), neither some
method for constructing solutions or their representations. This
 is due to fundamental difficulties of this equation,
which is an equation with non-constant coefficients, that combines
at least two different characters, that of transport  and that of
dispersive equations. The first character is correlated with the
Hamiltonian system of the Liouville equation (and the classical
mechanics of the problem), and the second with the wave energy
transfer away from the Lagrangian manifold of the Hamiltonian
system- mainly inside a boundary layer around the manifold- the
width of which depends on the smoothness of the manifold and the
presence or not of caustics.
%%%%%%%%%%%%%%%%%%%%%%%%%%%%%%%%%%%%%%%%%%%%%%%%%%%%%%%%%%%%%%%%%%%%%%%%%%%%%%%%%

\section{The Kravtsov-Ludwig technique}

\subsection{Motivation and heuristic foundation}

The idea of obtaining global high-frequency solutions of the
Helmholtz equation $(\ref{helm})$ by the method of relevant
functions, is to replace $(\ref{asexp})$ by integrals of the form
(see, e.g., the classical paper by Ludwig  \cite {Lu}, the modern approach by
Duistermaat \cite {Dui1},  \cite{Dui2}, also the book by Guillemin \& Sternberg \cite{GS})
\begin{equation}\label{intexp}
u(\textbf{x},
\kappa)=\left(\frac{i\kappa}{2\pi}\right)^{\frac12}\int_{\Xi}e^{i\kappa
S\left(\bf{x},\xi\right)}\, A(\textbf{x},\xi)\, d\xi \ , \quad \textbf{x}
\in M\subset R_{\bf{x}}^n\ , \quad \xi\in\Xi\subset R_{\xi}\ .
\end{equation}
where the phase $S(\textbf{x},\xi)$ and the amplitude
$A(\textbf{x},\xi)$ must satisfy  the eikonal equation
$(\ref{eikonal})$ and the transport equation $(\ref{transport})$,
respectively, identically with respect to $\xi$. Here, the term global solution
means that the integral representation holds uniformly away and on the caustics.

The integral $(\ref{intexp})$ can be regarded as a continuous
superposition of oscillatory functions of the form
$(\ref{asexp})$. The underlying physical motivation is the fact
that in every small region in which the refraction index of the
medium can be approximately considered as constant, and the wave front as plane,
the field can be represented as a superposition of plane waves
$Ae^{i\kappa S}$ , where the local amplitude $A$ and the local wavenumber
$\nabla_{\bf{x}} S$ vary slowly in
transition from one region to the next.

The phase $S(\textbf{x},\xi)$ {\it parametrizes the Lagrangian
submanifold} $\lt$ generated by the corresponding Hamiltonian
flow, in the sense that $\lt$ is locally represented by
$(\textbf{x}, \textbf{k})= (\textbf{x},
\nabla_{\bf{x}}S(\textbf{x},\xi))$ . In the language of microlocal analysis, the
representation
$(\ref{intexp})$ defines a Lagrangian distribution on $\Lambda$,
which for large $\kappa$ is an asymptotic solution (compound
asymptotics) of the Helmholtz equation (see the book by Guilllemin
and Sternberg \cite{GS} for a detailed but rather technical
exposition of this technique). In this sense, the construction of
an asymptotic expansion in the form $(\ref{intexp})$ is
``equivalent" with the construction of the Lagrangian submanifold
$\Lambda$. Near caustics $S(\textbf{x},\xi)$ is a multivalued
function and, in general, it cannot be derived by integration
along the bicharacteristics by simply applying $(\ref{gphase})$.
Representation formulae for the phase function $S(\textbf{x},\xi)$
are constructed, for each caustic which is generated from the
particular ray system (different caustics may appear for the same Hamiltonian with
different initial data), by appealing, in general, to the methods of
singularity theory (see, e.g., \cite {AVH}). For a simple fold caustic
 this construction is relatively simple, and we
briefly present it in the next section.

First of all, in the case of single-phase geometrical
optics, we can take $S(\textbf{x}, \xi)=\phi(\textbf{x})-\xi^2$ .
Then, $\partial_{\xi}S(\textbf{x}, \xi)=-2\xi$ , and there is only
a simple stationary point $\xi=0$. By the standard stationary
phase lemma (see, e.g., \cite{BH}, p. 219), the oscillatory
integral $(\ref{intexp})$ reduces asymptotically to
$(\ref{asexp})$. If there are more than one simple stationary
points $\xi_j(\textbf{x})$ , that is,
$\partial_{\xi}S(\textbf{x},\xi_j(\textbf{x}))=0$ and
$\partial_{\xi}^2S(\textbf{x},\xi_j(\textbf{x}))\neq 0$ , we obtain
the asymptotic expansion
\begin{equation}\label{wkbsum}
u({\bf{x}}, \kappa)\sim\sum_{j=1}^{j=J}  A_{0}^j({\bf{x}})e^{i\kappa
S_j({\bf{x}})}\ .
\end{equation}
Here
\begin{equation}
S_j(\textbf{x})=S(\textbf{x},\xi_j(\textbf{x})) \ ,
\end{equation}
solve the eikonal equation $(\ref{eikonal})$, and
\begin{equation}
A_{0}^j(\textbf{x})=e^{\frac{i\pi}{4} \bigl(1+sgn\partial_{\xi}^2
S({\bf{x}},\xi_j({\bf{x}}))\bigr)}
\frac{A(\textbf{x},\xi_j(\textbf{x}))}{\sqrt{|\partial_{\xi}^2S({\bf{x}},\
\xi_j(\bf{x}))|}}\ ,
\end{equation}
solve the zero-order transport equation $(\ref{transport})$.

Note that the existence of many stationary points $\xi_j(\textbf{x}) \ ,\ j= 1,
\cdots, J$ for some fixed point $\textbf{x}$, means that from this point pass $J$
rays, and $S_j(\textbf{x}) \ ,A_{0}^j(\textbf{x} )$ are the phase and the amplitude
computed by integrating the eikonal and the transport equations along the $j-$th ray
passing from $\bf{x}$. The summation in $(\ref{wkbsum})$ extends over all the rays,
a fact which implies that the principal asymptotic contribution to the  wavefield is
just the superposition of the individual geometric (WKB) wavefields, and there no
significant interference effects between these waves.
However, the above picture is not valid whenever
$\partial_{\xi}^2S(\textbf{x},\xi_j(\textbf{x}))=0$ , i.e. for the
stationary points of multiplicity greater than one. In this case,
a modified stationary phase lemma (\cite{BH},  p. 222,  \cite{Bor},  \cite{CFU}) must be
applied in order to obtain the  correct expansion. The appearance  of many 
stationary points which coalesce, is associated with the formation of caustics and
the interference effects between the local geometrical waves cannot be ignored, thus
making the modal structure of the wavefield important within a boundary layer
adjacent to the caustic.

\subsection{Phase functions for smooth caustic (folds)}

We  start by stating and briefly describing the proof of the
following basic proposition which can be found in the book by Guillemin \& Sternberg
(\cite{GS}, p.431, Proposition 6.1).

\begin{proposition}\label{reprfold}
Near a smooth caustic (fold), the phase function has the form
\end{proposition}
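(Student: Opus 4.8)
The statement to be established is that, in a neighbourhood of a fold point of the caustic, a single fibre variable $\xi\in R$ suffices and there are smooth real functions $\theta(\mathbf x),\rho(\mathbf x)$ for which the phase in $(\ref{intexp})$ can be chosen in the ``Airy'' normal form
\[
S(\mathbf x,\xi)=\theta(\mathbf x)-\rho(\mathbf x)\,\xi+\frac{1}{3}\,\xi^{3},
\]
where $\rho$ vanishes exactly on the caustic with $\nabla_{\mathbf x}\rho\neq0$ there, and where the two critical values $\theta\mp\frac{2}{3}\rho^{3/2}$ (for $\rho\ge0$) are precisely the two WKB phases that coalesce on the caustic; signs depend on orientation conventions. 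The plan is threefold: first produce \emph{some} nondegenerate phase with a single fibre variable generating the Lagrangian manifold $\lt$ near the fold point; then normalise its $\xi$-dependence to the exact cubic by a parametric preparation/Morse argument; finally check that the leftover freedom suffices to keep the eikonal $(\ref{eikonal})$ satisfied identically in $\xi$ and to read off the geometric content of $\theta,\rho$. This follows the line of Guillemin \& Sternberg \cite{GS}.

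\textbf{Step 1: a generating function with one fibre variable.} Using the Lagrangian structure of $\lt$ established above, near the fold point $\lt\subset\{H=0\}$ projects onto $R^n_{\mathbf x}$ with rank $n-1$; since $\lt$ is Lagrangian it is locally the critical manifold of a nondegenerate phase function, and the corank $1$ of the projection along the fold permits the choice of exactly one fibre variable $\xi$, so locally $\lt=\{(\mathbf x,\nabla_{\mathbf x}S(\mathbf x,\xi)):\partial_\xi S(\mathbf x,\xi)=0\}$. In these terms the caustic is the $\mathbf x$-image of $\{\partial_\xi S=\partial_\xi^2 S=0\}$, and the Whitney fold condition amounts to $\partial_\xi^3 S\neq0$ there together with transversality of $\partial_{\mathbf x}\partial_\xi S$ to that set. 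Away from the caustic $\partial_\xi^2 S\neq0$, and the uniform stationary phase lemma recovers the two-term WKB sum $(\ref{wkbsum})$, exactly as already observed in the text.

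\textbf{Steps 2--3: normal form and consistency with eikonal/transport.} Regard $\partial_\xi S(\mathbf x,\xi)$ as an $\mathbf x$-parametrised family of functions of $\xi$ vanishing to order exactly $2$ at the fold, and invoke the Malgrange preparation theorem and a Morse lemma with parameters: after a smooth $\mathbf x$-dependent change of fibre coordinate $\xi\mapsto\widetilde\xi(\mathbf x,\xi)$ and subtraction of a function of $\mathbf x$ alone, one may assume $\partial_{\widetilde\xi}S=\widetilde\xi^{\,2}-\rho(\mathbf x)$ with $\rho$ smooth and vanishing on the caustic preimage; integrating in $\widetilde\xi$ yields the asserted cubic form, and the transversality from Step~1 forces $\nabla_{\mathbf x}\rho\neq0$ on $\{\rho=0\}$, so the caustic is a smooth hypersurface. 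To close, the critical points are $\xi=\pm\sqrt{\rho}$ for $\rho\ge0$ (two rays; none for $\rho<0$), with critical values $\theta\mp\frac{2}{3}\rho^{3/2}$ and gradients $\nabla\theta\mp\sqrt{\rho}\,\nabla\rho$; imposing $(\nabla_{\mathbf x}S)^2=\eta^2$ at both stationary points is equivalent to the pair $(\nabla\theta)^2+\rho(\nabla\rho)^2=\eta^2$ and $\nabla\theta\cdot\nabla\rho=0$, which are the fold eikonal equations, solved along the rays from the data on $\lo$; the companion transport equation $(\ref{transport})$ then produces the Airy-type amplitude and hence the representation $(\ref{intexp})$.

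\textbf{Main obstacle.} The delicate part is the normalisation in Step~2: reducing $\partial_\xi S$ to the exact quadratic $\widetilde\xi^{\,2}-\rho(\mathbf x)$ \emph{smoothly and uniformly in $\mathbf x$ across the caustic} is precisely where the parametric preparation/Morse machinery does the real work. One must then spend the residual gauge freedom — the choice of $\widetilde\xi(\mathbf x,\cdot)$ and of the additive function of $\mathbf x$ — so as to keep $S$ real, to preserve $\lt$, and to upgrade ``eikonal at the two stationary points'' to ``eikonal identically in $\xi$'' as demanded by $(\ref{intexp})$. Managing these three requirements simultaneously near the caustic, where the two sheets of $\lt$ become tangent to the vertical fibre, is the crux; the remaining steps are routine applications of the uniform stationary phase method.
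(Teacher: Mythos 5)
Your overall architecture (one fibre variable from the corank--one projection, then a parametric normal form, then consistency with the eikonal) is the right one, but the central step is justified by a mechanism that does not work as stated. You propose to apply a Morse lemma with parameters to the family $\partial_{\xi}S(\textbf{x},\cdot)$ and conclude that, after a fibre change of coordinates $\xi\mapsto\widetilde{\xi}(\textbf{x},\xi)$ and subtraction of a function of $\textbf{x}$, one has $\partial_{\widetilde{\xi}}S=\widetilde{\xi}^{\,2}-\rho(\textbf{x})$, which you then integrate. The flaw is that normalizing the \emph{function} $\partial_{\xi}S$ is not the same as normalizing the \emph{derivative of the transformed} $S$: if $\Phi$ is the fibre diffeomorphism produced by the Morse lemma, then $\partial_{\widetilde{\xi}}\bigl(S\circ\Phi\bigr)=\bigl(\partial_{\xi}S\circ\Phi\bigr)\cdot\partial_{\widetilde{\xi}}\xi$, and the Jacobian factor $\partial_{\widetilde{\xi}}\xi$ destroys the quadratic normal form unless $\Phi$ is a translation, which it generically is not. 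Subtracting a function of $\textbf{x}$ alone does not help, since it does not affect $\partial_{\widetilde{\xi}}S$ at all. So the step ``one may assume $\partial_{\widetilde{\xi}}S=\widetilde{\xi}^{\,2}-\rho$, then integrate'' is exactly the assertion to be proved, not a consequence of the parametric Morse lemma.

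The paper (following Chester--Friedman--Ursell, adapted to the smooth category) closes this gap differently, and the difference is the essential idea you are missing: the two critical \emph{values} of $S(\textbf{x},\cdot)$ are invariants of any fibre-preserving change of coordinates, so if the normal form $\frac{\zeta^3}{3}-\rho\zeta+v_0$ is to hold, then necessarily $v_0=\frac12\bigl(S(\xi)+S(-\xi)\bigr)$ and $\rho^3=\frac{9}{16}\bigl(S(\xi)-S(-\xi)\bigr)^2$ after the critical set $C$ has been parametrized so that $x=\xi^2$ on $C$. One then has to prove these candidate functions are smooth in $\textbf{x}$: both right-hand sides are even in $\xi$, so Whitney's lemma gives smooth functions of $x=\xi^2$, and the cube root exists because the Taylor expansion of $\bigl(S(\xi)-S(-\xi)\bigr)^2$ starts at order six (this uses $\partial_{\xi}S=\partial_{\xi}^2S=0$, $\partial_{\xi}^3S\neq0$ at the fold). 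Only after $v_0$, $\rho$ and $\zeta=+\sqrt{\rho}$ are constructed on $C$ and extended does one perform the change of variable $\xi\mapsto\zeta$, and the amplitude decomposition $(\ref{axi})$ is then obtained from the Malgrange preparation theorem (applied to the amplitude, not to the phase). Your concluding verification that the eikonal at the two stationary points is equivalent to the system $(\ref{klphasesyst})$ is correct and is a nice addition, but it is downstream of the normal form and cannot substitute for the critical-value-plus-Whitney argument.
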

\begin{equation}\label{sxi}
S(\textbf{x},\xi)=\phi(\textbf{x})+\xi\rho(\textbf{x})-\frac{\xi^3}3
\ ,
\end{equation}
{\it{and the amplitude admits of the decomposition}}
\begin{equation}\label{axi}
A(\textbf{x},\xi)=g_0(\textbf{x})+\xi
g_1(\textbf{x})+h(\textbf{x},\xi)\left(\rho(\textbf{x})-
\xi^2\right) \ ,
\end{equation}
{\it where} $h(\textbf{x},\xi)$ {\it is a smooth function, and}
$\rho(\textbf{x})- \xi^2=\partial_{\xi}S(\textbf{x},\xi)$ .

Substituting $(\ref{sxi})$ and $(\ref{axi})$ into
$(\ref{intexp})$, integrating the first two terms, and estimating
by the standard stationary phase lemma the contribution of the third term,
we obtain the following uniform asymptotic expansion
\begin{equation}\label{klformula}
u(\textbf{x})=\sqrt{2\pi} \kappa^{\frac16} e^{\frac{i\pi}4}
e^{i\kappa\phi(\bf{x})} \Biggl(g_0(\textbf{x})
Ai\left(-\kappa^{\frac23}\rho(\textbf{x})\right)+i\kappa^{-\frac13}g_1(\textbf{x})
Ai^{\prime}\left(-\kappa^{\frac23}\rho(\textbf{x})\right)\Biggr)+O(\kappa^{-
1})\, ,
\end{equation}
$\kappa\to\infty  \ ,$ where $Ai(\cdot)$ is the {\it Airy
function}.

Now, by inserting the asymptotic expansions of the Airy functions
for large negative arguments (see, e.g., \cite{Leb}) into
$(\ref{klformula})$, we find for $\kappa\to\infty$ and $\rho \ne
0$ , the following geometrical-optics expansion of the field
\begin{equation}\label{asklformula}
\hspace{-0.15cm}
u(\textbf{x})=\frac1{\sqrt2}\Biggl(\left(g_0(\textbf{x})+g_1(\textbf{x})
\sqrt{\rho(\textbf{x})}\right)\rho^{-\frac14}e^{i\kappa
\Phi_{+}(\bf{x})}+
\left(g_0(\textbf{x})-g_1(\textbf{x})\sqrt{\rho(\textbf{x})}\right)\rho^{-\frac14}e^{i\kappa
\Phi_{-}({\bf{x}})+\frac{i\pi}2}\Biggr) 
\end{equation}
where
\begin{equation}\label{gklphases}
\Phi_{\pm}(\textbf{x})=\phi(\textbf{x})\pm\frac23\rho^{\frac32}(\textbf{x})\ .
\end{equation}

In order to define the Kravtsov-Ludwig amplitude and phases $\phi$, $\rho$, $g_0$ and $g_1$, 
we apply the so-called {\it asymptotic
matching principle}, which states that the expansion
($\ref{asklformula}$) must coincide with the WKB expansion
\begin{equation}\label{wkbfold}
u(\textbf{x})=A_+(\textbf{x})e^{i\kappa
S_+(\bf{x})}+A_-(\textbf{x})e^{i\kappa S_-(\bf{x})}\ ,
\end{equation}
away from the caustic and for large frequencies. This principle implies that
\begin{eqnarray}
\frac1{\sqrt2}\left(g_0(\textbf{x})+g_1(\textbf{x})\sqrt{\rho(\textbf{x})}\right)\rho^{-1/4}&=&
A_+(\textbf{x})\ , \\
\frac1{\sqrt2}\left(g_0(\textbf{x})-g_1(\textbf{x})\sqrt{\rho(\textbf{x})}\right)\rho^{-1/4}e^{\frac{i\pi}2}&=&
A_-(\textbf{x})\ ,
\end{eqnarray}
and
\begin{equation}
\Phi_{\pm}(\textbf{x})= S_{\pm}(\bf{x}) \ ,
\end{equation}
and therefore we obtain
\begin{eqnarray}\label{modamplitudes}
g_0(\textbf{x})&=&\frac{\rho^{\frac14}}{\sqrt2}\left(A_{+}(\textbf{x})-iA_{-
}(\textbf{x})\right)
\ ,\\
g_1(\textbf{x})&=&\frac{\rho^{-\frac14}}{\sqrt2}\left(A_{+}(\textbf{x})+iA_{
-}(\textbf{x})\right) \ ,
\end{eqnarray}
and
\begin{equation}\label{klphases}
\phi(\textbf{x})=\frac12\Bigl(S_{+}(\textbf{x}) +
S_{-}(\textbf{x})\Bigr) \quad \mathrm{and} \quad
\rho(\textbf{x})=\Biggl(\frac34\Bigl(S_{+}(\textbf{x}) -
S_{-}(\textbf{x})\Bigr)\Biggr)^{2/3} \ .
\end{equation}

Note that near the caustic, from any point M near the fold pass
two rays (see Figure 2.1). The subscript $(-)$ (respectively
$(+)$) indicates the ray which arrives at M directly from the
initial manifold (respectively, after ``reflection'' from the
caustic), and $A_{\pm}$ are the principal geometrical amplitudes
$A_0$ along the $(\pm)$ rays.

\begin{figure}[ht]
\centering
\includegraphics[width=0.4\textwidth]{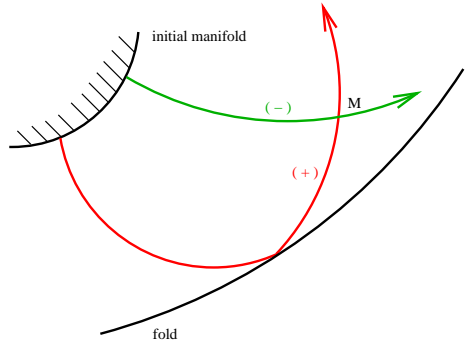}
\caption{{\it Initial manifold, rays \& caustic}}
\end{figure}

The geometrical amplitudes $A_{\pm}(\textbf{x})$ solve the
transport equations $(\ref{transport})$, and according to
$(\ref{amplitude})$ they are given by
\begin{equation}
A_{\pm}(\textbf{x})=\frac{\alpha_0(\theta_{\pm})}{\sqrt{J_{\pm}(\textbf{x})}
}\ ,
\end{equation}
where $\theta_{\pm}=\theta_{\pm}(\textbf{x})$ are the values of
the parameter at the initial manifold corresponding to the rays
$(\pm)$ passing from M, $\alpha_0(\theta_{\pm})$ are the
corresponding initial amplitudes, and $J_{\pm}(\textbf{x})$ are
the values of the Jacobian at the point $\textbf{x}$ calculated
along the $(\pm)$ rays. The value of the square root
$\sqrt{J_{\pm}}$ is given by the formula $\sqrt{J_{\pm}}=
\sqrt{\mid J_{\pm}\mid}e^{i {\frac{\pi}{2}} \gamma_{\pm}}$ where
$\gamma_{+}=1$ and $\gamma_{-}=0$ . Note that $\gamma_{\pm}$ is the
Maslov trajectory index, and it counts the number of tangencies of
 the rays with the caustic
along their course from the points $\textbf{x}_0(\theta_{\pm})$ on
the initial manifold to the point M. Moreover, the geometrical phases
$S_{\pm}(\textbf{x})$ can be
computed by  integration along the rays according to
$(\ref{gphase})$.

On the basis of the asymptotic formula $(\ref{klformula})$, Ludwig
\cite{Lu} has  drawn the following qualitative picture   of the wave field
near the fold:

\noindent i) At points in the illuminated zone whose distance from
the caustic is small compared with $\kappa^{-\frac23}$, the
predictions of geometrical optics are correct to order $-\frac12$ .

\noindent ii) The intensity of the field on the caustic is large
but finite (of order $\kappa^{\frac16}$).

\noindent iii) In the shadow zone there is an illuminated strip
(penumbra) of width of the order $\kappa^{\frac23}$ . It must be
emphasized here that WKB method fails to predict any penumbra as
the shadow zone is devoid of classical rays.

It is finally interesting to note that we can construct the equations satisfied by functions $\phi$, $\rho$, $g_0$ and $g_1$ entering the Kravtsov-Kudwig formula
$(\ref{klformula})$. For this, we substitute this formula into the Helmholtz
equation $(\ref{helm})$, and we ask for the equation to be
asymptotically valid for large $\kappa$. This procedure leads to
the following system for the Kravtsov-Ludwig phases
\begin{eqnarray}\label{klphasesyst}
&\left( \nabla_{\bf{x}} \phi \right)^{2} + \rho \left(
\nabla_{\bf{x}}
\rho \right)^{2} =\eta^{2}(x)  \ , \\
&\nabla_{\bf{x}} \phi \cdot \nabla_{\bf{x}}  \rho =0   \ .
\end{eqnarray}
The system for $g_0 $, $g_1$ is rather complicated and it is given
in \cite{Lu}, \cite{KO}.

\section{Example: Plane wave incident on linear layer}

We consider the two-dimensional {\it{Helmholtz}} equation
\begin{eqnarray*}
\Delta U (\textbf{x})+\kappa_0^2\eta
^2(z)U(\textbf{x})=0\ ,\quad\textbf{x}=(y,z)
\end{eqnarray*}
in a linearly stratified medium occupying the strip $0<z<h$, with
refraction index (see Figure 2.2)
\begin{eqnarray*}
\eta^2(z)=\mu_{0}+\mu_{1}z
\end{eqnarray*}
which increases with depth $z$ ($\mu_{1}>0$).

A plane wave of the form
\begin{eqnarray*}
U(y,h)=\exp(i\kappa_0 y\sin{\psi})
\end{eqnarray*}
arrives at the boundary $z=h$, with angle $\psi$ ($0<\psi<\pi/2$)
with respect to the vertical direction $z$.

\begin{figure}[ht]
\centering
\includegraphics[width=0.6\textwidth]{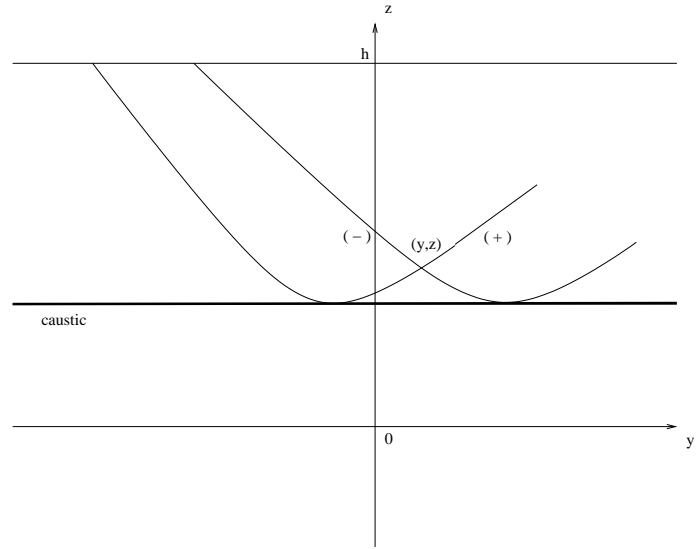}
\caption{{\it Caustic for a linear layer}}
\end{figure}

For this medium, the Hamiltonian function is

\begin{eqnarray*}
H(y,z,k_y,k_z)=\frac{1}{2}\Bigl(|\textbf{k}|^2-\eta^2(z)\Bigr),\quad
\textbf{k}=(k_y,k_z)
\end{eqnarray*}

and the equations of the rays are given by the Hamiltonian system
\begin{eqnarray}
\frac{dy}{dt}&=&k_y\ ,\quad y(0)=\xi \nonumber \\
 \frac{dz}{dt}&=&k_z\ ,\quad
z(0)=h  \nonumber\\
\frac{dk_{y}}{dt}&=&0\ ,\quad k_{y}(0)=\eta_0\sin{\psi} \nonumber \\
\frac{dk_z}{dt}&=&-\frac{\mu_1}{2}\ ,\quad k_z(0)=-\eta_0\cos\psi\ .
\end{eqnarray}
Solving the above system, we find the parametric equations of the
rays
\begin{eqnarray}\label{rays}
z&=&\frac{\mu_1}{4}t^2-\eta_{0}t\cos{\psi}+h \nonumber \\
y&=&\xi+\eta_{0}t\sin{\psi}\ .
\end{eqnarray}

The Jacobian of the ray transformation is given by
\begin{equation}
J=\frac{1}{\eta_{0}\cos{\psi}}\left| \matrix{%
\frac{\partial{y}}{\partial{t}}&\frac{\partial{z}}{\partial{t}}\cr
\frac{\partial{y}}{\partial{\xi}}&\frac{\partial{z}}{\partial{\xi}}
} \right|
=\frac{1}{\eta_0\cos{\psi}}\Bigl(-\frac{\mu_1}{2}t+\eta_0\cos{\psi}\Bigr)
\ ,
\end{equation}
and it vanishes on the caustic given by
\begin{equation}
z_c=h-\frac{1}{\mu_1}{\eta_0}^2\cos^2{\psi}\ .
\end{equation}

From the equations of the rays $(\ref{rays})$, and for any given
point $(y,z<h)$, we find two values of the parameter $t$, that is
\begin{eqnarray}\label{arrivals}
t_{+}(z)&=&\frac{2}{\mu_1}\Bigl(\eta_0\cos{\psi}+\sqrt{{\eta_0}^2\cos^2\psi
+\mu_1(z-h)}\, \Bigr)
\ , \\
t_{-}(z)&=&\frac{2}{\mu_1}\Bigl(\eta_0\cos{\psi}-\sqrt{{\eta_0}^2\cos^2\psi
+\mu_1(z-h)}\, \Bigr) \ ,
\end{eqnarray}
 and the corresponding initial positions
\begin{eqnarray}\label{inpos}
\xi_{+}(y,z)&=&y-\frac{2}{\mu_1}\eta_0\sin{\psi}\Bigl[\eta_0\cos{\psi}+\sqrt
{{\eta_0}^2\cos^2\psi+\mu_1(z-h)}\, \Bigr]
\ , \\
\xi_{-}(y,z)&=&
y-\frac{2}{\mu_1}\eta_0\sin{\psi}\Bigl[\eta_0\cos{\psi}-\sqrt{{\eta_0}^2\cos
^2\psi+\mu_1(z-h)}\, \Bigr] \ .
\end{eqnarray}

This means that from every given point $(y,z)$, at the times
$t_{\pm }$ pass two rays emanating from the points $\xi_{\pm }$ on
the illuminated boundary $z=h$. Note that for $z=z_c$ , we have
$t_{+}(z_c)=t_{-}(z_c)=:t_c$ .

Using now the equation $(\ref{gphase})$ we compute the phase
function
\begin{eqnarray*}
S(t)&=&\int_{0}^{t}\eta^2(z(\tau))d\tau+S(y(0),z(0))\\
&=&\int_{0}^{t}(\mu_0-\mu_1z(\tau))d\tau+S(\xi,h) \\
&=&\frac{{\mu_1}^2}{12}t^3-\frac{\mu_1\eta_0\cos\psi}{2}t^2+(\mu_0+\mu_{1}h)
t+\eta_0 \xi\sin{\psi}\ ,
\end{eqnarray*}
and substituting the values of  $t_{+}$ , $t_{-}$ , $\xi_{+}$ ,
$\xi_{-}$ , we obtain the geometrical phases
\begin{eqnarray}
S_{+}(y,z)&=&-\frac{6(\mu_0+\mu_{1}h)(\alpha-\beta)+6\alpha^2\beta-4\alpha^3-
2\beta^3}{3\mu_1}+\eta_0\xi_{+}\sin{\psi}
\ ,\\
S_{-}(y,z)&=&-\frac{6(\mu_0+\mu_{1}h)(\alpha+\beta)-6\alpha^2\beta-4\alpha^3+
2\beta^3}{3\mu_1}+\eta_0\xi_{-}\sin{\psi}
\end{eqnarray}
where
\begin{equation}
\alpha=-\eta_0\cos{\psi}\ , \quad
\beta=\sqrt{{\eta_0}^2\cos^2{\psi}+\mu_1(z-h)} \ .
\end{equation}

Using the equations $(\ref{klphases})$, we compute the
Kravtsov-Ludwig coordinates (modified phases)

\begin{equation}
\phi(y)=\frac{2}{3\mu_1}\eta_{0}^3\cos^3{\psi}+\eta_{0}y\sin{\psi}
\end{equation}
and
\begin{equation}
\rho(z)={\Bigl(\frac{1}{\mu_1}\Bigr)}^{2/3}(\alpha^2+\mu_1(z-h))\ .
\end{equation}

The Jacobians $J_{\pm}$ along the two rays passing from the point
$(y,z)$, are given by
\begin{eqnarray}
J_{+}&=&-\frac{1}{\eta_{0}\cos{\psi}}\sqrt{{\eta_0}^2\cos^2\psi+\mu_1(z-h)}\ ,
\\
J_{-}&=&\frac{1}{\eta_{0}\cos{\psi}}\sqrt{{\eta_0}^2\cos^2\psi+\mu_1(z-h)}
\end{eqnarray}
and the corresponding principal geometrical amplitudes are
\begin{eqnarray}
A_{+}&=&-i
{(\eta_{0}\cos{\psi})}^{1/2}{\Bigl[\eta_{0}^2\cos^2{\psi}+\mu_1(z-h)\Bigr]}^
{-1/4}\ ,
\\
A_{-}&=&{(\eta_{0}\cos{\psi})}^{1/2}{\Bigl[\eta_{0}^2\cos^2{\psi}+\mu_1(z-h)
\Bigr]}^{-1/4} \ .
\end{eqnarray}
Therefore the modified amplitudes $(\ref{modamplitudes})$ are
given by
\begin{equation}
g_0=-i\sqrt{2}{(\eta_0\cos{\psi})}^{1/2}
{\Bigl(\frac{1}{\mu_1}\Bigr)}^{1/6}, \qquad g_1=0 \ .
\end{equation}

It is then easy to check that the Kravtsov-Ludwig formula
coincides in the layer $0<z<h$ with the analytical solution of the
Dirichlet boundary value problem in the half space $z<h$. In fact,
by separation of variables we look for solutions of the form
$U(y,z)= \exp(i\kappa_0 y \sin\psi)\, u(z)$ , and it follows that
$u(z)$ satisfies the ordinary equation $u''(z) + \kappa_0^2
\bigl(\mu_{0}+\mu_{1}z -\sin^{2}\psi\bigr)\, u(z)=0$ which using the
changes of variables  $Z=\mu_{0}+\mu_{1}z -\sin^{2}\psi$ and
$x=Z\e^{1/3}\ , \ \ \e= (\mu_1/{\kappa_0})^2$ , is transformed to the Airy
equation
\begin{equation} \label{sclairy}
\e^2 {{u^{\e}}^{\prime\prime}(x)}+xu^{\e}(x)=0\, ,\quad x\in R \ .
\end{equation}

In the sequel we are interested for the high-frequency regime,
that is when $\e$ is small, and we study as a model problem the
geometrical optics of the semiclassical Airy equation
$(\ref{sclairy})$.

%%%%%%%%%%%%%%%%%%%%%%%%%%%%%%%%%%%%%%%%%%%%%%%%%%%%%%%%%%%%%%%%%%%%
\section{Geometrical optics for the semiclassical Airy equation}

According to the WKB method we are looking for asymptotic solution
of the semiclassical Airy equation
\begin{equation}
\e^2 {{u^{\e}}^{\prime\prime}(x)}+xu^{\e}(x)=0\, ,\quad x\in R 
\nonumber
\end{equation}
in the form
$$
u^{\e}(x)=A(x)\, e^{iS(x)/\e}
$$
where $S(x)$ is a real-valued phase and $A(x)$ is the
complex-valued principal amplitude (from now on we drop the
subscript in the principle amplitude $A_0$), solving the eikonal
equation
\begin{equation}\label{airyeikonal}
(S^\prime (x))^2=x
\end{equation}
and transport equation $(\ref{transport})$, respectively.

 The Hamiltonian $(\ref{hamiltonian})$ is given by
\begin{equation}
H(x,k)=\frac12 \bigl(k^2 -x \bigr) \ ,
\end{equation}
and the corresponding Hamiltonian system $(\ref{hamsys})$ has the
simple form
\begin{eqnarray}\label{hamsysairy}
\frac{dx}{dt}=H_k=k \ ,\quad \frac{dk}{dt}=-H_x=\frac{1}{2} \ .
\end{eqnarray}

We assume that the rays are launched from a point source at
$x=x_0$, where $x_0>0$, therefore $(\ref{hamsysairy})$ must
satisfy the initial conditions
\begin{equation}
x(0)=x_0 \ , \quad k(0)=k_0 \ .
\end{equation}

Solving the system $(\ref{hamsysairy})$ with the above initial
conditions we obtain the bicharacteristics
\begin{eqnarray}\label{airybich}
x(t;x_0,k_0)= t^2/4 +k_0 t +x_0 \ ,\quad k(t;x_0,k_0)=t/2 + k_0\ .
\end{eqnarray}

Now for constructing the rays, since $k_0$ must satisfy the
condition $H(x_0,k_0)=0$, we have $k_0 = \pm\, \sqrt{x_0}$ . The
positive sign ($k_0 = + \sqrt{x_0}$) corresponds to the ray
\begin{equation}\label{rayr}
x_{R}(t;x_0)= t^2/4 + t\sqrt{x_0}  +x_0
\end{equation}
moving from $x_0$ towards $x=+\infty$ , while the negative one
($k_0 = - \sqrt{x_0}$) to the ray
\begin{equation}\label{rayl}
x_{L}(t;x_0)= t^2/4 - t\sqrt{x_0}  +x_0
\end{equation}
moving from $x_0$ towards the turning point $x=0$ (which is the
caustic of the problem as we will see in the sequel).

The Jacobian of the right-moving ray
$x_R$ is
\begin{equation}\label{jacobr}
J_{R}(t;x_0)=\frac{\partial x_R}{\partial x_0}=1+ t/2\sqrt{x_0} \
,
\end{equation}
and it is always positive. However, the Jacobian of the left-moving ray $x_L$ ,
\begin{equation}\label{jacobl}
J_{L}(t;x_0)=\frac{\partial x_L}{\partial x_0}=1- t/2\sqrt{x_0} \
,
\end{equation}
vanishes for $t=t_c :=2\sqrt{x_0}$ corresponding to $x_L (t_c;
x_0)=0$ . Therefore, $x=0$ is the caustic for the left-moving ray
and it coincides with the turning point of the Airy equation.

We now observe that for any $x>x_0$ the equation $x_{R}(t;x_0)=x$
has the single solution $t_{R}=2(\sqrt{x} -\sqrt{x_0})$ , while for
$0<x<x_0$ the equation $x_{L}(t;x_0)=x$ has two solutions
\begin{eqnarray}\label{arrtimes}
t_{-}=2(\sqrt{x_0} -\sqrt{x}) \ ,\quad t_{+}=2(\sqrt{x_0}
+\sqrt{x}) \,
\end{eqnarray}
and the corresponding values of the Jacobian $J_L$ are
\begin{eqnarray}\label{arrjac}
J_{-}=\frac{\sqrt{x}}{\sqrt{x_0} }>0 \ ,\quad
J_{+}=-\frac{\sqrt{x}}{\sqrt{x_0} }<0 \ .
\end{eqnarray}

The arrival time $t_{-}$ and the Jacobian $J_{-}$ correspond to
the ray left-moving ray from the source, while $t_{+}$ and $J_{+}$
correspond to the ray reflected from the caustic $x=0$ .

Moreover, using the formula $(\ref{gphase})$ and imposing the
condition that the geometric phase of the rays emitted from the
source must vanish at the source point  (see Avila \& Keller
\cite{AK} for a detailed analysis of the geometrical optics with
point sources), we obtain the geometric phases
\begin{equation}\label{airyphases}
S_{\pm}(x)=\pm \frac{2}{3}x^{3/2} + \frac{2}{3}x_{0}^{3/2} \ , \ \
0<x<x_0 \ ,
\end{equation}
and
\begin{equation}
S_{R}(x)=\frac{2}{3}\bigl(\sqrt{x} -2\sqrt{x_0}\bigr)^{3} +
\frac{2}{3}x_{0}^{3/2} \ , \ \ x>x_0 \ .
\end{equation}

Note that $S_{R}(x_0)=0$ , $S_{-}(x_0)=0$ , that is the rays
emitted by the source satisfy the Avila-Keller condition, while
$S_{+}(x_0)=\frac{4}{3}x_{0}^{3/2}$ for the reflected ray and
$S_{+}(x=0)=S_{-}(x=0)=\frac{2}{3}x_{0}^{3/2}$ .

Concerning the geometry of the ray system, obviously the
bicharacteristics $(\ref{airybich})$ lie on the Lagrangian
manifold $\Lambda=\{(x,k): x=k^2\}$ since $k^2= (t/2 + k_0)^2=
x+(k_0^2 - x_0)=x$ , and for $k=S_{\pm , R}'(x)$ we have in fact
$H(x,k)=0$ (see Figure 2.3).
\newpage
\begin{figure}[htbp]
 \centering
  \includegraphics[scale=0.3]{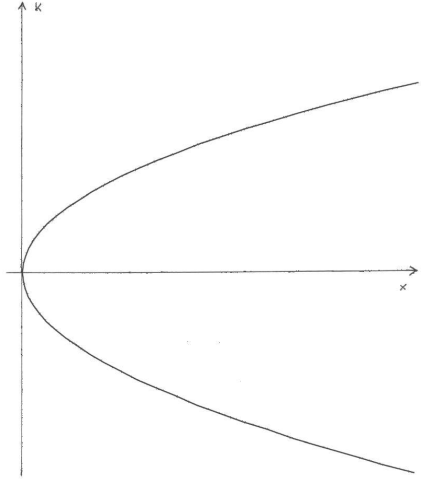}
    \caption{{\it Lagrangian manifold for the semiclassical Airy equation}}
\end{figure}

The principal amplitudes $A_{\pm}(x; x_0)$ of the WKB waves along
the rays in the region $0<x<x_0$ are found from
$(\ref{amplitude})$ using the corresponding values of the
Jacobian, and they are given by
\begin{eqnarray}\label{arrampl}
A_{-}=\frac{\alpha_0}{\sqrt{J_{-}} }= \frac{\alpha_0
x_0^{1/4}}{x^{1/4}} \ , \quad A_{+}=\frac{\alpha_0}{\sqrt{J_{+}}
}= -i\frac{\alpha_0 x_0^{1/4}}{ x^{1/4}} \ .
\end{eqnarray}
Then, the {\it multiphase WKB solution} in the region $0<x<x_0$ is
given by

\begin{eqnarray}\label{wkbairy}
u^{\e}_{WKB}(x) &=& A_{+}e^{iS_{+}(x)/\e} +
A_{-}e^{iS_{-}(x)/\e} \nonumber \\
&=& \alpha_0 x_0^{1/4}e^{i \frac{1}{\e}  \frac{2}{3}x_{0}^{3/2}}
\Bigl( -i x^{-1/4}e^{i \frac{1}{\e}  \frac{2}{3}x^{3/2}} +
x^{-1/4}e^{-i \frac{1}{\e}  \frac{2}{3}x^{3/2}} \Bigr) \ .
\end{eqnarray}
Here $\alpha_0$ is the {\it WKB amplitude of the wave at the
source} and it is equal to $\alpha_0 = e^{-1/4} x_0^{-1/2}/2$ .
This value follows from the asymptotics of the fundamental
solution of the semiclassical Airy equation which are presented in
Appendix A, in particular by comparing $(\ref {wkbairy})$ with
(\ref{wkbfundairy}). The same value would be derived by applying
the Avila-Keller technique for the WKB approximation of
fundamental solutions near the source point.

Finally the Kravtsov-Ludwig solution is found from the formula
$(\ref{klformula})$. To apply this formula we compute the
Kravtsov-Ludwig coordinates $(\ref{klphases})$ and the modified
amplitudes $(\ref{modamplitudes})$,
\begin{eqnarray}
\phi(x)&=&\frac12 \Bigl(S_{+}(x) + S_{-}(x)\Bigr)=
\frac{2}{3}x_{0}^{3/2}\\
\rho (x)&=&\Biggl[\frac34\biggl(S_{+}(x) +
S_{-}(x)\biggr)\Biggr]^{2/3}= x 
\end{eqnarray}
and
\begin{eqnarray}
g_0 (x)&=& \frac{1}{\sqrt 2}\, \rho^{1/4}(x)\Bigl(A_{+}(x) -i
A_{-}(x)
\Bigr)= -\frac{1}{\sqrt 2}\ e^{i\pi/4} x_0^{-1/4} \\
g_1(x)&=& \frac{1}{\sqrt 2}\, \rho^{1/4}(x)\Bigl(A_{+}(x) +i A_{-}(x)
\Bigr)=0 \ .
\end{eqnarray}

Then the {\it KL uniform solution} in the region $0<x<x_0$ is
given by
\begin{equation}\label{klairy}
u^{\e}_{KL}(x)=\pi^{1/2}e^{-i\pi/2}\Bigl(x_0^{-1/4}e^{i
\frac{1}{\e} \frac{2}{3}x_{0}^{3/2}}\Bigr)\e^{-1/6}
Ai\Biggl(-\frac{x}{\e^{2/3}}\Biggr) 
\end{equation}
and it coincides with the fundamental solution (\ref{innerairy})
derived in Appendix D.

%%%%%%%%%%%%%%%%%%%%%%%%%%%%%%%%%%%%%%%%%%%%%%%%%%%%%%%%%%%%%%%%%%%%%%%%%%%%
%%%%%%%%%%%%%%%%%%%%%%%%%%%%%%%%%%
%%%%%%%%%%%%%%%%%%%%%%%%%%%%%%%%%%%%%%%%%%%%%%%%%%%%%%%%%%%%%%%%%%%%%%%%%%%%
%%%%%%%%%%%%%%%%%%%%%%%%%%%%%%%%%%
\chapter{Wigner functions and its asymptotics}

\section{The Wigner transform and basic properties}

 For any smooth complex valued function $\psi(x)$ rapidly decaying at infinity, say
$\psi
\in \mathcal{S}(R)$ , the Wigner transform of $ \psi$ is a
quadratic transform defined by
\begin{equation}\label{wig}
W[\psi](x,k)=W(x,k)=\frac{1}{2\pi}\int_{R}{e}^{-ik
y}\, \psi(x+\frac{y}{2})\, \overline\psi(x-\frac{y}{2})\, dy
\end{equation}
where $\cpsi$ is the complex conjugate of $\psi$. The Wigner
transform is defined in phase space $R_{xk}$ , it is real, and it
has, among others, the following remarkable properties.

First, the integral of $W(x,k)$ wrt. $k$ gives the squared
amplitude (energy density) of $\psi$,
\begin{equation}
\int_{R}W(x,k) dk=|\psi(x)|^2 \ .
\end{equation}
In fact, we have
\begin{eqnarray}
\int_{R}W(x,k) dk&=&\frac{1}{2\pi}\int_{R} \int_{R}e^{-ik
y}\, \psi(x+\frac{y}{2})\, \cpsi(x-\frac{y}{2})\, dy\, dk \nonumber \\
&=&\int_{R}\left(\frac{1}{2\pi}\int_{R}e^{-ik
y}\, dk\right)\psi(x+\frac{y}{2})\, \overline{\psi}(x-\frac{y}{2})\, dy \nonumber \\
&=&\int_{R}\delta(y)\, \psi(x+\frac{y}{2})\, 
\overline{\psi}(x-\frac{y}{2})\, dy
\nonumber\\
&=&\psi(x)\, \overline{\psi}(x)=|\psi(x)|^2 .
\end{eqnarray}
where we have used the Fourier transform
$\delta(y)=\frac{1}{2\pi}\int_{R}e^{-iky}\, dk\, $ $\delta$ of Dirac's
measure.

Second, the first moment  of $W(x,k)$ wrt. to $k$ gives the
(energy) flux of $\psi$,

\begin{equation}
\int_{R}kW(x,k)\, dk=\frac{1}{2i}\Bigl (\psi(x)\, \overline{\psi^{'}}(x)-
\cpsi(x)\, \psi^{'}(x)\Bigr )={\mathcal F}(x) \ .
\end{equation}
In fact, we have
\begin{eqnarray}
\int_{R}kW(x,k)\, dk&=&\int_{R}\left(\frac{1}{2\pi}\int_{R} k
e^{-iky}\, dk
\right)\psi(x+\frac{y}{2}) \, \cpsi(x-\frac{y}{2})\, dy \nonumber\\
&=&-\frac{1}{i}\int_{R}\delta^{'}(y)\, \psi(x+\frac{y}{2})\, 
\cpsi(x-\frac{y}{2})\, dy \nonumber\\
&=&\frac{1}{i}\int_{R}\delta(y)\left(\frac12\psi^{'}(x+\frac y2)\, 
\cpsi(x-\frac{y}{2})-\frac12\overline{\psi^{'}}(x-\frac
y2)\, \psi(x+\frac y2)\right)\, dy \nonumber\\
&=&\frac{1}{2i}\left(
\cpsi(x)\, \psi^{'}(x)-\psi(x)\, \overline{\psi^{'}}(x)\right)\ .
\end{eqnarray}
The $x$ to $k$ duality in phase space can be recognized using the
alternative definition
\begin{equation}\label{wigf}
W(x,k)=\int_{R}e^{ipx}\, \widehat{\psi}(-k-\frac
p2)\, \overline{\widehat{\psi}}(-k+\frac p2)\, dp \ ,
\end{equation}
where $\widehat{\psi}(k)=\frac{1}{2\pi}\int_{R} e^{ikz}\, \psi(z)\, dz$
denotes the Fourier transform of $\psi$.

In fact, the definitions $(\ref{wig})$ and $(\ref{wigf})$ are
equivalent, since we have
\begin{eqnarray}
W(x,k)&=&\frac{1}{2\pi}\int_{R}e^{-ik
y}\, \psi(x+\frac{y}{2}) \, \cpsi(x-\frac{y}{2})\, dy  \nonumber\\
&=&\frac{1}{2\pi}\int_{R}e^{-iky}\int_{R}
e^{-iz(x+\frac{y}{2})}\, \widehat{\psi}(z)\, dz \overline{\int_{R}
e^{-iw(x-\frac{y}{2})}\, \widehat{\psi}(w)\, dw}\, dy \nonumber\\
&=&\frac{1}{2\pi}\int_{R}e^{-iky}\int_{R}
e^{-iz(x+\frac{y}{2})}\, \widehat{\psi}(z)\, dz \int_{R}
e^{iw(x-\frac{y}{2})}\, \overline{\widehat{\psi}}(w)\, dw\, dy \nonumber\\
&=&\int_{R}\int_{R}\left(\frac{1}{2\pi}\int_{R}e^{-iy(k+\frac{z}{2}+\frac{w}
{2})}\, dy\right)
e^{-i(z-w)x}\, \widehat{\psi}(z)\, \overline{\widehat{\psi}}(w)\, dz\, dw \nonumber\\
&=&\int_{R}\int_{R}\delta(k+\frac{z}{2}+\frac{w}{2})\, e^{-i(z-w)x}\, \widehat{\psi}(z)\, 
\overline{\widehat{\psi}}(w)\, dz\, dw \nonumber\\
&=&2\int_{R}e^{-i2(k+z)x}\, \widehat{\psi}(z)\, \overline{\widehat{\psi}}(-2k-z)\, dz
\nonumber\\
&=&\int_{R}e^{ipx}\, \widehat{\psi}(-k-\frac{p}{2})\,\overline{\widehat{\psi}}(-k
+\frac{p}{2})\, dp\ .
\end{eqnarray}

As we have explained in the previous chapter, in the case of high
frequency wave propagation, it is useful to use WKB wave functions
of the form
\begin{equation}\label{wkbfunc}
\psi^{\e}(x)= A(x)\, e^{i{S(x)}/{\e}} \ ,
\end{equation}
where $S(x)$ is a real-valued and smooth phase, and $A(x)$ is a
real-valued and smooth amplitude of compact support or at least
rapidly decaying at infinity. The Wigner transform of
$\psi^{\e}(x)$ is the Wigner function
\begin{equation}
W(x,k): = W[\psi^{\e}](x,k) = \frac{1}{2\pi}
\int_{R}e^{-iky}\, e^{\frac{i}{\e}S(x+\frac y2)}\, A(x+\frac
y2)\, e^{-\frac{i}{\e}S(x-\frac y2)}\, \overline{A}(x-\frac y2)\, dy \ ,
\end{equation}
but $W(x,k)$ does not converge  to a nontrivial
limit, as $\e \rightarrow 0$ . However, it can be shown that the
rescaled version of $W(x,k)$ , that we call the scaled Wigner
transform of $\psi^{\e}$,
\begin{equation}
W^{\e}(x,k)=\frac{1}{\e}W\Bigl(x,\frac{k}{\e}\Bigr) \
\end{equation}
converges weakly as $\e\rightarrow 0$  to the limit Wigner
distribution \cite{PR}, \cite{LP}
\begin{equation}\label{diracwig}
W^{0}(x,k)=|A(x)|^2\frac{1}{2\pi}\int_{R}
e^{-i(k-S'(x))y}\, dy=|A(x)|^2\delta(k-S'(x)) \ ,
\end{equation}
which is a Dirac measure concentrated on the Lagrangian manifold
$k=S'(x)$ , associated with the phase of the WKB  wavefunction, and
it is the correct weak limit of $W^{\e}$ (see, e.g., Lions \& Paul
\cite{LP}).

Indeed, proceeding formally, we rewrite $W^{\e}$ in the form
$$
W^{\e}(x,k)=\frac{1}{2\pi}\int_{R}e^{-iky}\, A(x+\frac{\e y}{2}
)\, \overline{A}(x-\frac{\e y}{2})\, e^{\frac{i}{\e}\left[ S(x+\frac {\e
y}{2})-S(x-\frac{\e y}{2} )\right]}\, dy \ ,
$$
and we expand in Taylor series about $x$ both $A(x \pm \frac{\e
y}{2} )$ and $S(x \pm \frac{\e y}{2} )$ . Then, we have
\begin{eqnarray*}
A(x+\frac{\e y}{2} )\, \overline{A}(x-\frac{\e y}{2})
&=&\left(A(x)+\frac{\e}{2}yA^{'}(x)+\dots\right)\left(\overline{A}(x)-
\frac{\e}{2}y\overline{A}^{'}(x)+\dots\right)\\
&=&A(x)\, \overline{A}(x)+ O(\e)\\
&=&|A(x)|^2 + O(\e) \ ,
\end{eqnarray*}
and
\begin{eqnarray*}
 S(x+\frac {\e y}{2})-S(x-\frac{\e y}{2}
 )&=&\left(S(x)+\frac{\e}{2}yS^{'}(x)+\frac{\e^2}{8}y^2S^{''}(x)+\dots
 \right)\\
 &-&
\left(S(x)-\frac{\e}{2}yS^{'}(x)+\frac{\e^2}{8}y^2S^{''}(x)-\dots
 \right)\\
 &=&\e y S^{'}(x) +O(\e^3) \ .
\end{eqnarray*}
Retaining only terms of order $O(1)$ in $A$ and $O(y)$ in $S$, and
integrating the expansion termwise we obtain that $W^{\e}(x,k)$
``converges" to $(\ref{diracwig})$.

More  precisely, if $Q$ is any test function in $\mathcal
S(R^2_{xk})$ , then
$$
\int_{R} \int_{R}Q(x,k)\, W^{\e}(x,k)\, dx\, dk\rightarrow
\int_{R}Q(x,S^{'}(x))\, |A(x)|^2dx \ .
$$
The above observations suggest that the scaled Wigner transform
\begin{eqnarray}\label{wige}
W^{\e}(x,k)&=&\frac{1}{\e}W\Bigl(x,\frac{k}{\e}\Bigr) \nonumber\\
&=&\frac{1}{2\pi}\int_{R}e^{-iky}\, \psi^{\e}(x+\frac{\e y}{2})\, 
\overline \psi^{\e}(x-\frac{\e y}{2})\, dy  \ ,
\end{eqnarray}
is the correct phase-space object for analyzing high frequency
waves.

%%%%%%%%%%%%%%%%%%%%%%%%%%%%%%%%%%%%%%%%%%%%%%%%%%%%%%%%%%%%%%%%%%%%%%%%%%%%
%%%%%%%%%%%%%%%%%%%%
%%%%%%%%%%%%%%%%%%%%%%%%%%%%%%%%%%%%%%%%%%%%%%%%%%%%%%%%%%%%%%%%%%%%%%%%%%%%
%%%%%%%%%%%%%%%%%%%%%

\section{Asymptotics of the Wigner function for a WKB wave function}

Consider now the  (scaled) Wigner function
\begin{eqnarray}
\we(x,k)=\frac{1}{\pi\e}\int_{R}^{}\ \psi^{\e}(x+\si)\, 
\overline{\psi}^{\e}(x-\si)\, \ e^{-\frac{i}{\e}2k\si}\, d\si
\end{eqnarray}
of the WKB wave function
\begin{equation}
\psi^{\e}(x)=A(x)\, e^{iS(x)/\e}
\end{equation}
where we assume that $A$, $S$ are smooth and real-valued, and $S'(x)$
is globally concave.

We want to construct an asymptotic expansion of $W^{\e}(x,k)$,
that is the oscillatory integral
\begin{equation}\label{wigwkb}
W^{\e}(x,k)=\frac{1}{\pi\e}\int_{R}^{}D(\si,x)\, 
e^{i\frac{1}{\e}F(\si,x,k)}\, d\si \ ,
\end{equation}
where
\begin{equation}
D(\si,x)=A(x+\si)A(x-\si)
\end{equation}
is the amplitude, and
\begin{equation}\label{wigphase}
F(\si,x,k)=S(x+\si)-S(x-\si)-2k\si
\end{equation}
is the Wigner phase. Asymptotics of such integrals are usually
constructed by applying the method of stationary phase.

For this purpose, we first compute the critical points of the
phase $F(\si;x,k)$, that is the roots of
\begin{equation}\label{alpha}
F_{\si}(\si;x,k)=S'(x+\si)+S'(x-\si)-2k=0
\end{equation}

\begin{figure}[ht]
\centering
\includegraphics[width=0.6\textwidth]{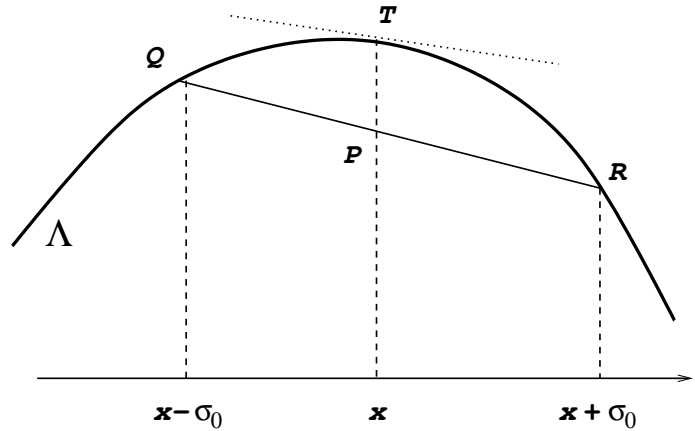}
\caption{{\it Berry's chord}}
\end{figure}

By the geometrical picture of Figure 3.1 (Berry's chord
construction; see the seminal paper by Berry \cite{Ber}), we
conclude that $(\ref{alpha})$ has a pair of symmetric roots $\pm
\si_{0}(x,k)$ such that the point $P=(x,k)$ be the middle of a
chord $QR$ having its ends on the Lagrangian (``manifold") curve
$\Lambda =\{k=S'(x)\}$ .

We observe that as $P$ moves toward $\Lambda$ the chord $QR$ tends
to the tangent of $\Lambda$ and $\si_{0}(x,k)\rightarrow 0$ .
Therefore, the two critical points of $(\ref{alpha})$ tend to
coalesce to the double point $\si=0$ as $P$ moves towards
$\Lambda$.

In fact, we have
\begin{equation}
F_{\si\si}(\si,x,k)=S''(x+\si)-S''(x-\si)
\end{equation}
and
\begin{equation}
F_{\si\si\si}(\si,x,k)=S'''(x+\si)+S'''(x-\si)
\end{equation}
and therefore
\begin{equation}
F_{\si\si}(\si=0,x,k)=0\, , \quad F_{\si\si\si}(\si=0,x,k)=2S'''(x)\not=0
\end{equation}
which assert that $\si=0$ is a double stationary point of $F$.

We would like therefore to apply a uniform stationary formula like
that derived in Appendix C which holds even when the stationary
points coalesce. For this we need to identify the parameter
$\alpha$ of the uniform stationary formula, which controls the
distance between the stationary points of the Wigner phase. In
order to do this we expand F about $\si=0$ , 
\begin{eqnarray*}
F(\si;x,k)&=& S(x)+\si S'(x)+\frac{\si^2}{2}S''(x)+\frac{\si^3}{6}
S'''(x)+...\\
&-& \biggl(S(x)-\si
S'(x)+\frac{\si^2}{2}S''(x)-\frac{\si^3}{6}S'''(x)+...\biggr)-2k\si\\
&=&-2(k-S'(x))\si+\frac{1}{3}S'''(x)\si^3+O(\si^5)\ .
\end{eqnarray*}

It becomes evident that for $P$ lying close enough to $\Lambda$, 
the parameter $\alpha$ has to be identified as
\begin{equation}
\alpha=\alpha(x,k):=k-S'(x)\ .
\end{equation}
Then, for any fixed $x$ we rewrite the Wigner phase $F$ in the
form
\begin{eqnarray}
F(\si;\alpha,x)&=& S(x+\si)-S(x-\si)-2\si(\alpha+S'(x))\nonumber\\
&=& \Bigl (S(x+\si)-S(x-\si)-2\si S'(x)\Bigr )-2\si\alpha \ ,
\end{eqnarray}
and we have
\begin{equation}
F_{\si\si}(\si=0;x,k)=0\, , \quad  F_{\si\si\si}(\si=0;x,k)=2S'''(x)\, , \quad
F_{\si\alpha}(\si=0;\alpha,x)=-2 \ .
\end{equation}
These are exactly the conditions in Appendix C which are needed
for applying the uniform asymptotic formula.

Then, the asymptotic formula of Appendix C  ($B_0$ vanishes since $D$ is even wrt
$\si$) gives the following
approximation of the Wigner function
\begin{eqnarray}\label{fullwigairy}
W^{\e}(x,k) \approx 2\pi A_{0}(x,k) {\e}^{2/3} Ai\Bigl( - \e^{2/3} \xi(x,k)\Bigr)
\end{eqnarray}
where
\begin{equation}\label{fulldelta}
A_{0}(x,k)=2^{-1/2}\xi^{1/4}\frac{2D(\si_{o}(x,k),x)}{\mid
F_{\si\si}(\si_{0}(x,k);x,k)\mid ^{1/2}} \ ,
\end{equation}
\begin{eqnarray}\label{fullzeta}
\xi(x,k) =  \Biggl[ \frac32 \Bigl(S(x+ \si_0) -S(x-\si_0) -2k\si_0
\Bigr)\Biggr]^{2/3} 
\end{eqnarray}
and
\begin{eqnarray}\label{gamma}
F_{\si \si}(\si=\si_{0};x,k)=S''(x+\si_0)-S''(x-\si_0)<0 \ .
\end{eqnarray}

We now further approximate  the various quantities entering
$(\ref{fullwigairy})$ as $\alpha\rightarrow 0$ . First of all, in
this approximation
\begin{eqnarray}\label{zeta}
\xi &\approx
&-F_{\si\alpha}\Biggl(\frac{F_{\si\si\si}}{2}\Biggr)^{-1/3}\alpha\nonumber\\
&=& 2(k-S'(x))\Biggl(\frac{2S'''(x)}{2}\Biggr)^{-1/3}\nonumber\\
&=&2(S'''(x))^{-1/3}(k-S'(x))\ .
\end{eqnarray}
Furthermore we approximate
\begin{equation}\label{beta} F_{\si
\si}(\si=\si_{0};x,k)=S''(x+\si_0)-S''(x-\si_0)\approx
2\si_0(x,k)S'''(x)
\end{equation}
and since $\si_{0}$ is approximated by
\begin{eqnarray}
\si_{0}&\approx &(-2F_{\si\si\si}\ F_{\si\alpha}\
\alpha)^{1/2}(F_{\si\si\si})^{-1}\nonumber\\
&= &(-2\cdot  2 S'''(x)(-2)(k-S'(x))^{1/2}(2S'''(x))^{-1}\nonumber\\
&=& \Biggl(\ \frac{2}{S'''(x)}(k-S'(x))\ \Biggr)^{1/2} 
\end{eqnarray}
when $\alpha\rightarrow 0$ , we have
\begin{equation}
F_{\si\si}(\si_{0},x,k)\approx 2\si_{0}S'''(x) =2 \Biggl[\
\frac{2(k-S'(x))}{S'''(x)}\ \Biggr]^{1/2}S'''(x) \ ,
\end{equation}
and then
\begin{equation}
\frac{\xi^{1/4}}{\mid F_{\si\si}\mid ^{1/2}} =\Biggl(\frac{2}{\mid
S'''(x)\mid} \Biggr)^{1/3}\frac{1}{2^{1/2+1/3}} \quad  .
\end{equation}

Moreover, since $D(\si_{o};x)=D(-\si_{0};x)$ , it follows
\begin{equation}\label{epsilon}
B_{0}=0
\end{equation}
and using the approximation $(\ref{zeta})$ of $\xi$, we have
\begin{equation}\label{delta}
A_{0}=\frac{1}{2^{1/3}}\Biggl(\frac{2}{\mid S'''(x)\mid}
\Biggr)^{1/3}D(\si_{0}(x,k),x)\ .
\end{equation}

Finally, using $(\ref{fullwigairy})$ with
$$F(\si=0;x,k)=0\ ,\quad  B_{0}=0$$
and $\xi$, $A_{0}$ given by ${(\ref{zeta})}$, ${(\ref{delta})}$,
respectively, we arrive to the Airy-type approximation of
${(\ref{wigwkb})}$,
\begin{eqnarray}\label{sclwigairy}
\tw(x,k)&\approx& \frac{2^{2/3}}{\e^{2/3}}\Biggl(\frac{2}{\mid
S'''(x)\mid} \Biggr)^{1/3}D(\si_{o}(x,k),x)\cdot \nonumber \\
&\cdot& Ai\Biggl(-\frac{2^{2/3}}{\e^{2/3}}\Biggl(\frac{2}{\mid
S'''(x)\mid} \Biggr)^{1/3}(k-S'(x))\Biggr)
\end{eqnarray}
which is Berry's semiclassical approximation of $W^{\e}$, and we
call it the semiclassical Wigner function (of the WKB function).
Note that $(\ref{sclwigairy})$ is an approximation of
$(\ref{fullwigairy})$ which holds locally near the manifold
($\alpha=k-S'(x)$ very small).

%%%%%%%%%%%%%%%%%%%%%%%%%%%%%%%%%%%%%%%%%%%%%%%%%%%%%%%%%%%%%%%%%%%%%%%%%%%%%%%%%%%%%%%%%%%%%%%%%%%%%%
%%%%%%%%%%%%%%%%%%%%%%%%%%%%%%%%%%%%%%%%%%%%%%%%%%%%%%%%%%%%%%%%%%%%%%%%%%%%%%%%%%%%%%%%%%%%%%%%%%%%%%%%

\chapter{Wignerization of two-phase WKB solutions}

In this chapter we study the structure of the Wigner transform of
wave functions whose high-frequency asymptotics are described by
two-phase WKB solutions, which is typical for wave fields around
fold caustics. In order to start understanding the related
asymptotic mechanisms, we first investigate the Wigner transform
of the Airy function and its relation to the Wigner transform of
the WKB asymptotic solution of the semiclassical Airy equation.

More precisely, we first compute the exact Wigner transform
$\we_{Ai}$ (see $(\ref{wigfundairy})$ below) of the fundamental
solution
\begin{equation}\label{fundairy}
u^{\e}(x)= \pi^{1/2} e^{-i \pi/2} \Bigl(x_0^{-1/4}e^{i
\frac{1}{\e} \frac{2}{3}x_{0}^{3/2}} \Bigr) \e^{-1/6}
Ai\left(-\e^{-2/3}x \right) \ , \nonumber
\end{equation}
(cf eq. $(\ref{innerairy})$) of the semiclassical Airy equation.
Recall that the Kravtsov-Ludwig formula $(\ref{klairy})$ coincides
with $(\ref{fundairy})$ in this case.

In the sequel we compute the asymptotics of the Wigner transform
of the WKB approximation $(\ref{wkbairy})$ of the fundamental
solution, using the semiclassical Wigner function developed in
previous chapter, and we show that this approximation coincides
with the exact Wigner transform $(\ref{wigfundairy})$ below.

\section{Wigner transform of the Airy function}

We start with the integral representation
\begin{equation}\label{airy}
\pse(x):=Ai(-\epsilon^{-2/3}x)=\frac{1}{2\pi}\int_{R}^{ } e^{i(\frac{\rho^3}{3}-\epsilon^{-2/3}x\rho)}\,  d\rho \ ,
\end{equation}
of the Airy function.
 The scaled Wigner transform $\we$ of $\pse$ is given by
\begin{eqnarray}
\we[\pse](x,k)=\frac{1}{\pi\e}\int_{R}^{ }\
\pse(x+\si)\, \overline{\psi}^\e(x-\si) \, e^{-\frac{i}{\e}2k\si}\, d\si \,
\end{eqnarray}
and therefore, if we substitute $(\ref{airy})$ and we put
$\lambda=\epsilon^{-2/3}$, we have
\begin{eqnarray}
\we[\pse](x,k)&=&(2\pi)^{-3}
\int_{R}^{ }\int_{R}^{ }e^{i \frac13 (\rho^3
- \sigma^3)} \, e^{-i\lambda x (\rho-\sigma)}
\int_{R}^{ }e^{-i\left(k+\lambda
\e(\rho+\sigma)/2\right)\tau }\, d\tau \, d\rho\,  d\sigma
\nonumber\\
&=&(2\pi)^{-2}\, 2\e^{-1/3}\int_{R}^{ }\int_{R}^{ }e^{i
\frac13 (\rho^3 - \sigma^3)}\, e^{-i\lambda x
(\rho-\sigma)}\, \delta\left(\rho+\sigma+2k\e^{-1/3}\right)\, d\rho \, 
d\sigma \nonumber \ ,
\end{eqnarray}
where Dirac's mass is expressed through the Fourier transform
\begin{equation}
\delta(z)=\frac{1}{2\pi}\int_{R}^{ }e^{-iz\tau}\, d\tau \
.
\end{equation}

On the support of  Dirac's mass
$\sigma=-(\rho+2\epsilon^{-1/3}k)$, and $\rho-\sigma=2(\rho+u)$
with $u=\epsilon^{-1/3}k$ , so we have
$$
\rho^3 - \sigma^3= 2\rho^3 + 6u\rho^2 +12 u^2 \rho +8u^3 \ .
$$
After some straightforward algebra, and by the linear change
$\rho=2^{-1/3}r -u$ , we obtain

\begin{equation}
\we[\pse](x,k)=\frac{2^{2/3}}{2\pi}\, 
\e^{-1/3}\int_{R}^{ } e^{i\Bigl(\frac{\tau^3}{3}
+ 2^{-1/3}\left(2u^2-2\lambda x\right)\tau \Bigr)}\, d\tau\
,
\end{equation}
which by the integral representation ($\ref{airy}$) gives the
Wigner transform of the Airy function
\begin{equation}
\we[\pse](x,k)=\frac{1}{2^{1/3} \e^{1/3} \pi }\, 
 Ai\Bigl(2^{2/3}\epsilon^{-2/3}(k^2-x)\Bigr)\ . 
\end{equation}

Then, the Wigner transform of the fundamental solution
$(\ref{fundairy})$ is given by

\begin{equation}\label{wigfundairy}
\we_{Ai}(x,k):=\we[u^\e](x,k)=\frac{1}{2^{1/3}
\e^{2/3}}\, x_{0}^{-1/2}
 Ai\Bigl(2^{2/3}\epsilon^{-2/3}(k^2-x)\Bigr) \ .
\end{equation}

By employing the asympotics of the Airy function we see that in the
interior of the
Lagrangian manifold $x>k^2$, $\we[u^\e]$ oscillates at the scale $\e$,
\begin{equation}
\we[u^\e](x,k) \approx
\frac{1}{\sqrt{2\pi}}\ \epsilon^{-1/2}x_{0}^{-1/2}{(x-k^2)}^{-1/4}
\cos{\Biggl(\frac{4}{3\e}(x-k^2)^{3/2}-\frac{\pi}{4}\Biggr)} \ ,
\end{equation}
while at the exterior of the manifold (which is connected to the shadow
region)
$x<k^2$, $\we[u^\e]$ decays exponentially
\begin{equation}
\we[u^\e](x,k) \approx
\frac{1}{2^{3/2}\sqrt{\pi}}\ \epsilon^{-1/2}x_{0}^{-1/2}{(k^2-x)}^{-1/4}
e^{-\frac{4}{3\e}(k^2-x)^{3/2}} \ .
\end{equation}

This asymptotic picture suggests the existence of a transition boundary
layer with
thickness $O(\e^{2/3})$ around the Lagrangian manifold $x=k^2$ , inside
which the
wave field is described by Airy structure  and where most of the energy of
the wave
field is concentrated.

The weak limit of $\we[u^\e]$ as $\epsilon\rightarrow 0$ , is computed by
the formula
\begin{equation}
\frac{1}{\e}F\left(\frac{z}{\e}\right) \ \rightarrow \delta(z)\int_{R}^{
}F(y)\, dy \ , \ \ \epsilon\rightarrow 0 \ ,
\end{equation}
and, since
$$
\int_{R}^{ }Ai(y)dy =1 \ ,
$$
it is given by
\begin{eqnarray}
\weo(x,k)=x_{0}^{-1/2}2^{-1/3}\delta\left(2^{2/3}(k^2-x)\right)=
\frac{1}{2 x_{0}^{1/2}}\, \delta(k^2-x)\ .
\end{eqnarray}

Note that in the illuminated zone $x>0$ ,
\begin{equation}
\weo(x,k)=\frac{1}{2 x_{0}^{1/2}}\, 
\delta(k^2-x)=\frac{1}{4 x_{0}^{1/2}x^{1/2}}
\Bigl(\delta(k-x^{1/2}) +
\delta(k+x^{1/2})   \Bigr) \ ,
\end{equation}
that is, $\weo$ splits to two Dirac masses supported on the branches $k=\pm\, 
x^{1/2}$ . This splitting fails on the caustic $x=0$ , while in the shadow
zone $x<0$
the limit Wigner $\weo$ is weakly zero since $k^2-x >0$ .

On the other hand we can compute the limit Wigner distribution of the
(two-phase)
WKB expansion of the fundamental solution $(\ref{wkbairy})$ of the
semiclassical
Airy equation (cf also (\ref{wkbfundairy})),
\begin{eqnarray*}
u^{\e}_{WKB}(x) &=& A_{+}(x)\, e^{iS_{+}(x)/\e} +
A_{-}(x)\, e^{iS_{-}(x)/\e} \nonumber \\
&=& \alpha_0 x_0^{1/4}\, e^{i \frac{1}{\e}  \frac{2}{3}x_{0}^{3/2}}
\Bigl( -i x^{-1/4}e^{i \frac{1}{\e}  \frac{2}{3}x^{3/2}} +
x^{-1/4}\, e^{-i \frac{1}{\e}  \frac{2}{3}x^{3/2}} \Bigr) \ ,
\end{eqnarray*}
By $(\ref{diracwig})$ we have the weak limits
$$
\we\Biggl[A_{\pm}(x)e^{iS_{\pm}(x)/\e} \Biggr] \ \rightarrow \ {\mid
A_{\pm}(x)
\mid}^{2} \delta(k-S'_{\pm}(x)) \ , \quad \e \rightarrow 0 \ .
$$
Moreover, the cross Wigner transform
\begin{eqnarray}
&\we\biggl[A_{+}(x)\, e^{iS_{+}(x)}, A_{-}(x)\, e^{iS_{-}(x)}\biggr]= \nonumber \\
&=(\pi\e)^{-1} \int_{R}^{}A_{+}(x+ \si)\, \overline{
A}_{-}(x-\si)\, e^{i\frac{1}{\e}\bigl(S_{+}(x+\si)
-S_{+}(x-\si)-2k\si\bigr)}\, d\si \ ,
\end{eqnarray}
converges weakly to zero, since following the same reasoning as for proving
$(\ref{diracwig})$ by expanding the phase and  the amplitude in Taylor series wrt. $\si$, we
get in
front of the integral the oscillatory term
$$
e^{i{\frac{1}{\e}}\bigl(S_{\pm}(x)-S_{\mp}(x)\bigr)} \ ,
$$
which weakly tends to zero as $\e \rightarrow 0$, for $S'_{+}(x)\neq - S'_{-}(x)$ .

It then follow that
\begin{eqnarray}\label{wignerwkb}
\we[u^{\e}_{WKB}](x,k) \ \rightarrow {\mid A_{+}(x) \mid}^{2}
\delta(k-S'_{+}(x)) +{\mid A_{-}(x) \mid}^{2}
\delta(k-S'_{-}(x)) \nonumber \\
=\frac{1}{4 x_{0}^{1/2}x^{1/2}}
\biggl(\delta(k-x^{1/2})  +
\delta(k+x^{1/2})   \biggr) \ ,
\end{eqnarray}
and, as it is anticipated for $\e \rightarrow 0$ , we derive that
\begin{equation}
\we[u^{\e}_{WKB}](x,k) \rightarrow  \weo(x,k)\ .
\end{equation}

%%%%%%%%%%%%%%%%%%%%%%%%%%%%%%%%%%%%%%%%%%%%%%%%%%%%%%%%%%%%%%%%%%%%%%%%%%%%%%%%%%
\section{Wigner transform of the WKB expansion for the Airy
equation}

We have already constructed the WKB solution of the Airy equation
in the form
\begin{equation}
u^{\epsilon}_{WKB}(x)=A_{+}(x)\,
e^{\frac{i}{\epsilon}S_{+}(x)}+A_{-}(x)\,
e^{\frac{i}{\epsilon}S_{-}(x)}\ ,
\end{equation}
where the phases $S_{\pm}$ and the amplitudes $A_{\pm}$ are given
by $(\ref{airyphases})$ and $(\ref{arrampl})$,
$$
S_{\pm }(x)=\pm \frac{2}{3}x^{3/2}+\frac{2}{3}{x_{0}}^{3/2} \ ,
$$
and
$$
A_{+}(x)=(-i)\frac{1}{2}x^{-1/4}e^{-i\pi/4}{x_0}^{-1/4} \ , \ \
A_{-}(x)=\frac{1}{2}x^{-1/4}e^{-i\pi/4}{x_0}^{-1/4} \ .
$$

The scaled Wigner transform of $u^{\epsilon}_{WKB}$ is
 given by
\begin{equation}
W_{WKB}^\e(x,k)=\frac{1}{\pi\e}\sum_{\ell=1}^{4}\int_{R}{}
D_{\ell}(\sigma;x)\ e^{\frac{i}{\e}F_{\ell}(\sigma;x,k)}\, d\sigma=
\sum_{\ell=1}^{4}W^\e_{\ell}(x,k)
\end{equation}
where
\begin{equation}
W^\e_{\ell}(x,k)=\int_{R}^{} D_{\ell}(\sigma;x)\,
e^{\frac{i}{\e}F_{\ell}(\sigma;x,k)}\, d\sigma \, , \quad \ell=1,..,4 \
.
\end{equation}
The amplitudes and phases of the above four Wigner integrals are
given by
\begin{eqnarray}
D_1(\sigma;x)&=&A_{+}(x+\sigma)\overline{A}_{+}(x-\sigma) \nonumber \\
D_2(\sigma;x)&=&A_{-}(x+\sigma)\overline{A}_{-}(x-\sigma)  \nonumber \\
D_3(\sigma;x)&=&A_{+}(x+\sigma)\overline{A}_{-}(x-\sigma) \nonumber \\
D_4(\sigma;x)&=&A_{-}(x+\sigma)\overline{A}_{+}(x-\sigma) 
\end{eqnarray}
and
 \begin{eqnarray}
F_1(\sigma;x,k)&=&S_{+}(x+\sigma)-S_{+}(x-\sigma)-2k\sigma\\
F_2(\sigma;x,k)&=&S_{-}(x+\sigma)-S_{-}(x-\sigma)-2k\sigma \\
F_3(\sigma;x,k)&=&S_{+}(x+\sigma)-S_{-}(x-\sigma)-2k\sigma  \\
F_4(\sigma;x,k)&=&S_{-}(x+\sigma)-S_{+}(x-\sigma)-2k\sigma \ .
\end{eqnarray}

In the sequel we compute the stationary-phase asymptotic
expansions of the Wigner integrals $W^\e_{\ell}$ , using either the
standard or the uniform formula according to the structure of the
stationary points in each case.

%%%%%%%%%%%%%%%%%%%%%%%%%%%%%%%%%%%%%%%%%%%%%%%%%%%%%%%%%%
%%%%%%%%%%%%%%%%%%%%%%%%%%%%%%%%%%%%%%%%%%%%%%%%%%%%%%%%%%

\subsection{Stationary points of the Wigner phases}

In the sequel we compute the stationary points of the Wigner
phases in the illuminated area $x>0$ , since the real-valued phases
$S_{\pm}(x)$ of the WKB solution have been computed only in the
illuminated region. It turns out that all real stationary points,
which give the main asymptotic contribution to the Wigner
integrals, lie in the area $|\sigma|<x$ in which the Wigner phases
are real. Outside this area, the stationary points are imaginary
and their contribution to the Wigner integrals is exponentially
small.

{\bf {Stationary points of the Wigner phase $F_1(\sigma;x,k)$ .}}
The critical points of $F_1(\sigma;x,k)$ are given by
\begin{equation}\label{symeq}
 {F_1}
_{\sigma}(\sigma;x,k)=S_{+}^{\prime}(x+\sigma)+S_{+}^{\prime}(x-\sigma)-2k=0
\ ,
\end{equation}
that is
\begin{equation}\label{stateqa}
(x+\sigma)^{1/2}+(x-\sigma)^{1/2}-2k=0 \ .
\end{equation}

For $k<0$ we see that the phase $F_1(\sigma;x,k)$ has not critical
points since $(x \pm \sigma)^{1/2}>0$ , while for $k>0$ we can see
that the roots of $(\ref{symeq})$ appear in symmetric pairs. In
fact, if we set $\sigma=\sigma_R+i\sigma_I$ and substitute into
(\ref{stateqa}) we have
\begin{equation}\label{complexeq}
(\sigma_R-2k^2)^2-{\sigma_I}^2+2i\sigma_I(\sigma_R-2k^2)=4k^2(x-\sigma_R)-4k^2\sigma_Ii
\ ,
\end{equation}
and equating the real and imaginary parts to zero, we obtain
\begin{equation}\label{stsya}
(\sigma_R-2k^2)^2-{\sigma_I}^2-4k^2(x-\sigma_R)=0
\end{equation}
\begin{equation}\label{stsyb}
\sigma_I\sigma_R=0 \ .
\end{equation}
Thus we must consider the following cases.

\noindent
 {\bf{Case 1 :}} $\sigma_I=0$ . Then,
$\sigma=\sigma_R\in R$ and $(\ref{stsya})$ gives
\begin{equation}
(x^2-\sigma^2)^{1/2}= 2k^2-x \ ,
\end{equation}
which implies the restriction $x\le 2k^2$ . From the last equation
we find that the real critical points of $F_1$ are
\begin{equation}\label{stpoint}
\sigma(x,k)=\pm 2|k|(x-k^2)^{1/2}=:\pm \sigma_0(x,k)\ .
\end{equation}
Therefore, in the region $k^2<x\leq 2k^2$   $F_1$  has two real
stationary points, the $\pm \sigma_0(x,k)=\pm 2|k|(x-k^2)^{1/2}$ ,
which coalesce to $\sigma(x,k)=0$ on the upper branch of the
Lagrangian manifold $x=k^2$ . Now since
\begin{equation}
{F_1} _{\sigma\sigma}(\sigma=\pm
\sigma_0;x,k)=\frac{1}{2}\frac{(x\mp \sigma_0)^{1/2}-(x\pm
\sigma_0)^{1/2}}{(x^2-\sigma_0^2)^{1/2}}\not=0 \ ,
\end{equation}
and
\begin{equation}
{F_1} _{\sigma\sigma}(\sigma=0;x,k)=0 \ , \quad {F_1}
_{\sigma\sigma\sigma}(\sigma=0;x,k)=-\frac{1}{2}x^{-3/2}\not=0 \ ,
\end{equation}
it turns out that the points $\pm \sigma_0(x,k)$ are simple and
the point $\sigma(x,k)=0$  formatted by the coalescence of $\pm
\sigma_0(x,k)$ is double. In fact, by Berry's chord construction
we see that, as we move toward the Lagrangian manifold
$\{(x,k):k=S_{+}^{\prime}(x)\}$ the chord tends to the tangent of
the manifold, and the critical points $\pm \sigma_{0}$ tend to the
double stationary point $\sigma=0$ .

\noindent
 {\bf{Case 2 :}} $\sigma_I\not=0$ . Then from (\ref{stsyb}) we have
$$
\sigma_I=\pm 2|k|(k^2-x)^{1/2} \ ,
$$
and therefore for $x<k^2$ $F_1$ has simple imaginary stationary
points $\sigma(x,k)=\pm 2|k|i(k^2-x)^{1/2}$ ,  since in this region
\begin{equation}
{F_1} _{\sigma\sigma}(\sigma=\pm i \sigma_0;x,k)\not=0 \ .
\nonumber
\end{equation}

\noindent {\bf{Stationary points of the Wigner phase
$F_2(\sigma;x,k)$ .}} The critical points of $F_2$ are given by the
equation
\begin{eqnarray*}
{F_2}
_{\sigma}(\sigma;x,k)=S_{-}^{\prime}(x+\sigma)+S_{-}^{\prime}(x-\sigma)-2k=0
\end{eqnarray*}
that is,
\begin{eqnarray}\label{stateqb}
(x+\sigma)^{1/2}+(x-\sigma)^{1/2}+2k=0 \ .
\end{eqnarray}

For $k>0$ the equation (\ref{stateqb}) has no solution since $(x
\pm \sigma)^{1/2}>0$ . For $k<0$ we set $\sigma=\sigma_R+i\sigma_I$
 into (\ref{stateqb}) and we obtain again the
system (\ref{stsya}), $(\ref{stsyb})$ . Therefore, the critical
points of $F_2$ are $\sigma(x,k)=\pm 2|k|(x-k^2)^{1/2}=\pm
\sigma_0$ , when $k^2\leq x\leq 2k^2$ and $\sigma(x,k)=\pm
2|k|i(k^2-x)^{1/2}=\pm i\sigma_0$ , when $x< k^2$ .

At any point $(x,k)$ in $k^2< x\leq 2k^2$ there exist two simple
stationary points, $\pm \sigma_0$. For fixed $(x,k)$ moving
towards $S_{-}^{\prime}(x)=-\sqrt{x}$ we see again that the
critical points coalesce to the double point $\sigma=0$ . Finally,
in the region $x<k^2$ we have two simple imaginary stationary
points, $\pm i \sigma_0$ , which again coalesce to $\sigma=0$ on
the lower branch of the Lagrangian manifold $x=k^2$ .

\noindent
 {\bf{Stationary points of the Wigner phase
$F_3(\sigma;x,k)$ .}} The critical points of the phase 
$F_3(\sigma;x,k)$ are given by the equation
\begin{eqnarray}
(x+\sigma)^{1/2}-(x-\sigma)^{1/2}-2k=0\ .
\end{eqnarray}

In this case we find that the solutions of the above equation are
$\sigma=\pm 2|k|(x-k^2)^{1/2}=\pm \sigma_0$ for $x \geq 2k^2$ , and
$\sigma=\pm 2|k|i(k^2-x)^{1/2}=\pm i\sigma_0$  for $x<k^2$ . By
geometrical considerations similar to Berry's chord construction,
we see that for fixed $(x,k)$ with $x>2k^2$ ,  the stationary point
is $\sigma=+\sigma_0$ if $k>0$, while if $k<0$ the stationary
point is $\sigma=-\sigma_0$ . These stationary points are always
simple since
\begin{equation}
{F_3} _{\sigma\sigma}(\sigma=\pm
\sigma_0;x,k)=\frac{1}{2}\frac{(x\mp \sigma_0)^{1/2}+(x\pm
\sigma_0)^{1/2}}{(x^2-\sigma_0^2)^{1/2}}\not=0 \ ,
\end{equation}
and
\begin{equation}
{F_3} _{\sigma\sigma}(\sigma=\pm i \sigma_0;x,k)=\frac{1}{2}\frac{(x\mp i
\sigma_0)^{1/2}+(x\pm i \sigma_0)^{1/2}}{(x^2+\sigma_0^2)^{1/2}}\not=0 \ .
\end{equation}

Note that ${F_3} _{\sigma\sigma}(\sigma=\pm \sigma_0;x,k)$ becomes
infinite for $x=2k^2$ . 

Finally, it is important to observe that in this case there are no
stationary points in the region $k^2< x< 2k^2$ .

\noindent
 {\bf{Stationary points of the Wigner phase
$F_4(\sigma;x,k)$ .}} In this case the critical points are
$\sigma=\pm \sigma_0$ when $x\geq 2k^2$  and $\sigma=\pm i \sigma_0$
when $x<k^2$ . Here, for fixed $x$ with $x>2k^2$ , the stationary
point is $\sigma=-\sigma_0$ for $k>0$ , and $\sigma=+\sigma_0$ for
$k<0$ . Again, the stationary points $\pm \sigma_0$ and $\pm i
\sigma_0$ are simple.

For easier consideration of the structure of the stationary
points, the results of the above computations are tabulated in the
following two tables, where we have set
$\sigma_0=\sigma_0(x,k)=2\ |k|\mid x-k^2\mid ^{1/2}$ .
\begin{center}
\begin{tabular}[!hbp]{|c|c|c|c|c|c|c|c|c|}\hline
$F_i,k$ & $(-\infty,-\sqrt{x})$ & $(-\sqrt{x},-\sqrt{x/2})$ &
$(-\sqrt{x/2},\sqrt{x/2})$ & $(\sqrt{x/2},\sqrt{x})$ & $(\sqrt{x},\infty)$
\\
\hline $F_1$ & no s.p. & no s.p. &no s.p.  & $\sigma=\pm \sigma_0$ &
$\sigma=\pm i\sigma_{0}$\\
 & & & &simples &simples \\
\hline $F_2$ & $\sigma=\pm i\sigma_{0} $ & $\sigma=\pm \sigma_0$ & no s.p.
& no s.p. & no s.p. \\
 &simples & simples  & & & \\
\hline $F_3$ & $\sigma=\pm i\sigma_{0}$& no s.p. & $\sigma=\pm \sigma_0$ &
no s.p. & $\sigma=\pm i\sigma_{0}$\\
 &simples&  &simple & &simples \\
\hline  $F_4$ &  $\sigma=\pm i\sigma_{0}$ & no s.p.  & $\sigma=\pm
\sigma_0$ &no s.p.& $\sigma=\pm i\sigma_{0}$\\
&simples &  &simple & &simples \\
\hline
\end{tabular}
\end{center}
\begin{table}[h]
\caption{\it{A. Stationary points between the parabolas} }
\end{table}

\begin{center}
\begin{tabular}[h]{|c|c|c|c|c|c|c|c|}\hline
$F_i,k$ & $k=-\sqrt{x}$ & $k=-\sqrt{x/2}$ & $k=\sqrt{x/2}$ & $k=\sqrt{x}$  \\
\hline $F_1$ & no s.p.& no s.p.& $\sigma=\pm \sigma_0$ & $\sigma=0$\\
& & & simples & double\\
\hline $F_2$ & $\sigma=0$ & $\sigma=\pm \sigma_0$  &  no s.p.& no s.p. \\
& double  & simples &  &  \\
\hline $F_3$ & no s.p.& $\sigma=-\sigma_0$& $\sigma=+\sigma_0$&  no s.p.\\
& & simple& simple&  \\
\hline  $F_4$ &  no s.p. & $\sigma=+\sigma_0$ & $\sigma=-\sigma_0$ & no s.p. \\
& & simple& simple& \\
\hline
\end{tabular}
\end{center}
\begin{table}[h]
\caption{\it{B. Stationary points on the parabolas}}
\end{table}

\subsection{Asymptotics of the  diagonal Wigner function}

For constructing the asymptotics of the diagonal Wigner functions
$W_1^{\epsilon}$ , $W_2^{\epsilon}$ , we first observe that the
asymptotic contribution to $W_1^{\epsilon}$ comes from the
stationary points in the region $x \leq 2k^2$ , $k>0$ , while the
contribution to $W_2^{\epsilon}$ comes from the stationary points
in the region $x \leq 2k^2$ , $k<0$ , since there are no stationary
point of the corresponding Wigner phases outside from the above
indicated regions, respectively.
\\
We therefore consider the following two regions.

{\bf{Region 1 :}} $k^2\leq x\leq 2k^2$

Because $\sigma =0$ is a double stationary point for the diagonal
Wigner phases on the Lagrangian manifold $x=k^2$ (see Tables 4.1,
4.2) we need to apply the uniform approximation
formula $(\ref{fullwigairy})$ of Section 3.2. .

For the integral $W_1^{\epsilon}(x,k)$  we choose the parameter
$\alpha=\alpha(x,k):=k-\sqrt{x}$ . Expanding $F_1$ in Taylor series
about $\sigma=0$ , we have
\begin{eqnarray}\label{taylorF1}
{F_1}(\sigma;\alpha,x)=-2\alpha \sigma-\frac{1}{12}x^{-3/2}\sigma^3
\end{eqnarray}
and ${F_1}_{\sigma\alpha}(\sigma=0,\alpha=0,x)=-2\neq 0$ , and we
easily see that all other conditions for the validity of
$(\ref{fullwigairy})$ also hold. For applying this formula, we need
to compute the unknowns $\xi$, $A_0$, $B_0$, and
$F_1(\sigma=0,\alpha=0,x)$ .

From $(\ref{fullzeta})$ we have
\begin{eqnarray*}
\xi(x,k)=\Biggl[\frac{3}{4}\biggl(F_{1}(\sigma_0)-F_{1}(-\sigma_0)\biggr )\Biggr]^{2/3}=
\Biggl[\frac{3}{4}\biggl (4\frac{2}{3}(x-k^2)^{3/2}\biggr )\Biggr ]^{2/3}=2^{2/3}(x-k^2)\
,
\end{eqnarray*}
and from (\ref{fulldelta}) we get
\begin{eqnarray*}
A_0=2^{7/6}
(x-k^2)^{1/4}(x^2-{\sigma_0}^2)^{-1/4}x^{3/4}{x_{0}}^{-1/2}{\sigma_0}^{-1/2}
\ .
\end{eqnarray*}
Since $(x^2-{\sigma_0}^2)^{1/2}=2k^2-x$ , we rewrite $A_0$ in the
form
\begin{eqnarray}
A_0=2^{-4/3}k^{-1/2}x^{3/4}{x_0}^{-1/2}(2k^2-x)^{-1/2} \ .
\end{eqnarray}
and from $(\ref{taylorF1})$ we also have
$F_1(\sigma=0,\alpha=0,x)=0$ .

Thus $(\ref{fullwigairy})$ leads to the following approximation
formula of $W_1^{\epsilon}$ for small $\epsilon$,
\begin{eqnarray}
W_1^{\epsilon}(x,k)\approx
\frac{1}{2\sqrt{x_0}}\biggl(\frac{2}{\epsilon}\biggr )^{2/3}
Ai\Biggl(\biggl(\frac{2}{\epsilon}\biggr)^{2/3}(k^2-x)\Biggr)
=:\widetilde{W}_{1}^{\epsilon}(x,k) \ .
\end{eqnarray}

Similarly for the integral $W_2^{\epsilon}(x,k)$ we choose the
parameter $\alpha:=k+\sqrt{x}$ and compute that,
\begin{eqnarray*}
\xi=2^{2/3}(x-k^2) \ ,\quad
A_0=2^{-4/3}(-k)^{-1/2}x^{3/4}{x_0}^{-1/2}(2k^2-x)^{-1/2}
\end{eqnarray*}
and
\begin{eqnarray*}
 F_2(\sigma=0,\alpha=0,x)=0
\end{eqnarray*}
then from (\ref{fullwigairy}) as $\epsilon\rightarrow 0$ we have,
\begin{eqnarray}
W_2^{\epsilon}(x,k)\approx
\frac{1}{2\sqrt{x_0}}\biggl(\frac{2}{\epsilon}\biggr)^{2/3}
Ai\Biggl(\biggl(\frac{2}{\epsilon}\biggr)^{2/3}(k^2-x)\Biggr)
=:\widetilde{W}_{2}^{\epsilon}(x,k)\ .
\end{eqnarray}

{\bf{Region 2 :}} $x\leq k^2$

In this region the imaginary stationary points $\sigma(x,k)=\pm
2|k|i(k^2-x)^{1/2}=\pm i \sigma_0(x,k)$ coalesce for $x=k^2$ to
the double point $\sigma=0$ .

For the integral $W_1^{\epsilon}(x,k)$ , as in the case of real
stationary points, we apply the uniform stationary formula
$(\ref{fullwigairy})$ for $k>0$ , with small parameter
$\alpha:=k-\sqrt{x}$ . In this case we have
\begin{eqnarray*}
\xi(x,k)=\Biggl[\frac{3}{4}\biggl (F_{1}(i\sigma_0)-F_{1}(-i\sigma_0)\biggr )\Biggr]^{2/3}=\Biggl[\frac{3}{4}
\biggl (-4i\frac{2}{3}(k^2-x)^{3/2}\biggr )\Biggr ]^{2/3}=-2^{2/3}(k^2-x)
\end{eqnarray*}
and
\begin{eqnarray*}
A_0=2^{-4/3}{x_0}^{-1/2}(-1)^{1/4}i^{-1/2} \ ,  \quad
F_1(\sigma=0,\alpha=0,x)=0
\end{eqnarray*}
Thus $(\ref{fullwigairy})$ gives the approximation
\begin{eqnarray}
W_1^{\epsilon}(x,k)\approx
\frac{1}{2\sqrt{x_0}}\biggl(\frac{2}{\epsilon}\biggr )^{2/3}
Ai\Biggl(\biggl(\frac{2}{\epsilon}\biggr)^{2/3}(k^2-x)\Biggr )
=:\widehat{W}_{1}^{\epsilon}(x,k)\ .
\end{eqnarray}
%\ddot
Similarly, for the integral $W_2^{\epsilon}(x,k)$ we choose the
parameter $\alpha:=k+\sqrt{x}$ , with $k<0$ and we have
\begin{eqnarray*}
\xi(x,k)=-2^{2/3}(k^2-x)
\end{eqnarray*}
and
\begin{eqnarray*}
A_0=2^{-4/3}{x_0}^{-1/2}(-1)^{1/4}i^{-1/2} \ , \quad
F_2(\sigma=0,\alpha=0,x)=0
\end{eqnarray*}
Thus $(\ref{fullwigairy})$ gives the approximation
\begin{eqnarray}\label{cwigner2}
W_2^{\epsilon}(x,k)\approx
\frac{1}{2\sqrt{x_0}}\biggl(\frac{2}{\epsilon}\biggr )^{2/3}
Ai\Biggl(\biggl(\frac{2}{\epsilon}\biggr)^{2/3}(k^2-x)\Biggr)
=:\widehat{W}_{2}^{\epsilon}(x,k) \ .
\end{eqnarray}

Note that the symbols $\widehat{W}_{1}^{\epsilon}$, 
$\widehat{W}_{2}^{\epsilon}$ denote the approximations of
${W_1}^{\epsilon}$ , ${W_2}^{\epsilon}$ in the region $x\leq
k^2$ , although they are formally the same with
$\widetilde{W}_{1}^{\epsilon}$, $\widetilde{W}_{2}^{\epsilon}$ which
denote the corresponding approximations in the region $k^2\leq
x\leq 2k^2$ with $k>0$ , $k<0$ , respectively.

\subsection{Asymptotics of the  non-diagonal Wigner functions}

Recall from Table 4.1 that the non-diagonal Wigner phases $F_3$,
$F_4$ have two simple real stationary points $\pm \sigma_0(x,k)$ in
the region $x\geq 2k^2$ , and also two imaginary stationary points $\pm
i \sigma_0(x,k)$ in the region $x<k^2$ . Recall also that these
phases have no stationary points in the intermediate region
$k^2\leq x\leq 2k^2$ .

We therefore consider the following two regions and we compute the
asymptotics of  $W_3^{\epsilon}$ , $W_4^{\epsilon}$ by applying
the standard stationary phase formula since the stationary points
are simple and never coalesce to a double point due to the lack of
stationary points in the intermediate region. The resulting
asymptotic formulae are

{\bf Region 1 :} $x\geq 2k^2$

\begin{eqnarray}
W_3^{\epsilon}(x,k)\approx-\frac{i{x_0}^{-1/2}}{2^{3/2}\pi^{1/2}\epsilon^{1/2}}\, (x-k^2)^{-1/4}e^{i\pi/4}e^{i\frac{4}{3\epsilon}(x-k^2)^{3/2}}
=:\widetilde{W}_{3}^{\epsilon}(x,k)
\end{eqnarray}
and
\begin{eqnarray}
W_4^{\epsilon}(x,k)\approx\frac{i{x_0}^{-1/2}}{2^{3/2}\pi^{1/2}\epsilon^{1/2}}\, (x-k^2)^{-1/4}e^{-i\pi/4}e^{-i\frac{4}{3\epsilon}(x-k^2)^{3/2}}
=:\widetilde{W}_{4}^{\epsilon}(x,k)\ .
\end{eqnarray}

It is important here to observe that in the region $x\geq 2k^2$  the
non-diagonal Wigner functions $W_3^{\epsilon}$ , $W_4^{\epsilon}$
are the asymptotic approximations of the expression
$$
\frac{1}{2\sqrt{x_0}}\biggl(\frac{2}{\epsilon}\biggr)^{2/3}
Ai\Biggl(\biggl(\frac{2}{\epsilon}\biggr)^{2/3}(k^2-x)\Biggr)
$$
and we can therefore substitute this expression in place of them.

{\bf Region 2 :} $x<k^2$

\begin{eqnarray}\label{cwigner3}
W_3^{\epsilon}(x,k)\approx\widehat{W}_{3}^{\epsilon}(x,k)
:=\frac{-i^{1/2}{x_0}^{-1/2}}{2^{5/2}\pi^{1/2}\epsilon^{1/2}}\, (x^2+{\sigma_0}^2)^{-1/4}\frac{1}
{\mid {F_3}_{\sigma\sigma}(i\sigma_0)\mid ^{1/2}}\cdot \nonumber\\
\cdot
\Bigl[e^{\frac{i}{\epsilon}F_{3}(i\sigma_0)+i\pi/2}+e^{\frac{i}{\epsilon}F_{3}(i\sigma_0)+i3\pi/2}
+e^{\frac{i}{\epsilon}F_{3}(-i\sigma_0)+i\pi/2}+
e^{\frac{i}{\epsilon}F_{3}(-i\sigma_0)+i3\pi/2}\Bigr] \ .
\end{eqnarray}
and
\begin{eqnarray}\label{cwigner4}
W_4^{\epsilon}(x,k)\approx\widehat{W}_{4}^{\epsilon}(x,k)
:=\frac{-i^{1/2}{x_0}^{-1/2}}{2^{5/2}\pi^{1/2}\epsilon^{1/2}}\, (x^2+{\sigma_0}^2)^{-1/4}\frac{1}
{\mid {F_4}_{\sigma\sigma}(i\sigma_0)\mid ^{1/2}}\cdot\nonumber\\
\cdot
\Bigl[e^{\frac{i}{\epsilon}F_{4}(i\sigma_0)+i\pi/2}+e^{\frac{i}{\epsilon}F_{4}(i\sigma_0)+i3\pi/2}
+e^{\frac{i}{\epsilon}F_{4}(-i\sigma_0)+i\pi/2}+
e^{\frac{i}{\epsilon}F_{4}(-i\sigma_{0})+i3\pi/2}\Bigr] \ .
\end{eqnarray}

From the last two equations we see that
\begin{equation}
W_3^{\epsilon}(x,k) + W_4^{\epsilon}(x,k)\approx\
\widehat{W}_{3}^{\epsilon}(x,k)+ \widehat{W}_{4}^{\epsilon}(x,k)=0 \ .
\end{equation}
This means that in the region $x<2k^2$ the contribution of the
non-diagonal Wigner functions is asymptotically negligible.

It is however important to stress here that the computation of the
Wigner function in the high-frequency regime through any direct
solution of the Wigner equation which will be presented in the
next chapter, would face severe difficulties because of the
(essentially cancelling) oscillations of the terms $W_3^{\epsilon}$ ,
$W_4^{\epsilon}$ which appear outside the Lagrangian manifold.

Therefore, combining the  asymptotic expansions derived in the
various regions we find that the leading order approximation of
the Wigner transform
 $W_{WKB}^\e(x,k)=W^\e[u^{\epsilon}_{WKB}](x,k)$ , where $u^{\epsilon}_{WKB}$ is the WKB
approximation of the fundamental solution of the semiclassical
Airy equation, is given for every $(x>0, k)$ by
\begin{equation}
W_{WKB}^\e(x,k)\approx \widetilde{W}_{WKB}^{\epsilon}(x,k)=
\frac{1}{2\sqrt{x_0}}\biggl(\frac{2}{\epsilon}\biggr)^{2/3}
Ai\Biggl(\biggl(\frac{2}{\epsilon}\biggr)^{2/3}(k^2-x)\Biggr) \ .
\end{equation}

We observe that this approximation coincides with  the exact
Wigner transform $(\ref{wigfundairy})$ of the fundamental solution
of the semiclassical Airy equation,
\begin{equation}
\widetilde{W}_{WKB}^{\epsilon}(x,k)\equiv W_{Ai}^\e(x,k) \ ,
\end{equation}
 and it therefore is meaningful on the caustic $x=0$ , in the sense
 that it can provide the correct amplitude of the wavefunction
 there.

In the sequel, for easy reference,  we collectively use the term
WKB-Wigner transform,  for the approximation $\widetilde{W}_{WKB}^\e$ .

%%%%%%%%%%%%%%%%%%%%%%%%%%%%%%%%%%%%%%%%%%%%%%%%%%%%%%%%%%%%%%%%%%%%%%%%%%%%%%%%%%%%%%%%%%%%%%%
\subsection{$k-$integration
of the asymptotics of the WKB-Wigner transform}

The amplitude $|\psi^{\epsilon}(x)|^2$ of a wavefunction, for any
fixed $x$, is given by the $k-$ integral of its Wigner function
\begin{equation}
|\psi^{\epsilon}(x)|^2 = \int_{R}^{} W^{\epsilon}(x,k)\, dk \nonumber
\end{equation}
and therefore, in general, for small $\epsilon$, we expect that
\begin{equation}
|u^{\epsilon}(x)|^2 \approx \int_{R}^{}\widetilde
W_{WKB}^{\epsilon}(x,k)\, dk \nonumber
\end{equation}
where $\widetilde W_{WKB}^{\epsilon}$ is the WKB-Wigner transform.

Of course, in our particular example  of the fundamental solution
$(\ref{fundairy})$  of the semiclassical Airy equation, it turns out
that the integration of the WKB-Wigner function gives the exact
amplitude of the fundamental solution, as the WKB-Wigner function
coincides with the exact one.

In fact, using the formula \cite{VS}
\begin{eqnarray}
\int_{-\infty}^{\infty}Ai(r_{1}k^2+r_{2}k+r_3)\
dk=\frac{2\pi}{\sqrt{r_{1}}}\frac{1}{2^{1/3}}\
{Ai}^2\Biggl(-\frac{r_{2}^2-4r_{1}r_{3}}{4^{4/3}r_{1}}\Biggr)\,  , \quad
r_1>0
\end{eqnarray}
with $r_1=(2/\epsilon)^{2/3}$ , $r_2=0$ , $r_3=-(2/\epsilon)^{2/3}
x$ , we obtain
\begin{equation}\label{amplfund}
|u^{\epsilon}(x)|^2 =\int_{R}^{}
\frac{1}{2\sqrt{x_0}}\biggl(\frac{2}{\epsilon}\biggr)^{2/3}
Ai\Biggl(\biggl(\frac{2}{\epsilon}\biggr)^{2/3}(k^2-x)\Biggr)\, dk =
\frac{\pi}{\sqrt{x_0}\epsilon^{1/3}}\ {Ai}^2 (-\epsilon^{-2/3}x) \
,
\end{equation}
which is the anticipated result as the fundamental solution is
given by
$$
u^{\e}(x)= C \pi^{1/2}x_0^{-1/4} \e^{-1/6} Ai(-\epsilon^{-2/3}x) \ ,
\ \ C= e^{-i\pi/2}e^{i \frac{1}{\e} \frac{2}{3}x_{0}^{3/2}} \ .
$$
The constant $C$ cannot be computed from the Wigner function. Indeed,
the first moment of the Wigner function is
\begin{equation}\label{fluxfund}
\hspace{-0.2cm}
\epsilon\,  Im\biggl(\frac{d}{dx}\, u^{\e}(x)\, \bar{u}^{\e}(x)\biggr)
 =
\int_{R}^{} kW^{\epsilon}(x,k)\, dk = \int_{R}^{} k
\frac{1}{2\sqrt{x_0}}\biggl(\frac{2}{\epsilon}\biggr)^{2/3}
Ai\Biggl(\biggl(\frac{2}{\epsilon}\biggr)^{2/3}(k^2-x)\Biggr)\, dk=0 \, .
\end{equation}
If we write
$u^{\e}$ in the form $u^{\e}=\alpha^{\epsilon}(x)\, e^{i\phi^{\epsilon}(x)}$, then $(\ref{fluxfund})$ implies that
$\frac{d}{dx} \phi^{\epsilon}(x)=0$, that is $\phi^{\epsilon}(x)=$ const.,
and obviously $C=e^{i\phi^{\epsilon}}$, $|C|=1$. However, it is possible to fix this
constant by going back to the
semiclassical Airy equation.

%%%%%%%%%%%%%%%%%%%%%%%%%%%%%%%%%%%%%%%%%%%%%%%%%%%%%%%%%%%%%%%%%%%%%%%%%%%%%%%%%%%%%%%%%%%%%%%%%%%%%
%%%%%%%%%%%%%%%%%%%%%%%%%%%%%%%%%%%%%%%%%%%%%%%%%%%%%%%%%%%%%%%%%%%

\subsection{The stationary Wigner equation}

Along the same lines as for time-dependent Schr\"odinger equation
(see, e.g., \cite{LP}; also \cite{Tat}), we can show for the
homogeneous Helmholtz equation
\begin{equation}
\e^2 {{u^{\e}}^{\prime\prime}(x)}+ \eta^2(x) u^{\e}(x)=0,\quad
x\in R \ . \nonumber
\end{equation}
(by formally dropping the time derivative and identifying $V(x)=-
\eta^2(x)/2$ in the time-dependent Schr\"odinger equation), that
the Wigner transform $\fe(x,k)$ of the wave function $u^{\e}(x)$
must satisfy stationary Wigner equation
\begin{equation}\label{statwigneq}
 k\partial_{x}\fe + 1/2 \bigl(\eta^2(x)\bigr)' \partial_{k}\fe = -1/2
\sum_{m=1}^{\infty} \alpha_m
\e^{2m}\bigl(\eta^2(x)\bigr) ^{(2m+1)} (x) \partial_{k}^{2m+1} \fe
(x,k) \ .
\end{equation}
Note that in the formal limit $\e =0$ the series disappears and
the limit Wigner distribution $f^{0}$ satisfies the classical
Liouville equation
\begin{equation}
k\partial_{x}\fe + 1/2 \bigl(\eta^2(x)\bigr)'
\partial_{k}\fe =0 \ .
\end{equation}

In the special case of the Airy equation, $\eta^2(x)=x$, the
stationary Wigner equation $(\ref{statwigneq})$ takes the very
simple form
\begin{equation}\label{airywigneq}
 k\partial_{x}\fe + 1/2 \partial_{k}\fe = 0 \ .
\end{equation}
It is important to note that in this case the Wigner equation
coincides with the limiting Liouville equation,and we easily find
that its solution is given by
\begin{equation}\label{soliouv}
\fe (x,k)= F^{\e}(x-k^2) \,
\end{equation}
where $F^{\e}(z)$ is an arbitrary differentiable function, which
may contain $\e$ as parameter.

We observe that the Wigner transform $(\ref{wigfundairy})$ of the
semiclasical Airy function indeed is of the form (\ref{soliouv}),
and therefore is a solution of the stationary Wigner equation
$(\ref{airywigneq})$, as it should be. However, $F^{\e}$ (Airy, in
our case) cannot be derived from $(\ref{airywigneq})$. This can be
done either by using the $*-$equation derived by Moyal \cite{Mo}
and Baker \cite{Ba} or, alternatively, employing the so called
deformation quantization procedure; see e.g., Zachos \cite{Za}),
but we will not enter this subject here.

Finally, we must emphasize that we cannot derive a pure Wigner
equation for the inhomogeneous Helmholtz equation, because in this
case the wave function itself appears in the right-hand side of
$(\ref{statwigneq})$ (see, e.g., Castella et al \cite{BCKP}; also
\cite{Za}), and therefore the influence of a point source cannot
be studied by directly solving the stationary Wigner equation.
This makes our approach of wignerization of a WKB solution to
derive a uniform solution in phase space, a promising way for
constructing solutions, or at least appropriate anzatz, for
kinetic equations.

\subsection{Remarks on the wignerization of a general two-phase WKB solution}

We first observe that in the case of semiclassical Airy equation,
by using the Airy phases $(\ref{airyphases})$,
$$
S_{\pm}(x)=\pm \frac{2}{3}x^{3/2} + \frac{2}{3}x_{0}^{3/2} \ ,
$$
the Kravtsov-Ludwig coordinates $(\ref{klphases})$ are
\begin{equation}
\phi(x)= \frac{2}{3}x_{0}^{3/2} \ ,  \ \ \ \phi'(x)=0
\end{equation}
and
\begin{equation}
\rho(x)=\Biggl[\frac34\biggl(S_{+}(x) - S_{-}(x)\biggr)\Biggr]^{2/3}
= x \ ,
\end{equation}
and they obviously satisfy $(\ref{klphasesyst})$. Also in the case
of a general smooth refraction index $\eta(x)$ , we obtain the same
results since the geometric phases  have the form
\begin{equation}
S_{\pm}(x)=\pm \int_{x_0}^{x}\eta^{2}(z)dz + S_{0}(x_0) \ ,
\end{equation}
and the whole procedure of the wignerization of the WKB solution
can be repeated along the same lines as for the semiclassical Airy
equation.

In the more general case of, for example, a two dimensional
problem with a smooth caustic (fold), the Kravtsov-Ludwig
coordinates $\phi$ arises naturally from the consideration of the
stationary points of the diagonal Wigner phases
\begin{eqnarray*}
F_1(\sigma;x,k)&=&S_{+}(x+\sigma)-S_{+}(x-\sigma)-2k\sigma\\
F_2(\sigma;x,k)&=&S_{-}(x+\sigma)-S_{-}(x-\sigma)-2k\sigma
\end{eqnarray*}
since the point $\sigma=0$ is stationary when
$$
k=\frac{1}{2} \Bigl(S'_{+}(x)+S'_{+}(x)\Bigr)=:\phi'(x) \ .
$$
However, the role of the second Kravtsov-Ludwig coordinate $\rho$,
is not obvious before we pass to local (tangent and normal)
coordinates at the caustic, which  are typically used in boundary
layer analysis (see, e.g., the detailed exposition in the book by
Babich \& Kirpichnikova \cite {BaKi}). In these  local coordinates
coordinates, the semiclassical Airy equation appears again in a
way quite similar with that in the model example presented in
Section 2.4., and then the proposed wignerization process can be
applied.

\chapter{Conclusions}

We have studied the asymptotic expansion of the Wigner transform (wignerization) of
the two-phase WKB solution of the
semiclassical Airy equation, combining uniform and standard stationary phase
approximations for two kinds of Wigner
integrals in various regions of the phase space (``surgery" of asymptotics).

The diagonal Wigner integrals in the illuminated zone have been locally handled
in the same way as in the derivation of the semiclassical Wigner function for
single-phase WKB functions. It turned out that this
derivation holds true in regions near the two branches of the folded Lagrangian
manifold $(x=k^2)$ , which are comprised between
the manifold and the
conjugate curve $(x=2k^2)$ , defined as the locus of points beyond which Berry's chord
does not exist any more.

The non-diagonal Wigner integrals have been handled in the illuminated region by the
standard stationary phase formula
including the contribution of two conjugate imaginary points appearing at points of
phase space at the exterior of the Lagrangian manifold.
However, although important for understanding the structure of the wignerized WKB
solution $W_{WKB}^\e$ , these imaginary points do not
contribute asymptotically to the approximation $\widetilde{W}_{WKB}^\e$ .
On the  other hand, the real symmetric stationary points arising at points of phase
space at the interior of the conjugate curve,
 offer oscillatory contributions
which can be recognized as the high-frequency asymptotics of $W_{Ai}^\e$ , an
observation which is decisive in the ``surgery" of asymptotic formulae
in order to show that $\widetilde{W}_{WKB}^\e$ is finally expressible in the Airy form
$$
\widetilde{W}_{WKB}^{\epsilon}(x,k)\equiv W_{Ai}^\e(x,k)=\frac{1}{2^{1/3}
\e^{2/3}}x_{0}^{-1/2}
 Ai\Bigl(2^{2/3}\epsilon^{-2/3}(k^2-x)\Bigr) \ .
$$
Although the whole process has been done in the illuminated zone, all formulae can
be meaningfully extended in the shadow zone, and
then they predict there the anticipated exponentially decaying wave fields away from
the caustic. Such an extension is plausible
in the light of complex geometric optics (see, e.g., Chapman et al \cite{CLOT}, for a
recent review),
and it has been probably used for first time in the works by J.B. Keller (see, e.g.,
\cite{SK}).

The analysis performed for the semiclassical Airy equation suggests how someone can
wignerize fold caustics  in two
dimensional propagation, by introducing local caustic coordinates which reveal the
essential Airy structure of the problem,
but the details of such a computation are long and still to be worked out.

Finally, the observation that the wignerized WKB solution $\widetilde{W}_{WKB}^\e$
satisfies the stationary Wigner equation
(although almost trivially in our model problem), suggests that, in the general
case, the wignerized two-phase WKB solution is
expected to be a formal asymptotic solution of the Wigner equation (a fact which has
been already
confirmed for the single-phase case in \cite{FM1}), which could be a fruitful way to
handle multiphase wave-kinetic equations.

%%%%%%%%%%%%%%%%%%%%%%%%%%%%%%%%%%%%%%%%%%%%%%%%%%%%%%%%%%%%%%%%%%%%%%%%%%%%%%%%%%%%%%%%%%%%%%
%%%%%%%%%%%%%%%%%%%%%%%%%%%%%%%%%%%%%%%%%%%%%%%%%%%%%%%%%%%%%%%%%%%%%%%%%%%%%%%%%%%%%%%%%%%%%%%
\appendix

\chapter{Stationary-phase method}
\begin{lemma}\label{statphlemma}
\begin{eqnarray*}
J&=&\int_{0}^{\infty}t^{\gamma}\, e^{i\nu t^{p}}\, dt\nonumber \\
&=&{\left(\frac{1}{|\nu |}\right)}^{\frac{\gamma+1}{p}}\ \frac{\Gamma
(\frac{\gamma+1}{p})}{p}\ e^{i\frac{\pi}{2p}(\gamma+1)sgn\nu }
\end{eqnarray*}
where $\gamma$ and $\nu$ are real constants, $\gamma>-1$ and $p$ is a positive integer.
\end{lemma}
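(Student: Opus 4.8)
The plan is to reduce $J$ to a single application of the standard Gamma-function Laplace transform and then pass to the boundary value of its argument. First I would substitute $u=t^{p}$ (so $du=p\,t^{p-1}dt$, $t^{\gamma}dt=\frac1p u^{s-1}du$), which gives
\[
J=\frac1p\int_{0}^{\infty}u^{\,s-1}\,e^{i\nu u}\,du ,\qquad s:=\frac{\gamma+1}{p}>0 ,
\]
where $s>0$ is exactly the content of the hypothesis $\gamma>-1$ and guarantees integrability at the origin; at infinity the integral converges only conditionally (Dirichlet test, or one integration by parts exploiting the oscillation of $e^{i\nu u}$). Throughout I take $\nu\neq0$.

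Next I would introduce the regularised integral $J_{\e}:=\frac1p\int_{0}^{\infty}u^{\,s-1}e^{-(\e-i\nu)u}\,du$ for $\e>0$. Since $\Re(\e-i\nu)=\e>0$ this is absolutely convergent and equals $\frac1p\,\Gamma(s)\,(\e-i\nu)^{-s}$ by the classical formula $\int_{0}^{\infty}u^{s-1}e^{-zu}du=\Gamma(s)z^{-s}$ for $\Re z>0$ (principal branch), itself obtained by analytic continuation from the real-exponent case or by the contour rotation described below. Then I let $\e\downarrow0$. On the right-hand side $\e-i\nu\to-i\nu$, which does not meet the branch cut of the principal power since $\nu$ is real and nonzero, so $(\e-i\nu)^{-s}\to(-i\nu)^{-s}$; writing $-i\nu=|\nu|\,e^{-i\frac{\pi}{2}\sgn\nu}$ gives $(-i\nu)^{-s}=|\nu|^{-s}e^{\,i\frac{\pi}{2}s\,\sgn\nu}$. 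On the left-hand side the convergence of the improper integral $\int_{0}^{\infty}u^{s-1}e^{i\nu u}du$ lets me invoke the Abelian limit theorem ($\lim_{\e\downarrow0}\int_{0}^{\infty}f(u)e^{-\e u}du=\int_{0}^{\infty}f(u)du$ whenever the latter converges) to conclude $J_{\e}\to J$. Combining the two limits and substituting $s=(\gamma+1)/p$ produces the asserted formula.

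An equivalent and more ``stationary-phase''-flavoured route, which I would also mention, is to rotate the contour of $J$ in the $t$-plane directly onto the ray $\arg t=\frac{\pi}{2p}\sgn\nu$, on which $i\nu t^{p}=-|\nu|\,|t|^{p}$ is real and negative; then $J=e^{\,i\frac{\pi}{2p}(\gamma+1)\sgn\nu}\int_{0}^{\infty}r^{\gamma}e^{-|\nu|r^{p}}dr$, and the remaining real integral is evaluated by the substitution $w=|\nu|r^{p}$ to give $\frac1p|\nu|^{-(\gamma+1)/p}\Gamma((\gamma+1)/p)$. The one place where genuine care is needed in this second route is the estimate of the circular arc at infinity joining the positive real axis to this ray: using the elementary bound $\sin\theta\ge 2\theta/\pi$ on $[0,\pi/2]$ one shows the arc contribution tends to $0$ as the radius grows, which is immediate when $\gamma+1\le p$ and, for larger $\gamma$, reduces to that case after a preliminary integration by parts — equivalently, one simply prefers the $\e$-regularisation, which bypasses the arc estimate altogether. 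The remaining checks (integrability and the harmless deformation near $t=0$) follow at once from $\gamma>-1$, so the substantive difficulty is really confined to justifying the passage $\e\downarrow0$, i.e. the Abelian theorem together with the continuity of the principal branch up to the imaginary axis.
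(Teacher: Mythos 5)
The paper states Lemma \ref{statphlemma} without any proof --- it is invoked as a known evaluation (the standard Fresnel/Gamma integral, cf.\ Bleistein \& Handelsman \cite{BH}) --- so there is no argument in the text to compare yours against; your proposal supplies the missing proof, and it is essentially correct. The route you take (substitute $u=t^{p}$ to reduce $J$ to $\frac1p\int_{0}^{\infty}u^{s-1}e^{i\nu u}\,du$ with $s=(\gamma+1)/p$, evaluate the regularised integral as $\frac1p\Gamma(s)(\e-i\nu)^{-s}$, and let $\e\downarrow 0$ using $(-i\nu)^{-s}=|\nu|^{-s}e^{i\frac{\pi}{2}s\,\sgn\nu}$) is the standard one, and the alternative rotation of the contour onto the ray $\arg t=\frac{\pi}{2p}\sgn\nu$ is an equally standard second route; your handling of the arc at infinity and of the endpoint $t=0$ is sound. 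One point deserves sharper wording: the improper integral $\int_{0}^{\infty}u^{s-1}e^{i\nu u}\,du$ converges (conditionally, by Dirichlet's test or one integration by parts) only for $0<s<1$, i.e.\ $\gamma+1<p$; for $\gamma+1\ge p$ the left-hand side of the lemma diverges as an improper Riemann integral, so the Abelian limit theorem you invoke (whose hypothesis is precisely that convergence) does not apply there, and the formula must then be read in the Abel-regularised sense --- which is exactly the value your $\e$-limit produces. You should therefore either restrict the Abelian step to $s<1$ and declare the remaining cases to hold by regularisation, or note that the lemma is only ever applied in the paper with $\gamma=0$, $p=2$, hence $s=\tfrac12$, where the caveat is vacuous.
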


A heuristic analysis of the leading term of the asymptotic expansion of Fourier type
integrals closely follows Laplace's method
(see also \cite{BH}).
Consider the integral
\begin{equation}\label{integral}
I(\lambda)=\int_{a}^{b}f(t)\, e^{i\lambda\phi(t)} \,  dt
\end{equation}
and suppose that $f\in C[a,b]$ while $\phi\in C^2[a,b]$, real-valued function.
Suppose further that $t=c$ is the only point in $[a,b]$ where $\phi'(t)$
vanishes and $\phi''(c)\not =0$. We rewrite $I(\lambda)$ as
\begin{eqnarray*}
I(\lambda)=e^{i\lambda\phi(c)}\int_{a}^{b}f(t)\, e^{i\lambda(\phi(t)-\phi(c))}\,  dt.
\end{eqnarray*}
The main contribution to the integral $(\ref{integral})$
comes from a small neighborhood of $c$.
Then we expect that the large $\lambda$ behavior of $(\ref{integral})$
is given by
\begin{eqnarray*}
\int_{c-r}^{c+r}f(c)\, e^{i\lambda[\phi(c)+\frac{(t-c)^2}{2}\phi''(c)]}\,  dt
\end{eqnarray*}
where $r$ is small but finite. To evaluate this integral, we let
\begin{eqnarray*}
\mu\tau^2=(t-c)^2 \ \frac{\phi''(c)}{2} \ \lambda,\quad  \textrm{or} \quad 
\tau=(t-c)\sqrt{\frac{|\phi''(c)|\lambda}{2}}
\end{eqnarray*}
where $\mu=sgn\phi''(c)$. Then the above integral becomes
\begin{eqnarray*}
f(c)e^{i\lambda\phi(c)}\sqrt{\frac{2}{|\phi''(c)|\lambda}}\int_{-r\sqrt{\lambda|\phi''(c)|/2}}^{r\sqrt{\lambda|\phi''(c)|/2}}e^{i\mu\tau^2}
d\tau
\end{eqnarray*}
As $\lambda\rightarrow \infty$ the last integral reduces to $\int_{-\infty}^{\infty}
e^{i\mu\tau^2} d\tau$, which can be evaluated
exactly
\begin{eqnarray*}
\int_{-\infty}^{\infty} e^{i\mu\tau^2} d\tau=2\int_{0}^{\infty} e^{i\mu\tau^2}
d\tau=\sqrt{\pi}e^{\frac{i\pi\mu}{4}}
\end{eqnarray*}
where we have used the lemma \ref{statphlemma}
with $\gamma=0$, $p=2$, and $\nu =\mu$ (recall that $\Gamma(1/2)=\sqrt{\pi}$ ).
Hence our formal analysis suggests that

\begin{equation}
I(\lambda)\approx e^{i\lambda\phi(c)+i\mu
\pi/4}f(c)\left[\frac{2\pi}{\lambda|\phi''(c)|}\right]^{1/2}
\end{equation}
as $\lambda\rightarrow \infty$, where $\mu=sgn\phi''(c)$.

%%%%%%%%%%%%%%%%%%%%%%%%%%%%%%%%%%%%%%%%%%%%%%%%%%%%%%%%%%%%%%%%%%%%%%%%%%%%%%%%%%%%%%%%%
%%%%%%%%%%%%%%%%%%%%%%%%%%%%%%%%%%%%%%%%%%%%%%%%%%%%%%%%%%%%%%%%%%%%%%%%%%%%%%%%%%%%%%%%%
\chapter{Sketch-proof of the Proposition $\ref{reprfold}$}

The idea of the proof is due to Chester, Friedman and Ursell
\cite{CFU}, who worked out the analytic rather the smooth case.

The starting point is the following lemma.
\begin{lemma}\emph{(Whitney's lemma)}
\label{whitney} Let $f$ be a smooth even function on the real
line. Then, there exists a smooth function $g$ on the real line
such that $f(x)=g(x^2)$ . If $f$ depends smoothly on a set  of
parameters, $g$ can be chosen so that it depends smoothly on the
same set of parameters.
\end{lemma}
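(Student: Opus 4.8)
The plan is to recover $g$ from $f$ by a Taylor expansion in the variable $x^{2}$, combined with a boundary analysis of the remainder at the origin; this is the classical route, going back to Whitney, whose analytic analogue is the Chester--Friedman--Ursell argument \cite{CFU}.

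First I would invoke Taylor's theorem with integral remainder. Since $f$ is smooth and even, all odd-order derivatives of $f$ vanish at $0$, so for every $N$ one can write
\begin{equation}
f(x)=\sum_{j=0}^{N}\frac{f^{(2j)}(0)}{(2j)!}\, x^{2j}+x^{2N+2}R_{N}(x)\ ,\qquad R_{N}(x)=\frac{1}{(2N+1)!}\int_{0}^{1}(1-s)^{2N+1}f^{(2N+2)}(sx)\, ds\ .
\end{equation}
Differentiating under the integral sign shows $R_{N}\in C^{\infty}(R)$, and since $f^{(2N+2)}$ is again even, so is $R_{N}$. The candidate is $g(t):=f(\sqrt{t})$ for $t\ge 0$, for which the identity above reads $g(t)=\sum_{j=0}^{N}\frac{f^{(2j)}(0)}{(2j)!}\, t^{j}+t^{N+1}R_{N}(\sqrt{t})$.

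The heart of the proof -- and the step I expect to be the main obstacle -- is to show that $g$ is genuinely $C^{\infty}$ on the closed half-line $[0,\infty)$, i.e. that the non-smoothness of $\sqrt{t}$ at $t=0$ is absorbed by the prefactor $t^{N+1}$. Here I would note that on $(0,\infty)$ one has $\frac{d^{k}}{dt^{k}}\sqrt{t}=O(t^{1/2-k})$, so that by the Leibniz and chain rules each derivative $\frac{d^{k}}{dt^{k}}\bigl(t^{N+1}R_{N}(\sqrt{t})\bigr)$ is a finite sum of terms, every one of which extends continuously to $t=0$ as long as $k\le N+1$. A standard one-sided argument (applying the mean value theorem to the successive derivatives) then promotes ``all derivatives up to order $N+1$ extend continuously'' to ``$g\in C^{N+1}([0,\infty))$''; since $N$ is arbitrary, $g\in C^{\infty}([0,\infty))$. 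Finally I would extend $g$ to a smooth function on all of $R$ by a one-sided smooth-extension theorem (Seeley's theorem, or Whitney's extension theorem applied to the Taylor jet of $g$ at $0$); this last step is cosmetic, since $x\mapsto x^{2}$ takes values in $[0,\infty)$ anyway. The identity $f(x)=g(x^{2})$ is then immediate: for $x\ne 0$ it says $f(|x|)=f(x)$, which is the evenness of $f$, and for $x=0$ it follows by continuity.

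For the version with parameters, I would observe that every object in the construction is obtained from $f$ by integration and differentiation: the coefficients $f^{(2j)}(0)$, the remainder $R_{N}$, and hence $g$ restricted to $[0,\infty)$, all depend smoothly on the parameters (differentiation under the integral sign, with the remainder estimates locally uniform in the parameters). Since the Seeley extension operator is linear and continuous between the relevant spaces it commutes with differentiation in the parameters -- or, alternatively, one uses the parametrized form of the extension theorem -- so the resulting $g=g(t;\lambda)$ is smooth jointly in $t$ and in the parameters $\lambda$, which completes the proof.
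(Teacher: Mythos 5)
Your argument is correct, but note that the paper itself does not prove this lemma: it is stated as ``Whitney's lemma'' in Appendix B and used as a black box in the sketch-proof of Proposition \ref{reprfold}, with only a pointer to Chester--Friedman--Ursell \cite{CFU} for the analytic analogue, so there is no proof in the text to compare against. What you have written is essentially Whitney's classical argument, and it is sound: the Taylor expansion to order $2N+1$ with integral remainder, the vanishing of the odd coefficients by evenness, the observation that each $t$-derivative of order $k\le N+1$ of $t^{N+1}R_{N}(\sqrt{t}\,)$ is a finite sum of terms of size $O(t^{N+3/2-k})$ times bounded factors (so that the one-sided mean-value argument gives $g\in C^{N+1}([0,\infty))$ for every $N$), and finally a Seeley or Whitney extension to the whole line. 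Two small points you might tighten: (i) the evenness of $R_{N}$ is stated but never actually used --- your power-counting needs only that $R_{N}$ is smooth with bounded derivatives near the origin; (ii) for the parametric version, smoothness in $t$ for each fixed parameter together with smoothness in the parameter does not by itself yield joint smoothness, so you should say explicitly that the same $O(t^{N+3/2-k})$ bounds hold for the parameter-derivatives of the remainder, locally uniformly in the parameters (which they do, by differentiating your integral formula for $R_{N}$ under the integral sign), so that all mixed partials extend continuously to $t=0$.
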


Now, let us assume that the Lagrangian submanifold has a fold
point at $(\textbf{x}_0, \textbf{p}_0) \in M\times R^n$ . We assume
for simplicity that $M \subset R^n_{\bf{x}}$ , and that
$\textbf{x}_0 =0$. Let $S=S(\textbf{x}, \xi)$ on $M \times R$ be a
phase function parametrizing $\Lambda$ in the neighborhood of
$\textbf{x}_0$ , i.e., $(\textbf{x}, \textbf{k}= \nabla S) \in
\Lambda$ for $\textbf{x}$ near $\textbf{x}_0$ , and let
$$
C=\{(\textbf{x}, \xi) | \partial_{\xi}S(\textbf{x}, \xi)=0\} \ ,
$$
be the critical set of $\Lambda$. The caustic $\Sigma(\Lambda^n)$
consists of those points in $C$ where $\partial_{\xi}S(\textbf{x},
\xi)= \partial^{2}_{\xi}S(\textbf{x}, \xi)=0$ . Without loss of
generality, we may assume that the point in $C$ corresponding to
$(\textbf{x}_{0}, \textbf{p}_{0})$ is the origin. Then, the
following lemma holds.

\begin{lemma}
There exist smooth functions $v_0(\bf{x})$ and
$\rho(\bf{x})$ on $M$, and $\zeta(\bf{x}, \xi)$ on $M
\times R$ ,  such that
\begin{equation}\label{rhovfunc}
\frac{\zeta ^3}{3}-\rho \zeta + v_0 = S, \quad \frac{\partial
\zeta}{\partial \xi} >0,
 \quad {and} \quad \zeta^2 - \rho=0  \quad {for} \quad \ ({\bf{x}}, \xi)\in C \ .
\end{equation}
\end{lemma}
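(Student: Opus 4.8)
The plan is to reduce the phase $S$ to a cubic normal form in $\xi$, following the argument of Chester, Friedman and Ursell \cite{CFU} but replacing their appeals to analyticity by the smooth (Malgrange) preparation theorem and by Whitney's lemma (Lemma~\ref{whitney}). First I would straighten the $\xi$--dependence. As the point of $C$ associated with $(\mathbf{x}_0,\mathbf{p}_0)$ is the origin and $(\mathbf{x}_0,\mathbf{p}_0)$ is a fold, we have $\partial_\xi S(0,0)=\partial_\xi^2 S(0,0)=0$ and $\partial_\xi^3 S(0,0)\neq 0$; orienting the fibre coordinate $\xi$ so that $\partial_\xi^3 S(0,0)>0$ and applying the preparation theorem to $\partial_\xi S$, which vanishes to order two in $\xi$ at the origin, I obtain on a smaller neighbourhood smooth functions $e(\mathbf{x},\xi)>0$ and $p_1(\mathbf{x}),p_0(\mathbf{x})$ with $p_1(0)=p_0(0)=0$ and $\partial_\xi S=e(\mathbf{x},\xi)\,(\xi^2+p_1(\mathbf{x})\xi+p_0(\mathbf{x}))$. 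Replacing $\xi$ by $\xi-\frac12 p_1(\mathbf{x})$ (which alters neither the hypotheses nor the conclusion) I may assume $\partial_\xi S(\mathbf{x},\xi)=e(\mathbf{x},\xi)\,(\xi^2-\nu(\mathbf{x}))$ with $e>0$, $\nu$ smooth and $\nu(0)=0$; the fold condition becomes $d\nu(0)\neq 0$, so locally $\{\nu=0\}$ is the caustic, $C=\{\xi^2=\nu(\mathbf{x})\}$, and for $\nu(\mathbf{x})>0$ the critical points of $S(\mathbf{x},\cdot)$ are $\xi_{\pm}=\pm\sqrt{\nu}$, one a local maximum and one a local minimum since $e>0$.

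Next I would read off $v_0$ and $\rho$. Splitting $e$ into its parts even and odd in $\xi$ and applying Whitney's lemma with the variables $\mathbf{x}$ as parameters, I write $e(\mathbf{x},\xi)=E_0(\mathbf{x},\xi^2)+\xi\,E_1(\mathbf{x},\xi^2)$ with $E_0,E_1$ smooth and $E_0(0,0)>0$. Integrating $\partial_\xi S$ from $0$ to $\xi$ and substituting $u=t^2$ in the odd part gives $S(\mathbf{x},\xi)=S(\mathbf{x},0)+A(\mathbf{x},\xi)+B(\mathbf{x},\xi^2)$, where $A(\mathbf{x},\xi)=\int_0^{\xi}E_0(\mathbf{x},t^2)(t^2-\nu)\,dt$ is odd in $\xi$ with $\partial_\xi A=E_0(\mathbf{x},\xi^2)(\xi^2-\nu)$, and $B(\mathbf{x},\cdot)$ is smooth with $B(\mathbf{x},0)=0$. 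The substitution $t=\sqrt{\nu}\,s$ shows $A(\mathbf{x},\sqrt{\nu})=\nu^{3/2}\,G(\mathbf{x},\nu)$ with $G$ smooth and $G(0,0)\neq 0$, and Hadamard's lemma gives $B(\mathbf{x},\xi^2)-B(\mathbf{x},\nu)=(\xi^2-\nu)\,\widetilde B(\mathbf{x},\xi^2)$ with $\widetilde B$ smooth. I then set $v_0(\mathbf{x}):=S(\mathbf{x},0)+B(\mathbf{x},\nu(\mathbf{x}))$ and $\rho(\mathbf{x}):=\nu(\mathbf{x})\,\bigl(\frac32\,|G(\mathbf{x},\nu(\mathbf{x}))|\bigr)^{2/3}$, both smooth because $G$ is nonzero near the origin; on $\{\nu>0\}$ these agree with the symmetric combinations $v_0=\frac12\bigl(S(\mathbf{x},\xi_+)+S(\mathbf{x},\xi_-)\bigr)$ and $\rho=\bigl(\frac34\,|S(\mathbf{x},\xi_+)-S(\mathbf{x},\xi_-)|\bigr)^{2/3}$ of the two critical values, in accordance with $(\ref{klphases})$, and $\rho\sim\nu$, so $\sqrt{\rho}\sim\sqrt{\nu}$.

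Finally I would construct $\zeta(\mathbf{x},\xi)$ as the solution, increasing in $\xi$, of $\frac13\zeta^3-\rho(\mathbf{x})\zeta+v_0(\mathbf{x})=S(\mathbf{x},\xi)$. Differentiating shows this is equivalent to $(\zeta^2-\rho)\,\partial_\xi\zeta=\partial_\xi S=e\,(\xi^2-\nu)$, so that $\zeta^2-\rho$ and $\xi^2-\nu$ vanish exactly on the same set, namely $C$. Away from $C$ and from $\zeta=\pm\sqrt{\rho}$, the implicit function theorem yields such a $\zeta$, smooth in $(\mathbf{x},\xi)$; near a branch $\xi=\pm\sqrt{\nu}$ of $C$ I would instead use a parametrised Morse normal form, obtained from the decomposition above together with one more application of Whitney's and Hadamard's lemmas, which exhibits the local $\zeta$ as smooth with $\partial_\xi\zeta>0$ through the critical point and, crucially, smoothly up to and across the caustic $\{\nu=0\}$. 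By uniqueness of the increasing solution the local pieces agree on overlaps and assemble into a global smooth $\zeta$ with $\partial_\xi\zeta>0$; the final assertion $\zeta^2-\rho=0$ on $C$ is then automatic, since there $0=\partial_\xi S=(\zeta^2-\rho)\,\partial_\xi\zeta$ and $\partial_\xi\zeta>0$.

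The main obstacle is the joint smoothness of $\rho$ and of $\zeta$ across the caustic $\{\nu=0\}$, where the projection $C\to M$ folds and the two $\xi$--critical points of $S(\mathbf{x},\cdot)$ coalesce: there the implicit function theorem breaks down because $\partial_\xi S$ has a double root, and one must see that the fractional powers $\nu^{1/2}$, $\rho^{1/2}$ entering the naive description of $\zeta$ --- and the $2/3$--power entering $\rho$ --- recombine into smooth functions. This is exactly what Whitney's lemma, used with $\nu$ as a parameter, guarantees; once the reduction $\partial_\xi S=e\,(\xi^2-\nu)$ is established, everything else is routine.
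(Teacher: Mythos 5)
Your argument is correct in outline but takes a genuinely different route from the paper's sketch in Appendix B. The paper reduces to the one--dimensional case by slicing $M$ along lines on which $\partial^2 S/\partial\xi\partial x_1\neq 0$, parametrizes the critical set $C$ by $\xi$ with $x=\xi^2$ on $C$, and reads off $v_0$ and $\rho^3$ as the even-in-$\xi$ combinations $\frac12\bigl(S(\xi)+S(-\xi)\bigr)$ and $\frac{9}{16}\bigl(S(\xi)-S(-\xi)\bigr)^2$ of the two critical values, invoking Whitney's lemma to descend to smooth functions of $x$ and a Taylor--order count to extract the cube root; $\zeta$ is then defined only \emph{on} $C$ as $\pm\sqrt{\rho}$ and extended off $C$ arbitrarily, since the lemma as stated only imposes the identities on $C$. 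You instead apply the Malgrange preparation theorem directly to $\partial_\xi S$ to get $\partial_\xi S=e\,(\xi^2-\nu)$, split $e$ into even and odd parts, and integrate; this yields the same $v_0$ and $\rho$ (the symmetric functions of the critical values, consistent with $(\ref{klphases})$) by closed formulas that are manifestly smooth across the caustic, avoids the slicing argument, and in addition produces the relation $\partial_\xi S=(\zeta^2-\rho)\,\partial_\xi\zeta$ that the paper only exploits later in the proof of Proposition \ref{reprfold}. What each approach buys: the paper's version is shorter because it proves less (everything lives on $C$), while yours aims at the stronger statement that the cubic identity holds identically near the fold. The one step you do not actually carry out --- smoothness and invertibility of $\zeta$ in a full neighbourhood of $C$, across the caustic where the two critical points coalesce --- is precisely the hard kernel of the Chester--Friedman--Ursell theorem in the smooth category, and your appeal to a ``parametrised Morse normal form'' is a placeholder rather than a proof. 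This is, however, no less rigorous than the paper's own sketch, which extends $\zeta$ ``arbitrarily'' and later asserts the change of coordinates $\xi\mapsto\zeta$ without verifying it is a diffeomorphism; if you want your version to be complete, either supply that normal--form step (e.g.\ by a Moser homotopy) or retreat to the on--$C$ statement actually asserted in the lemma, for which your first three steps already suffice.
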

\begin{proof}

First us prove the assertion for the special case when the base
manifold $M$, is one dimensional $(n=1, \thinspace \textbf{x}
=x)$. The assumption that the origin is a fold point of $C$ means
that
$$
\frac{\partial S}{\partial \xi}=\frac{\partial^2 S}{\partial
\xi^2}=0 \quad {and} \quad  \frac{\partial^2 S}{\partial \xi
\partial x} \neq 0 \quad {at} \quad x=0 \ .
$$
Since $\frac{\partial^2 S}{\partial \xi \partial x} \neq 0$, we
can solve for $x$ as a function of $\xi$ on $C$ and let
$x=x(\xi)$. Since
$$
\frac{\partial^2 S}{\partial \xi^2 } \left(x(\xi), \xi\right) +
\frac{\partial^2 S}{\partial \xi^2 }\left(x(\xi),
\xi\right)x'(\xi) = 0 \ ,
$$
on $C$, we conclude that $x'(0)=0$. Now since
$$
\frac{\partial^3 S}{\partial \xi^3} \neq 0 \ ,
$$
we conclude that $x''(\xi) \neq 0$, so by a change of coordinates
on $R$ we can assume $x= \xi^2$ on $C$. Let $C^+$ be the part of
$C$ where $\xi >0$ , and $C^-$ be the part where $\xi <0$ . By the
last of  the equations (2.20) we have $\xi=+\sqrt{\rho}$ on $C^-$.
So on $C^+$ we have
$$
-\frac23 {\rho ^{\frac32}} + v_0 = S(\xi)
$$
and on $C^-$ we have
$$
\frac23 {\rho ^{\frac32}} + v_0 = S(-\xi)  \ .
$$
Since $\rho$ and $v_0$ are functions of $x$ alone, we must have
\begin{equation}
v_0(x)= \frac12 \biggl(S(\xi) + S(-\xi) \biggr), \quad \biggl(\rho(x)\biggr)^{3}
= \frac{9}{16}\biggl(S(\xi) - S(-\xi) \biggr)^2 \ ,
\end{equation}
with $x= \xi^2$ . The expressions on the right are both even
functions of $\xi$, so $v_0$ and $\rho^3$ exist by the above lemma .
To show that the cubic roots of $\rho^3$ exists we note that since
$S'(\xi)=S''(\xi)=0$, and $S'''(\xi) \neq 0$, the Taylor series
for $\Bigl(S(\xi) - S(-\xi) \Bigr)^2 $ starts with a non-zero term
of order six. Thus $\rho$ exists and is of order two with respect
to $\xi$, and of order one with respect to $x$. In particular,
$\zeta= +\sqrt{\rho}$ exists on $C$ and $\partial\zeta /\partial
\xi \neq 0$.

Now suppose $dim M > 1$. Choose coordinates $(x_1, ..., x_n)$ on
$M$, such that
$$
\frac{\partial S}{\partial \xi \partial x_1} \neq 0 \ .
$$
For $\alpha=(\alpha_2, ..., \alpha_n)$ , let $C_\alpha$ be the
intersection of $C$ with the line $x_2 = \alpha_2 , ..., x_n =
\alpha_n$. Applying the preceding argument to $C_ \alpha$, we find
functions ${v_0}^{\alpha}$, ${\rho}^{\alpha}$ and $\zeta ^ \alpha$
on $C^\alpha$ satisfying (2.20) and depending smoothly on
$\alpha$. We let $v_0$, $\rho_{1}$ and $\zeta$ be the
corresponding functions on $C$. Finally, we extend $\zeta$ from
$C$ to $M \times R$ arbitrarily. This concludes the proof of the
lemma. 
\qed
\end{proof}

To prove the Proposition $(\ref{reprfold})$, let $\psi(\textbf{x},
\xi)=v_0(\textbf{x})+\rho(\textbf{x})\zeta(\theta)- {\zeta ^3}/3$.
From (\ref{rhovfunc}) it follows easily that the critical set of
$\psi$ equals the critical set of $S$. Making the change of
coordinates $\textbf{x} \rightarrow \textbf{x}$ and $\xi
\rightarrow \zeta(\theta, \textbf{x})$, we get the phase function
of the desired form. The representation $(\ref{axi})$ follows
directly from the Malgrange preparation theorem (see, e.g.,
\cite{Ho}, vol. 1, Sec 7.5).

%%%%%%%%%%%%%%%%%%%%%%%%%%%%%%%%%%%%%%%%%%%%%%%%%%%%%%%%%%%%%%%%%%%%%%%%%%%%%%%%%%%%%%%%%%%%
%%%%%%%%%%%%%%%%%%%%%%%%%%%%%%%%%%%%%%%%%%%%%%%%%%%%%%%%%%%%%%%%%%%%%%%%%%%%%%%%%%%%%%%%%%%%%%
\chapter{Uniform stationary phase asymptotics}

We consider the integral
$$I(\lambda ,\alpha)=\int_{-\infty}^{\infty} e^{i\lambda \phi(x,\alpha)}f(x)dx, $$
where $\alpha>0,$ $\lambda$ is a large positive parameter. With
smooth f, for the case when the phase function $\phi\in
C^{\infty}$ has two stationary points, $x_1(\alpha)$ and
$x_2(\alpha),$ which approach the same limit $x_0$ when
$\alpha\rightarrow 0.$ Let $\phi_{xx}(x_1,\alpha)<0$ and
$\phi_{xx}(x_2,\alpha)>0.$

The standard stationary-phase approximation of $I(\lambda
,\alpha)$ fails:
\begin{eqnarray*}
I(\lambda ,\alpha)\approx \Bigl(\frac{2\pi}{\lambda
}\Bigr)^{1/2}\sum_{l=1,2}\frac{f(x_l(\alpha))e^{i\lambda
\phi(x_{l}( \alpha ),\alpha)}}{\mid \phi_{xx}(x_l(\alpha))\mid
^{1/2}} e^{i\frac{\pi}{4}\delta_l}
\end{eqnarray*}
with $\delta_l=sgn\phi_{xx}(x_l(\alpha)).$ In our consider
$\delta_1=-1,\delta_2=1,$ so we have
\begin{eqnarray}\label{ap1}
I(\lambda ,\alpha)=\Bigl(\frac{2\pi}{ \lambda
}\Bigr)^{1/2}\Bigl[\frac{f(x_2(\alpha))e^{(i\lambda
\phi(x_2(\alpha),\alpha)+i\pi/4)}}{\sqrt{\phi_{xx}(x_2(\alpha))}}+
\frac{f(x_1(\alpha))e^{(i\lambda
\phi(x_1(\alpha),\alpha)-i\pi/4)}}{\sqrt{\mid
\phi_{xx}(x_1(\alpha))\mid }}\Bigr]+O(\lambda ^{-1}),
\end{eqnarray}
$\lambda \rightarrow \infty.$

Assume also that $\phi(x,\alpha)$ is analytic for small $(x-x_0)$
and small $\alpha>0,$ we have
\begin{eqnarray}\label{ap20}
\phi_{xxx}\not=0,\ \phi'_{x}=\phi_{xx}=0,\
\phi_{x\alpha}\not=0
\end{eqnarray}
at $x=x_{0},\alpha=0.$

Under these conditions a theorem by Chester, Friedman and Ursell
(``An extension of the method of steepest descent", Proc. Cambr.
Phil. 1957, 53, 599-611) imples that there exists a change of
variable $x=x(\tau),$ analytic and invertible for small $(x-x_0)$
and small $\alpha>0,$ depending parametrically on $\alpha,$ such
that

\begin{eqnarray}\label{ap14}
\phi(x,\alpha)=\phi_0(\alpha)+\frac{\tau^3}{3}-\xi(\alpha)\ \tau
\end{eqnarray}
where $\phi_0(\alpha)$ and $\xi(\alpha)$ are analytic functions of
$\alpha.$

Then, we have
\begin{eqnarray}\label{ap2}
I(\lambda ,\alpha)=e^{i\lambda
\phi_0}\int_{-\infty}^{\infty}e^{i\lambda
(\tau^3/3-\tau\xi(\alpha))}f(x(\tau))\frac{dx(\tau)}{d\tau}\ d\tau
\end{eqnarray}
By a version of Malgranges preparation theorem, we have the
representation
\begin{eqnarray}\label{ap3}
f(x(\tau))\frac{dx(\tau)}{d\tau}\
d\tau=A_{0}(\alpha)+B_{0}(\alpha)\tau+h(\tau)(\tau^2-\xi)
\end{eqnarray}
where $h(\tau)$ is smooth function.

Substituting $\mathrm{(\ref{ap3})}$ into $\mathrm{(\ref{ap2})}$ we
have
\begin{eqnarray*}
I(\lambda ,\alpha)= e^{i\lambda \phi_0(\alpha)}\Bigl[2\pi
A_{0}(\alpha)\lambda ^{-1/3}Ai(-\lambda ^{2/3}\xi)-2\pi i
B_{0}(\alpha)\lambda ^{-2/3} Ai'(-\lambda ^{2/3}\xi)+C(\lambda
,\xi)\Bigr]
\end{eqnarray*}
where $$C(\lambda
,\xi)=\frac{i}{\lambda}\int_{-\infty}^{\infty}h'(\tau)e^{i\lambda
(\tau^3/3-\tau\xi)}d\tau=O(\lambda ^{-4/3})$$ Integrating the
integral for C as many time as we wish we obtain
\begin{eqnarray}\label{ap4}
I(\lambda ,\alpha)=e^{i\lambda \phi_0(\alpha)}\Bigl[2\pi A \lambda
^{-1/3}Ai(-\lambda ^{2/3}\xi) -2\pi i B\lambda ^{-2/3}Ai'(-\lambda
^{2/3}\xi)\Bigr]
\end{eqnarray}
where
\begin{eqnarray*}
A=\sum_{n=0}^{\infty}A_n(\alpha)\Bigl(\frac{i}{\lambda }\Bigr)^n,\
B=\sum_{n=0}^{\infty}B_n(\alpha)\Bigl(\frac{i}{\lambda }\Bigr)^n
\end{eqnarray*}

In order to compute $\phi_{0}(\alpha),\xi(\alpha)$ and the leading
coefficients $A_{0}(\alpha),B_{0}(\alpha)$ we use the principle of
asymptotic matching.

We fix $\alpha>0$ and we consider $\lambda \rightarrow \infty.$
Then the asymptotics of $Ai,Ai'$ read as follows
\begin{eqnarray}\label{ap5}
Ai(-\lambda ^{2/3}\xi)\approx \frac{1}{2\sqrt{\pi}}\lambda
^{-1/6}\xi^{-1/4}\Bigl[e^{2i\lambda \xi^{3/2}/3-i \pi/4}+
e^{-2i\lambda \xi^{3/2}/3+i \pi/4 }\Bigr]
\end{eqnarray}

\begin{eqnarray}\label{ap6}
Ai'(-\lambda ^{2/3}\xi)\approx \frac{-1}{2\sqrt{\pi}}\lambda
^{1/6}\xi^{1/4}\Bigl[e^{2i\lambda \xi^{3/2}/3+i \pi/4}+
e^{2i\lambda \xi^{3/2}/3-i \pi/4 }\Bigr]
\end{eqnarray}

Substituting $\mathrm{(\ref{ap5})}$ and $\mathrm{(\ref{ap6})}$
into $\mathrm{(\ref{ap4})},$  we get the expression
\begin{eqnarray}\label{ap7}
I(\lambda ,\alpha)&\approx &\Bigl(\frac{\pi}{i\lambda
}\Bigr)^{1/2}(A_{0}\xi^{-1/4}-B_{0}\xi^{1/4})\ e^{i\lambda
(\phi_{0}+2\xi^{3/2}/3)}\nonumber\\
&=&\Bigl(\frac{\pi}{i\lambda
}\Bigr)^{1/2}(A_{0}\xi^{-1/4}+B_{0}\xi^{1/4})\ e^{i\lambda
(\phi_{0}-2\xi^{3/2}/3)}+O(\lambda^ {-3/2})
\end{eqnarray}

The principle of asymptotic matching requires that the expansion
$\mathrm{(\ref{ap7})}$ must coincide with the non-uniform
expansion $\mathrm{(\ref{ap1})}.$ Comparing these expressions, and
taking into account that $\phi(x_1,\alpha)>\phi(x_2,\alpha)$ we
obtain
\begin{eqnarray}
\phi_0+\frac{2}{3}\ \xi^{3/2}=\phi(x_1,\alpha)
\end{eqnarray}

\begin{eqnarray}
\phi_0-\frac{2}{3}\ \xi^{3/2}=\phi(x_2,\alpha)
\end{eqnarray}
for the phases, and
\begin{eqnarray}
A_{0}\xi^{-1/4}+B_{0}\xi^{1/4}=\sqrt{2}\ \frac{f(x_2)}{
(\phi_{xx}(x_2,\alpha))^{1/2}}
\end{eqnarray}

\begin{eqnarray}
A_{0}\xi^{-1/4}-B_{0}\xi^{1/4}=\sqrt{2}\ \frac{f(x_1)}{\mid
\phi_{xx}(x_1,\alpha)\mid ^{1/2}}
\end{eqnarray}
which give
\begin{eqnarray}
\phi_0(\alpha)=\frac{1}{2}\Bigl(\phi(x_1(\alpha),\alpha)+\phi(x_2(\alpha),\alpha)\Bigr)
\end{eqnarray}

\begin{eqnarray}\label{ap13}
\xi(\alpha)=\Big[\
\frac{3}{4}(\phi(x_1(\alpha),\alpha)-\phi(x_2(\alpha),\alpha))\
\Bigr]^{2/3}
\end{eqnarray}

We want now to approximate $\xi(\alpha),\ \alpha\rightarrow 0^+$.
For this we set $x_{0}=0$ (which amounts for changing the variable
$x$ to $x'=x-x_0$) and we expand $\phi(x,\alpha)$ near
$(x=0,\alpha=0),$
\begin{eqnarray}\label{ap13}
\phi(x,\alpha)&=&\phi(0,0)+\phi_{x}(0,0)\ x+\phi_{\alpha}(0,0)\ \alpha\nonumber\\
&+&\frac{1}{2}\ \phi_{xx}(0,0)\ x^2+\phi_{x\alpha}\ \alpha
x+\frac{1}{2}\
\phi_{\alpha\alpha}\ \alpha^2\nonumber\\
&+& \frac{1}{6}\ \phi_{xxx}(0,0)\ x^3+\frac{1}{2}\
\phi_{xx\alpha}\ x^2\alpha
+\frac{1}{2}\ \phi_{x\alpha\alpha}\ x\alpha^2\nonumber\\
&+&\frac{1}{6}\ \phi_{\alpha\alpha\alpha}(0,0)\
\alpha^3+({4^{th}-order \
terms})\nonumber\\
&=&\phi+\phi_{\alpha}\ \alpha+\phi_{x\alpha}\ \alpha
x+\frac{1}{2}\
\phi_{\alpha\alpha}\ \alpha^2+\frac{1}{6}\ \phi_{xxx}\ x^3\nonumber\\
&+&\frac{1}{2}\ \phi_{xx\alpha}\ x^2\alpha+\frac{1}{2}\
\phi_{x\alpha\alpha}\ x \alpha^2+\frac{1}{6}\
\phi_{\alpha\alpha\alpha}\ \alpha^2
\end{eqnarray}
Differentiating the last equation we have
\begin{eqnarray}\label{ap8}
\phi_{x}(x,\alpha)=\phi_{x\alpha}\ \alpha+\frac{1}{2}\ \phi_{xxx}\
x^2+\phi_{xx\alpha}\ x\alpha+\frac{1}{2}\ \phi_{x\alpha\alpha}\alpha^2
\end{eqnarray}
and
\begin{eqnarray}\label{ap9}
\phi_{xx}(x,\alpha)=\phi_{xxx}\ x+\phi_{xx\alpha}\ \alpha
\end{eqnarray}

We compute $x_1(\alpha)$, $x_2(\alpha)$ by solving the equation
(approximate eq. $\mathrm{(\ref{ap8})},$  $O(\alpha^2)$)
\begin{eqnarray}
\phi_{x}(x,\alpha)=\frac{1}{2}\ \phi_{xxx}(0,0)\
x^2+\phi_{xx\alpha}(0,0)\ x\alpha+\phi_{x\alpha}(0,0)\ \alpha=0
\end{eqnarray}
The roots of the last equation are
\begin{eqnarray*}
x_{1,2}(\alpha)&=&\frac{-\alpha\ \phi_{xx\alpha}\pm
(\phi_{xx\alpha}^2\
\alpha^2-2\phi_{xxx}\ \phi_{x\alpha}\ \alpha)^{1/2}}{\phi_{xxx}}\\
&\approx &\pm (-2\phi_{xxx}\ \phi_{x\alpha}\
\alpha)^{1/2}(\phi_{xxx})^{-1}
\end{eqnarray*}
since $\sqrt{\alpha}>>\alpha>\alpha^2$ as $\alpha\rightarrow 0^+.$
The assumptions
$\phi''(x_1(\alpha),\alpha)<0,\phi''(x_2(\alpha),\alpha)>0$ and
equation $\mathrm{(\ref{ap9})}$ , for $\alpha\rightarrow 0^+,$
imply that
\begin{eqnarray}\label{ap10}
x_1(\alpha)\approx -(-2\phi_{xxx}\ \phi_{x\alpha}\
\alpha)^{1/2}(\phi_{xxx})^{-1}
\end{eqnarray}
\begin{eqnarray}\label{ap11}
x_2(\alpha)\approx +(-2\phi_{xxx}\ \phi_{x\alpha}\
\alpha)^{1/2}(\phi_{xxx})^{-1}
\end{eqnarray}

We also need to approximate the difference
\begin{eqnarray}\label{ap12}
\delta\phi=\phi(x_1(\alpha),\alpha)-\phi(x_2(\alpha),\alpha)\approx
\frac{1}{6}\ \phi_{xxx}\ ({x_1}^3-{x_2}^3)+\phi_{x\alpha}\
\alpha(x_1-x_2)+\ldots
\end{eqnarray}
From  $\mathrm{(\ref{ap10})},$  $\mathrm{(\ref{ap11})}$ and $\mathrm{(\ref{ap12})}$
we have (for $\alpha>0$)
$${x_1}^3-{x_2}^3=-2(\phi_{xxx})^{-3}(-2\phi_{xxx}\phi_{x\alpha})^{3/2} \alpha^{3/2}$$
$$x_1-x_2=-2(\phi_{xxx})^{-1}(-2\phi_{xxx}\ \phi_{x\alpha})^{1/2}\alpha^{1/2}$$
and therefore
\begin{eqnarray}
\delta\phi&=&-2\alpha^{3/2}(\phi_{xxx})^{-1}(-2\phi_{xxx}\
\phi_{x\alpha})^{1/2}\nonumber\\
&\cdot &\Bigl[\ \frac{1}{6}(\phi_{xxx})^{-1}(-2\phi_{xxx}\
\phi_{x\alpha})+\phi_{x\alpha}\ \Bigr]\nonumber\\
&=&-\frac{2}{3}\ \alpha^{3/2}(-2\phi_{xxx}\
\phi_{x\alpha})^{3/2}(\phi_{xxx})^{-2}
\end{eqnarray}

Then, using $\mathrm{(\ref{ap13})}$ we obtain
$$\xi=\Bigl(\frac{3}{4}\ \delta\ \phi\Bigr)^{2/3}\approx
2^{-2/3}\alpha(-2\phi_{xxx}\ \phi_{x\alpha})(\phi_{xxx})^{-4/3}$$
\begin{eqnarray}
\xi=-2^{1/3}\phi_{x\alpha}\ \phi_{xxx}^{-1/3}\alpha
\end{eqnarray}
($\xi\approx
-\phi_{x\alpha}\Bigl(\frac{\phi_{xxx}}{2}\Bigr)^{-1/3}\alpha$)

Going now back to $\mathrm{(\ref{ap13})}$ we have for
$\alpha\rightarrow 0^+,$
\begin{eqnarray}
\phi_{x}(x,\alpha)\approx \phi(0,0)+\phi_{x\alpha}\
x\alpha+\frac{1}{6}\phi_{xxx}\ x^3
\end{eqnarray}
and comparing with $\mathrm{(\ref{ap14})}$ we obtain that as
$\alpha\rightarrow 0^+,$
$$\phi_{0}(\alpha)\approx \phi(0,0)$$
$$\xi(\alpha)\approx -\phi_{x\alpha}\ \Bigl(\frac{\phi_{xxx}}{2}\Bigr)^{-1/3}\alpha$$
$$\tau^3\approx -\frac{1}{2}\ \phi_{xxx}\ x^3\Rightarrow \tau\approx
-\Bigl(\frac{\phi_{xxx}}{2}\Bigr)^{1/3}x$$

Also using $\phi_{xx}(x,\alpha)=\phi_{xxx}(0,0)\
x+\phi_{xx\alpha}(0,0)\ \alpha,$ we have
\begin{eqnarray}
\phi_{xx}(x_1(\alpha),\alpha)&=&-\phi_{xxx}\ (-2\phi_{xxx}\
\phi_{x\alpha}\
\alpha)^{1/2}(\phi_{xxx})^{-1}+\phi_{xx\alpha}\ \alpha \nonumber\\
&\approx &-(-2\phi_{xxx}\ \phi_{x\alpha}\ \alpha)^{1/2}
\end{eqnarray}

\begin{eqnarray*}
\mid \phi_{xx}(x_1(\alpha),\alpha)\mid \ \approx (-2\phi_{xxx}\
\phi_{x\alpha}\ \alpha)^{1/2}
\end{eqnarray*}

\begin{eqnarray*}
\mid \phi_{xx}(x_2(\alpha),\alpha)\mid \ \approx (-2\phi_{xxx}\
\phi_{x\alpha}\ \alpha)^{1/2}
\end{eqnarray*}
and
$$\xi^{1/4}\approx \Bigl(-\phi_{x\alpha}\
\Bigl(\frac{\phi_{xxx}}{2}\Bigr)^{-1/3}\alpha\Bigr)^{1/4}$$

%%%%%%%%%%%%%%%%%%%%%%%%%%%%%%%%%%%%%%%%%%%%%%%%%%%%%%%%%%%%%%%%%%%%%%%%%%%%%%%%%%%%%%%%%%%%%%%%%%%%%%%%%%%%%%%%%%%%%%%%%%%%%%%%%%%%%%%%%%%%%%%%
%%%%%%%%%%%%%%%%%%%%%%%%%%%%%%%%%%%%%%%%%%%%%%%%%%%%%%%%%%%%%%%%%%%%%%%%%%%%%%%%%%%%%%%%%%%%%%%%%%%%%%%%%%%%%%%%%%%%%%%%%%%%%%%%%
\chapter{The fundamental solution of the Airy
equation}

We consider the semiclassical Airy equation with a point source
\begin{equation}\label{airyd}
\e^2 {{u^{\e}}^{\prime\prime}(x)}+xu^{\e}(x)= \sigma
\delta(x-x_0),\quad x\in R \ ,
\end{equation}
where the constant $\sigma$ depends on $\e$, and it will be
defined for $u^{\e}(x)$ to have appropriate asymptotics at
$x=+\infty$.

The solution of $(\ref{airyd})$  is given by
\begin{eqnarray}
&u^{\e}(x)=i\sigma \pi\Biggl( Ai\left(-\e^{-2/3}x_0 \right)-i
Bi\left(-\e^{-2/3}x_0\right)\Biggr)\e^{-4/3}Ai\left(-\e^{-2/3}x
\right) \ ,   \  x<x_0 \\
&u^{\e}(x)=i\sigma \pi Ai\left(-\e^{-2/3}x_0 \right)\Biggl(
Ai\left(-\e^{-2/3}x \right)-i Bi\left(-\e^{-2/3}x
\right)\Biggr)\e^{-4/3} \ , \   x>x_0 \ ,
\end{eqnarray}
where $Ai(z) \ , \ Bi(z)$ are the Airy functions of the first and
second kind which are the two linearly independent solutions os
the homogeneous Airy equation $w''(z)-zw(z)=0$ (see,
e.g. \cite{Leb}).

We can now choose the constant $\sigma$ so that $u^{\e}(x) =
O_{\e} (1)$, as $\e \rightarrow 0 $, for $x \rightarrow + \infty
$. This choice leads to the value $\sigma = -ie^{-i \pi/4} \e$.

Moreover, if we consider the solution in the region $x < x_0$ and
we approximate the coefficient of the Airy function
$Ai\left(-\e^{-2/3}x \right)$, that is
$$
i\sigma \pi\Biggl( Ai\left(-\e^{-2/3}x_0 \right)-i
Bi\left(-\e^{-2/3}x_0\right)\Biggr)\e^{-4/3} \ ,
$$
for small $\e$ we get the approximation
\begin{equation}\label{innerairy}
u^{\e}(x)\approx \pi^{1/2} e^{-i \pi/2} \Bigl(x_0^{-1/4}e^{i
\frac{1}{\e} \frac{2}{3}x_{0}^{3/2}} \Bigr) \e^{-1/6}
Ai\left(-\e^{-2/3}x \right) \ , \nonumber
\end{equation}
Furthermore, invoking the asymptotics of the Airy function for
small $\e$ and fixed $x$ in $(\ref{innerairy})$, we arrive to the
expansion
\begin{eqnarray} \label{wkbfundairy}
\ue_{WKB}(x)&\approx& \frac12 e^{-i \pi/4}\Bigl( x_0^{-1/4}e^{i
\frac{1}{\e} \frac{2}{3}x_{0}^{3/2}}\Bigr) \Bigl( -i x^{-1/4}e^{i
\frac{1}{\e} \frac{2}{3}x^{3/2}} + x^{-1/4}e^{-i
\frac{1}{\e} \frac{2}{3}x^{3/2}} \Bigr) \nonumber\\
&=&\Biggl(\frac12 e^{-i \pi /4} x_0^{-1/2}\Biggr) \Biggl( -i
\frac{x_0^{1/4}}{ x^{1/4}} \  e^{i \frac{1}{\e}
\left(\frac{2}{3}x_{0}^{3/2} +\frac{2}{3}x^{3/2}\right)} +
\frac{x_0^{1/4}}{ x^{1/4}} \  e^{i \frac{1}{\e}
\left(\frac{2}{3}x_{0}^{3/2} -\frac{2}{3}x^{3/2}\right)}\Biggr)
\ .
\end{eqnarray}

Therefore, we have constructed the fundamental solution of the
semiclassical Airy equation in $0<x<x_0$,
$$
u^{\e}(x)=i\sigma \pi\Biggl( Ai\left(-\e^{-2/3}x_0 \right)-i
Bi\left(-\e^{-2/3}x_0\right)\Biggr)\e^{-4/3}Ai\left(-\e^{-2/3}x
\right) \ ,
$$
which has the desired compound asymptotics at infinity, namely
$$
u^{\e}(x)= O(1) \ , \ \ x \rightarrow +\infty \ , \ \ \e
\rightarrow 0 \ ,
$$
and it leads to a WKB approximation  which remains $O(1)$, as $\e
\rightarrow 0$ in the region $0<x<x_0$.

We close this Appendix by noting that for large $x_0 \gg \e$, we
can approximate the fundamental solution for any $x>0$ by
$(\ref{innerairy})$, a fact which is used in checking the
uniformization process of the WKB approximation to the Airy
equation near the caustics.

\renewcommand{\bibname}{References}
%\bibliography{mybib}
%\begin{thebibliography}{plain}

\end{document}